\newcommand*\patchAmsMathEnvironmentForLineno[1]{%
\expandafter\let\csname old#1\expandafter\endcsname\csname #1\endcsname
\expandafter\let\csname oldend#1\expandafter\endcsname\csname end#1\endcsname
\renewenvironment{#1}%
{\linenomath\csname old#1\endcsname}%
{\csname oldend#1\endcsname\endlinenomath}}%
\newcommand*\patchBothAmsMathEnvironmentsForLineno[1]{%
\patchAmsMathEnvironmentForLineno{#1}%
\patchAmsMathEnvironmentForLineno{#1*}}%
	\tikzstyle{frame} = [draw, -latex]
	\tikzstyle{line} = [draw]
	\tikzstyle{line2} = [draw, dashdotted]
	\tikzstyle{line3} = [draw, dashed]
	\tikzstyle{line3UD} = [draw, dashed]
	\tikzstyle{place} = [circle, draw=black, fill=white, thick, inner sep=2pt, minimum size=1mm]
	\tikzstyle{place2} = [circle, draw=black, fill=black, thick, inner sep=2pt, minimum size=1mm]
	\tikzstyle{placeRed} = [circle, draw=red, fill=red, thick, inner sep=2pt, minimum size=1mm]
	\tikzstyle{vertex} = [circle, draw=black, fill=black, thick, inner sep=2pt, minimum size=1mm]
\tikzstyle{decision} = [diamond, draw, fill=blue!20,
\tikzstyle{block1} = [rectangle, draw, text width=8em, text centered, minimum height=4em]
\tikzstyle{block2} = [rectangle, draw, text width=3em, text centered, minimum height=4em]
\tikzstyle{block3} = [rectangle, draw, text width=11em, text centered, minimum height=12em, dashed,black]
\tikzstyle{block4} = [rectangle, draw, text width=11em, text centered, minimum height=18em, dashed,black]
\tikzstyle{block5} = [rectangle, draw, text width=11em, text centered, minimum height=32em, dashed,black]
\tikzstyle{block6} = [rectangle, draw, text width=11em, text centered, minimum height=18.5em, dashed,black]
\tikzstyle{block7} = [rectangle, draw, text width=11em, text centered, minimum height=11.8em, dashed,black]
\tikzstyle{line01} = [draw, -latex']
\tikzstyle{line02} = [draw, latex'-latex']
\def\BState{\State\hskip-\ALG@thistlm}
\algnewcommand\algorithmicswitch{\textit{switch}}
\algnewcommand\algorithmiccase{\textit{case}}
\algnewcommand\algorithmicassert{\texttt{assert}}
\algnewcommand\Assert[1]{\State \algorithmicassert(#1)}%
\newtheorem{corollary}{\bfseries Corollary}
\newtheorem{definition}{\bfseries Definition }
\newtheorem{lemma}{\bfseries Lemma}
\newtheorem{remark}{\bfseries Remark}
\newtheorem{theorem}{\bfseries Theorem}
\newtheorem{condition}{\bfseries Condition}
\newtheorem{assumption}{\bfseries Assumption}
\title{Strong Sign Controllability of Diffusively-Coupled Networks$^{\ast}$}
\author{Nam-Jin Park$^{1}$, Seong-Ho Kwon$^{1}$, Yoo-Bin Bae$^{1}$, Byeong-Yeon Kim$^{2}$, Kevin L. Moore$^{3}$, and Hyo-Sung Ahn$^{1}$

\thanks{${}^{*}$This paper is written based on our previous works \cite{ahn2019topological}, \cite{ahn2019topological_arXiv}. In this version, we correct an error of the works \cite{ahn2019topological}, \cite{ahn2019topological_arXiv}, and provide more generalized and advanced results in topological controllability.}
\thanks{${}^{1}$School of Mechanical Engineering, Gwangju Institute of Science and Technology (GIST), Gwangju, Korea. E-mails: {\tt\small namjinpark@gist.ac.kr; seongho@gist.ac.kr; bub0418@gm.gist.ac.kr; hyosung@gist.ac.kr; }}
\thanks{${}^{2}$Korea Atomic Energy Research Institute, Daejeon, Korea.  E-mail: {\tt\small byeongyeon@kaeri.re.kr}} 
\thanks{${}^{3}$Department of Electrical Engineering, Colorado School of Mines, Golden, CO, USA. E-mails: {\tt\small kmoore@mines.edu}}
}
\begin{document}

\maketitle

\begin{abstract} 
This paper presents several conditions to determine strong sign controllability for diffusively-coupled undirected networks. 
The strong sign controllability is determined by the sign patterns (positive, negative, zero) of the edges.
We first provide the necessary and sufficient conditions for strong sign controllability of basic components, such as path, cycle, and tree. 
Next, we propose a merging process to extend the basic components to a larger graph based on the conditions of the strong sign controllability.
Furthermore, we develop an algorithm of polynomial complexity to find the minimum number of external input nodes while maintaining the strong sign controllability of a network.
\end{abstract}
\begin{IEEEkeywords}
Diffusively-coupled networks, sign controllability, strong sign controllability, strong structural controllability, merging process, minimum input selection
\end{IEEEkeywords}

\section{Introduction}
The network controllability is a hot topic of research, and there are many works in the literature that investigate it from different points of view.
From the viewpoint of a network structure, most works study to use only some information of the edges in a network. 
In particular, the problems of dealing with network controllability for \textit{structured networks} determined by non-zero/zero patterns of edges is called 
\textit{Structural controllability} \cite{lin1974structural} or \textit{Strong structural controllability} \cite{mayeda1979strong}. 
For the structured networks, the notion of \textit{Structural controllability} considers the controllability for \textit{almost choices of edge weights},
whereas the \textit{Strong structural controllability} considers the controllability for \textit{all choices of edge weights}.

\begin{figure}
\centering
{\includegraphics[width=3.5in,height=2.5in,clip,keepaspectratio]{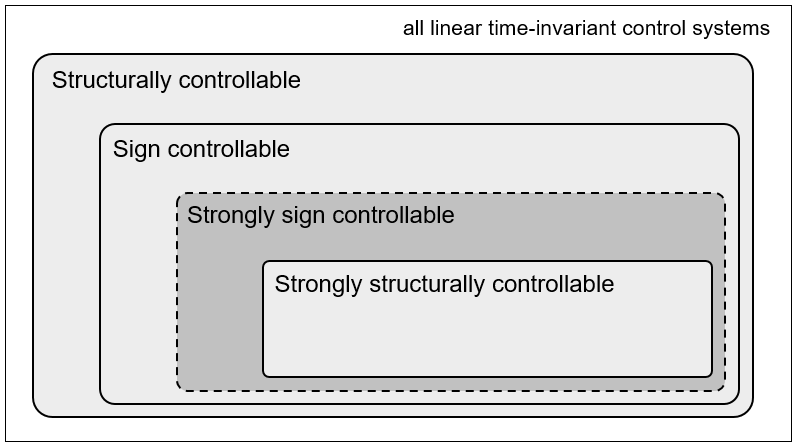}}
\caption{Relationship between controllability concepts for structured and signed networks of linear time-invariant control systems based on \cite{hartung2014sign}.}
\label{relation_controllability}
\end{figure}

The concept of \textit{Sign controllability}, which is the same concept as strong structural controllability of a signed network, was introduced in \cite{johnson1993sign} for the first time,
where the \textit{signed network} is a network corresponding to a graph with positive/negative/zero patterns of edges. 
Many studies have been conducted on the condition of sign controllability using various concepts, such as signed zero forcing set \cite{mousavi2019strong}
and independent strongly connected component (iSCC) \cite{guan2021structural}. 
In particular, in \cite{she2018controllability}, the authors considered the concept of \textit{structural balance} to provide 
the sufficient condition for sign controllability of basic components, e.g., path, tree, and cycle. 
The controllability problem for signed networks considering negative interactions between states can be applied to various fields,
such as social networks \cite{da2018topology}, brain networks \cite{menara2018structural}, and electric circuits \cite{goldstein1979controllability,feng2005structural}.

In the strong structural controllability framework, the problems of \textit{input addition} and \textit{finding minimum inputs} for control efficiency are important issues. 
These problems have been studied in various approaches, such as loopy zero forcing set \cite{jia2020unifying}, structural balance \cite{she2018controllability}, 
zero forcing number \cite{monshizadeh2014zero}, signed zero forcing set \cite{mousavi2019strong}, and constrained matching \cite{chapman2013strong,trefois2015zero}.
In particular, the authors in \cite{mousavi2017structural,trefois2015zero} proved that the problem of finding minimum inputs is NP-hard, thus, minimizing the computational complexity is important issue. 
Recently, the algorithms for finding minimum inputs with a polynomial complexity were proposed in \cite{guan2021structural,chapman2013strong}.

\subsection{Research Flow}
Based on the \textit{Theorem 3.2} in \cite{tsatsomeros1998sign}, we define a new notion of \textit{Strong sign controllability}, which includes the notion of strong structural controllability for a signed network.
Then, we explore the condition of strong sign controllability in diffusively-coupled undirected networks. 
First, for detailed graph theoretical interpretation of sign controllability, 
we define a concept of \textit{dedicated \& sharing node}, which is a similar concept to the \textit{Dilation} \cite{lin1974structural}, 
where the main difference between \textit{dedicated \& sharing node} and \textit{Dilation} will be stated at a later point.
Then, we interpret the sufficient condition for sign controllability in \textit{Theorem 3.2} of \cite{tsatsomeros1998sign} from the perspective of the dedicated node. 
From \cite{hartung2014sign}, the relationship between the aforementioned concepts of network controllability including
the strong sign controllability can be described as in Fig.~\ref{relation_controllability}.
From \cite{ahn2019topological,ahn2019topological_arXiv}, we extend the results by presenting
the condition for strong sign controllability of a state node unit, and provide the necessary and sufficient condition for strong sign controllability of basic components, such as path, tree, and cycle. 
Similarly, while the authors in \cite{she2018controllability} derived the condition of sign controllability for basic components
based on the concept of \textit{structural balance}, we explore the condition of strong sign controllability from the perspective of \textit{dedicated \& sharing nodes}. 
Then, this paper develops a merging process for sym-pactus type of graphs consisting of basic components, which is a more generalized concept than sym-cactus in \cite{menara2018structural}. 
Furthermore, by interpreting the properties of external input nodes from the perspective of dedicated nodes, 
a state node to have the same properties as the external input node is defined as a \textit{component input node}. 
Based on these two types of input nodes, we propose an algorithm of polynomial complexity to find the minimum number of external input nodes for strong sign controllability.
This allows us to efficiently design strongly sign controllable networks with the minimum number of external input nodes. 

\subsection{Contributions}
Note that this paper is an advanced version of the works on \textit{topological controllability} studied in \cite{ahn2019topological,ahn2019topological_arXiv}.
Although Section~\ref{sec_SSC} in this paper follows the similar logical flow to \cite{ahn2019topological,ahn2019topological_arXiv},
there are some significant difference.
First, there is an error in Corollary 1 of \cite{ahn2019topological,ahn2019topological_arXiv}. 
This error may make the overall results of \cite{ahn2019topological,ahn2019topological_arXiv} misinterpreted, 
thus, this paper corrects this error and makes the results flawless.
Furthermore, we will explore the problem of minimum external input nodes based on the notion of component input nodes,
which is more advanced results compared to \cite{ahn2019topological,ahn2019topological_arXiv}. 
The contributions of this paper are as follows:
\begin{itemize}
\item 
Different from the existing results on the controllability for entire signed networks, e.g., \cite{johnson1993sign,mousavi2019strong,guan2021structural,she2018controllability,tsatsomeros1998sign,she2020energy}, 
for a more detailed analysis from a graph theoretical perspective, 
we investigate the strong sign controllability of \textit{a single state node unit} in a network. 
The aforementioned analysis can determine the controllability of a specific state node in a network, and can be applied to controllable subspace.

\item 
Compared to the existing works on sign controllability using zero forcing set \cite{jia2020unifying,monshizadeh2014zero} and structural balance \cite{she2018controllability}, 
this paper investigates the strong sign controllability based on the \textit{dedicated \& sharing node} as a fundamental concept. 
Note that the concept of \textit{dilation} in \cite{lin1974structural} is for structural controllability, 
while the \textit{sharing node} presented in this paper is interpreted for strong structural controllability.

\item 
This paper interprets the property of an external input node. Based on this observation, 
we present a new notion of \textit{component input node}, which is a state node that has the same property as an external input node in terms of the existence of a \textit{dedicated node}.

\item 
Based on the concept of component input nodes,
this paper presents a merging process for a sym-pactus with a polynomial complexity to satisfy the condition of strong sign controllability with minimum external input nodes.
Note that the sym-pactus defined in this paper is a more generalized concept than sym-cactus introduced in \cite{menara2018structural}.
\end{itemize}

\subsection{Paper organizations}
The paper is organized as follows. In Section~\ref{sec_pre}, the preliminaries and the problems of the sign controllability are formulated. 
In Section~\ref{sec_SSC}, the conditions for the strong sign controllability of basic components and sym-pactus are presented.
In Section~\ref{sec_topo_minimum}, an algorithm for  minimizing the number of external input nodes for the strong sign controllability is developed. 
Topological examples and conclusions are presented in Section~\ref{sec_topo_ex} and Section~\ref{sec_conc}, respectively.

\section{Preliminaries and Problem formulations} \label{sec_pre}
Let us consider undirected networks of diffusively-coupled agents $x_i$ with direct external inputs $u_i$:
\begin{align}
\dot{x}_i = -\sum_{j \in \mathcal{N}_i} a_{ij}(x_i-x_j) + b_i u_i \label{eq_1}
\end{align}
where $ \mathcal{N}_i$ denotes the set of in-neighboring nodes of node $i$, 
$a_{ij}$ are diffusive couplings between node $i$ and $j$ satisfying $a_{ij}=a_{ji}$, and $b_i$ are external input couplings. 
The Laplacian matrix is defined as $L=\mathcal{A}-\mathcal{D}$, 
where $\mathcal{A}\in\mathbb{R}^{n \times n}$ is the adjacency matrix consisting of diffusive couplings $a_{ij}$, and $\mathcal{D}=diag(\mathcal{A}1_n)\in\mathbb{R}^{n \times n}$.
The diffusively-coupled network given by (\ref{eq_1}) can be represented as the Laplacian dynamics:
\begin{align}
\dot{x} = L x + B u \label{eq_dynamics}
\end{align}
where $x=(x_1,\ldots, x_n)^T\in\mathbb{R}^{n \times 1}$, $u=(u_1, \ldots, u_m)^T\in\mathbb{R}^{m \times 1}$, 
$L \in \Bbb{R}^{n \times n}$ is the Laplacian matrix, and $B  \in \Bbb{R}^{n \times m} $ is  the input matrix. 
Let the diffusively-coupled network matrix corresponding to (\ref{eq_dynamics}) be symbolically written as $T=[L, B]\in\mathbb{R}^{n \times (n+m)}$.
From a network point of view,
the Laplacian matrix $L$ includes the interactions of $n$ state nodes and the input matrix $B$ includes input couplings information between $m$ external input nodes and state nodes.
Hence, there are $n+m$ nodes in the network. 
The direction of interactions between state nodes is undirected, while the direction of interactions from the external input nodes to state nodes is directed. 
Also, we assume that each external input node is coupled with only one state node.

\begin{definition}\label{definition_controllability}
(Controllability) A diffusively-coupled network matrix $T= [L, B]$ given by \eqref{eq_dynamics} is said to be controllable 
if there exists an input vector $u(t)$ that satisfies $x(t^*)\to x^*$ at $t=t^*$ for any desired vector $x^*$.
\end{definition}

We define a family set of sign pattern matrices as $Q(T)$, which has the same sign as the network matrix $T$ in an elementwise fashion.
Thus, for all $T' \in Q(T)$, $T'$ has the same sign as $T$ by elementwise.
We also say that the network matrix $T$ is an $\mathcal{L}$-matrix, if the row vectors of $T'$, $\forall T' \in Q(T)$, are linearly independent.
It is clear that $\text{rank}(T) =n$ if and only if the row vectors of matrix $T$ are linearly independent. 
From the viewpoint of control system design, we assume that the input matrix $B$ is fixed since the matrix $B$ can be designed.
Thus, $Q(T)$ is defined as
\begin{align}
Q(T) := [Q({L}), B]
\end{align}
where $Q(L)$ is a set of sign pattern matrices which have the same sign as $L$. 
We say that the system given in \eqref{eq_dynamics} is controllable if and only if its controllability Gramian matrix has full row rank \cite{rugh1996linear}, 
which is an equivalent condition to \textit{Definition~\ref{definition_controllability}}.
Let us assume that a given network matrix $T= [L, B]$ is sign controllable. 
Then, all $T^{'}\in Q(T)$ are controllable and the controllability Gramian matrices of $T^{'}$ have full row rank.
The controllability Gramian matrix of the network matrix $T= [L, B]$ is given by:
\begin{align}
\mathcal{C}_{L} = [B, LB, L^2B, ..., L^{n-1}B]
\end{align}
Then, the sign controllability of a graph $\mathcal{G}(T)$ can be defined:
\begin{definition}\label{definition_TC}\cite{tsatsomeros1998sign}
(Sign controllability) A graph $\mathcal{G}(T)$ given by \eqref{eq_dynamics} is said to be sign controllable (SC)
if all undirected network matrices $T^{'}\in Q(T)$ are controllable.
\end{definition}
This paper is a kind of interpretation from a \textit{graph} point of view of \cite{tsatsomeros1998sign}. 
The network can be re-defined as a \textit{graph}:
\begin{align}
 \mathcal{G}(T) = (\mathcal{V}, \mathcal{E})
\end{align}
where $T= [L, B]$, the vertex set $\mathcal{V}$ is the union of the set of state nodes and the set of input nodes, 
i.e., $\mathcal{V}=\mathcal{V}^{S}\cup\mathcal{V}^{I}$ satisfying $\mathcal{V}^{S}\cap\mathcal{V}^{I}=\emptyset$, 
and the edge set $\mathcal{E}$ is defined by the interactions between nodes $\mathcal{V}$. 
Fig.~\ref{network_graph_concept} depicts a network and a graph.
It is necessary to differentiate \textit{network} and \textit{graph}. The \textit{network} is a connection of physical interactions between nodes, 
while the graph is a representation of the network using the concepts of vertices and edges.  For example, consider a network shown as Fig.~\ref{network_graph_concept}(a).
Let the Laplacian matrix $L$ and the input matrix $B$ corresponding to the network in Fig.~\ref{network_graph_concept}(a) be given as:
\begin{align}
L= \left[\begin{array}{ccccc}
               -3 & 2 & 0 & 1   \\
              2 & -3 & 1 & 0   \\
              0 & 1 & -3 & 2   \\
              1 & 0 & 2 & -3   \\ \end{array} \right],
B= \left[\begin{array}{ccc}
0 & 0 \\
 1 & 0 \\
 0 & 1 \\
 0 & 0 \end{array} \right]
\end{align}
Then, the interaction information of a graph is defined by the matrices $L$ and $B$. 
That is, the network matrix is described as $T=[t_{ij}]=[L, B]$.
We assume that all state nodes in $\mathcal{V}^{S}$ have self-loop, i.e., $t_{ii}\neq 0$, and there is no edge between the external input nodes. 
The graph $\mathcal{G}(T)=(\mathcal{V},\mathcal{E})$ consists of a state graph $\mathcal{G}^S$ and an interaction graph $\mathcal{G}^I$ as $\mathcal{G}(T) = \mathcal{G}^S \cup  \mathcal{G}^I$, 
where $\mathcal{G}^S$ is the subgraph induced by $\mathcal{V}^S$, and $\mathcal{G}^I$ is the graph representing the interactions between $\mathcal{V}^S$ and $\mathcal{V}^I$. 
That is, $\mathcal{G}^S = (\mathcal{V}^S, \mathcal{E}^S)$ and $\mathcal{G}^I = (\mathcal{V}^{I}, \mathcal{E}^{I})$, 
the direction of edges in $\mathcal{E}^{S}$ is undirected, while the direction of edges in $\mathcal{E}^{I}$ is directed such that 
$(i,j)\in\mathcal{E}^{I}$ with $i\in\mathcal{V}^{I}$ and $j\in\mathcal{V}^{S}$.
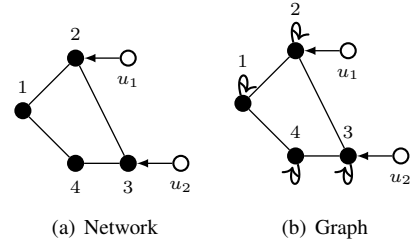
\begin{figure}
\centering
\subfigure[Network]{
\begin{tikzpicture}[scale=0.7]
\node[place, black] (node1) at (-2,0) [label=above:\scriptsize$1$] {};
\node[place, black] (node2) at (-1,1) [label=above:\scriptsize$2$] {};
\node[place, black] (node3) at (0,-1) [label=below:\scriptsize$3$] {};
\node[place, black] (node4) at (-1,-1) [label=below:\scriptsize$4$] {};

\node[place, circle] (node5) at (0,1) [label=below:\scriptsize$u_1$] {};
\node[place, circle] (node6) at (1,-1) [label=below:\scriptsize$u_2$] {};

\draw (node1) [line width=0.5pt] -- node [left] {} (node2);
\draw (node2) [line width=0.5pt] -- node [right] {} (node3);
\draw (node3) [line width=0.5pt] -- node [right] {} (node4);
\draw (node1) [line width=0.5pt] -- node [left] {} (node4);
\draw (node5) [-latex, line width=0.5pt] -- node [right] {} (node2);
\draw (node6) [-latex, line width=0.5pt] -- node [right] {} (node3);
\end{tikzpicture}
}
\subfigure[Graph]{
\begin{tikzpicture}[scale=0.7]

\node[] at (-2,0.8) {\scriptsize$1$};
\node[] at (-1,1.8) {\scriptsize$2$};

\node[place, black] (node1) at (-2,0) [label=below:] {};
\node[place, black] (node2) at (-1,1) [label=below:] {};
\node[place, black] (node3) at (0,-1) [label=above:\scriptsize$3$] {};
\node[place, black] (node4) at (-1,-1) [label=above:\scriptsize$4$] {};

\node[place, circle] (node5) at (0,1) [label=below:\scriptsize$u_1$] {};
\node[place, circle] (node6) at (1,-1) [label=below:\scriptsize$u_2$] {};

\draw (node1) [line width=0.5pt] -- node [left] {} (node2);
\draw (node2) [line width=0.5pt] -- node [right] {} (node3);
\draw (node3) [line width=0.5pt] -- node [right] {} (node4);
\draw (node1) [line width=0.5pt] -- node [left] {} (node4);
\draw (node5) [-latex, line width=0.5pt] -- node [right] {} (node2);
\draw (node6) [-latex, line width=0.5pt] -- node [right] {} (node3);

\draw (node1) edge [loop above,thick] node {$$} (node1);
\draw (node2) edge [loop above,thick] node {$$} (node2);
\draw (node3) edge [loop below,thick] node {$$} (node3);
\draw (node4) edge [loop below,thick] node {$$} (node4);
\end{tikzpicture}
}
\caption{(a) A network with four state nodes and two external input nodes. (b) Graph representation of the network.}
\label{network_graph_concept}
\end{figure}
The following assumptions are necessary for simplicity.

\begin{assumption} \label{assum_L_matrix}
The network matrix $T$ is an $\mathcal{L}$-matrix.
\end{assumption}
\begin{assumption} \label{assum_accesible}
The graph $\mathcal{G}(T)$ is accessible.
\end{assumption}


The above assumptions are necessary to guarantee the controllability for all  $T' \in Q(T)$.
In \cite{tsatsomeros1998sign}, \textit{Assumption~\ref{assum_accesible}} is required to guarantee \textit{accessibility}\footnote{ In a graph $\mathcal{G}$, accessibility means that  for any $i \in \mathcal{V}^S$, there is a path from $i$ to $j \in \mathcal{V}^I$.} of a graph $\mathcal{G}(T)$. 
If there is no path from an external input node $i\in\mathcal{V}^{I}$ to a state node $j\in\mathcal{V}^{S}$, then the state node $j$ is not controllable.
Based on the above assumptions, the following theorem is a sufficient condition for the sign controllability of a graph.
\begin{theorem} \cite{tsatsomeros1998sign}
Let us suppose that a given network matrix $T=[L,B]$ satisfies 
\textit{Assumption~\ref{assum_L_matrix}} and \textit{Assumption~\ref{assum_accesible}}.
Then, for all $\alpha \subseteq \mathcal{V}^S$ in $\mathcal{G}(T)$, 
if there exists $i\in\mathcal{N}(\alpha)$$\setminus$$\alpha$ such that there exists exactly one edge $(i,j) \in \mathcal{E}$ with $j \in \alpha$,
then the graph $\mathcal{G}(T)$ determined from $T=[L,B]$ is SC. 
\label{theorem_Tsatsomeros}
\end{theorem}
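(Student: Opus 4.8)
The plan is to prove the contrapositive realization-by-realization using the Hautus (PBH) eigenvector test. Recall that a pair $(L',B')$ is controllable if and only if it admits no eigenvector certificate of rank loss, i.e., there is no $\mu\in\mathbb{C}$ and no $w\in\mathbb{C}^{n}\setminus\{0\}$ with $w^{T}L'=\mu w^{T}$ and $w^{T}B'=0$; equivalently, $\mathcal{G}(T)$ is SC if and only if no such pair $(\mu,w)$ exists for any $T'=[L',B']\in Q(T)$. Note that the instance $\mu=0$ is exactly the requirement that every $T'\in Q(T)$ have full row rank, which is \textit{Assumption~\ref{assum_L_matrix}}; the argument below disposes of all $\mu$ uniformly, so this assumption enters only through that observation.

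First I would fix an arbitrary $T'=[L',B']\in Q(T)$ and suppose, for contradiction, that it admits such a certificate $(\mu,w)$. The bridge to the combinatorial hypothesis is to set $\alpha:=\{\,i\in\mathcal{V}^{S}:w_{i}\neq 0\,\}$, the support of $w$ among the state nodes; since $w\neq 0$, this $\alpha$ is a nonempty subset of $\mathcal{V}^{S}$. \textit{Assumption~\ref{assum_accesible}} guarantees $\mathcal{N}(\alpha)\setminus\alpha\neq\emptyset$: were it empty, $\alpha$ would be a union of connected components of $\mathcal{G}^{S}$ receiving no edge from an external input node, so no node of $\alpha$ could reach $\mathcal{V}^{I}$, contradicting accessibility. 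Hence the hypothesis of the theorem applies to this particular $\alpha$, yielding a node $i\in\mathcal{N}(\alpha)\setminus\alpha$ incident to exactly one edge $(i,j)\in\mathcal{E}$ with $j\in\alpha$.

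Next I would split on whether $i$ is a state node or an input node. If $i\in\mathcal{V}^{S}$, then $i\notin\alpha$ forces $w_{i}=0$, and I would read off the $i$-th scalar component of $w^{T}L'=\mu w^{T}$: since $\mathcal{G}^{S}$ is undirected the off-diagonal sign pattern of $L'$ is symmetric, so the only indices $k$ with $(L')_{ki}\neq 0$ are $k=i$ together with the state neighbours of $i$, and among these the only one with $w_{k}\neq 0$ is $j$; the equation therefore collapses to $(L')_{ji}\,w_{j}=\mu w_{i}=0$, which is impossible because $(i,j)\in\mathcal{E}$ gives $(L')_{ji}\neq 0$ and $j\in\alpha$ gives $w_{j}\neq 0$. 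If instead $i\in\mathcal{V}^{I}$, then since each external input node is coupled to exactly one state node, the $i$-th column of $B'$ has a single nonzero entry, necessarily in row $j$ (that coupled state node lies in $\alpha$ because $i\in\mathcal{N}(\alpha)$), and the $i$-th component of $w^{T}B'=0$ reads $(B')_{ji}\,w_{j}=0$, again impossible. Either way the supposed certificate $(\mu,w)$ cannot exist, so every $T'\in Q(T)$ is controllable and $\mathcal{G}(T)$ is SC.

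The step I expect to carry the real content is the passage from the analytic non-controllability certificate $(\mu,w)$ to the graph-theoretic object $\alpha=\mathrm{supp}(w)$, together with the observation that one entire row of the eigenvector equation reduces to a single term precisely when the corresponding node has a unique neighbour inside $\alpha$ --- this is where the ``exactly one edge'' hypothesis is consumed. Once that is set up, the case analysis on $i\in\mathcal{V}^{S}$ versus $i\in\mathcal{V}^{I}$ and the bookkeeping with the sign-symmetric pattern of $L'$ are routine. The point requiring the most care is the extreme case $\alpha=\mathcal{V}^{S}$: then $\mathcal{N}(\alpha)\setminus\alpha\subseteq\mathcal{V}^{I}$, so one is forced into the input-node case, and it is \textit{Assumption~\ref{assum_accesible}} that ensures such an input node is available at all.
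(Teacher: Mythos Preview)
The paper does not supply its own proof of this statement; it is quoted from \cite{tsatsomeros1998sign} and used as a black box throughout. So there is nothing in the paper to compare your argument against directly.

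Your PBH argument is correct. Passing from a hypothetical left-eigenvector certificate $(\mu,w)$ to $\alpha=\mathrm{supp}(w)$ and then reading off the $i$-th column of $w^{T}L'=\mu w^{T}$ (respectively the $i$-th column of $w^{T}B'=0$) is exactly the right move: the ``exactly one edge into $\alpha$'' hypothesis is precisely what makes that column equation collapse to a single nonzero term. The split between $i\in\mathcal{V}^{S}$ and $i\in\mathcal{V}^{I}$ is handled cleanly, and your use of the symmetric \emph{sign pattern} of $L'$ (rather than symmetry of $L'$ itself, which need not hold for arbitrary $L'\in Q(L)$) is the correct level of generality.

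Two small remarks. First, your invocation of \textit{Assumption~\ref{assum_accesible}} to guarantee $\mathcal{N}(\alpha)\setminus\alpha\neq\emptyset$ is not actually needed: the theorem's own hypothesis already asserts, for every nonempty $\alpha\subseteq\mathcal{V}^{S}$, the existence of some $i\in\mathcal{N}(\alpha)\setminus\alpha$, so nonemptiness comes for free. Second, since your argument disposes of all $\mu$ (including $\mu=0$) uniformly, \textit{Assumption~\ref{assum_L_matrix}} is likewise not consumed by your proof; it is present in the statement because Tsatsomeros's original formulation packages the rank condition separately. Neither of these is an error---just redundancy worth noting.
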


For a graph theoretical interpretation, we classify the node $i\in\mathcal{N}(\alpha)$$\setminus$$\alpha$ as \textit{dedicated nodes} and \textit{sharing nodes}.
For an arbitrary $\alpha$ satisfying $\alpha \subseteq \mathcal{V}^S$,
if a node $i\in\mathcal{N}(\alpha)$$\setminus$$\alpha$ has exactly one edge connected to $\alpha$,
then the node $j$ is called a dedicated node of $\alpha$.
This statement is equivalent to the cardinality condition of $|\mathcal{N}_{i}\cap\alpha|=1$.
On the other hand, a node $i\in\mathcal{N}(\alpha)$$\setminus$$\alpha$ is called a sharing node if the node $j$ has more than one edge connected to $\alpha$,
which is equivalent to $|\mathcal{N}_{i}\cap\alpha|>1$.
Then, a given graph $\mathcal{G}(T)$ is SC if the set $\mathcal{N}(\alpha)$$\setminus$$\alpha$ has at least one dedicated node
for all $\alpha\subseteq\mathcal{V}^{S}$.
Note that the concept of sharing node for sign controllability is similar with the concept of \textit{Dilation} \cite{lin1974structural} for structural controllability.

\begin{definition}\label{definition_candidate_node}
(Dedicated \& Sharing nodes) 
A node $i\in\mathcal{N}(\alpha)$$\setminus$$\alpha$ is a dedicated node of $\alpha$ if it satisfies $|\mathcal{N}_{i}\cap\alpha|\leq1$,
or is a sharing node of $\alpha$ if it satisfies $|\mathcal{N}_{i}\cap\alpha|>1$.
\end{definition}



Note that for connected grpahs, 
a dedicated node $i\in\mathcal{N}(\alpha)$$\setminus$$\alpha$ satisfies $|\mathcal{N}_{i}\cap\alpha|=0$ only when the node $i$ is an external input node.
For example, consider the graph $\mathcal{G}(T)$ depicted in Fig.~\ref{network_graph_concept}(b). 
It is shown that $\mathcal{V}^{S}=\{1,2,3,4\}$ and $\mathcal{V}^{I}=\{u_1,u_2\}$, 
and the edges from $\mathcal{V}^{I}$ to $\mathcal{V}^{S}$ are $(u_1,2), (u_2,3)$.
Let $\alpha \subseteq \mathcal{V}^S$ be $\alpha=\{1,3\}$.
Then, we obtain $\mathcal{N}(\alpha)$$\setminus$$\alpha=\{2,4,u_2\}$.
Now, we need to check whether the set $\mathcal{N}(\alpha)$$\setminus$$\alpha$ has a dedicated node or not.
For the nodes $2$ and $4$, we obtain $\mathcal{N}_{2}=\{1,2,3,u_1\}$ and $\mathcal{N}_{4}=\{1,3,4\}$, respectively.
Then, it follows that $|\mathcal{N}_{2}\cap\alpha|=|\mathcal{N}_{4}\cap\alpha|=2$, which means that the nodes $2$ and $4$ are the sharing nodes
since those have two edges $(2,1),(2,3)\in\mathcal{E}$ and $(4,1),(4,3)\in\mathcal{E}$ connected to $1,3\in\alpha$, respectively. 
For the external input node $u_2$, we obtain $\mathcal{N}_{u_2}=\emptyset$.
Then, it follows that $|\mathcal{N}_{u_2}\cap\alpha|=0$, which means that the node $u_2$ is a dedicated node of $\alpha$
since the node $u_2$ has exactly one edge $(u_2,3)\in\mathcal{E}$ connected to $3\in\alpha$.
Hence, in case of $\alpha=\{1,3\}$, there exists a dedicated node $u_2$.
In the same way, if the set $\mathcal{N}(\alpha)$$\setminus$$\alpha$ has at least one dedicated node 
for all possible cases of $\alpha$ satisfying $\alpha\subseteq\mathcal{V}^{S}$, then the graph $\mathcal{G}(T)$ is determined as a SC graph.

\begin{remark}\label{remark_meaning_input}
In a graph $\mathcal{G}(T)=(\mathcal{V},\mathcal{E})$, consider a state graph $\mathcal{V}^{S}$ with an external input node $u\in\mathcal{V}^{I}$.
Suppose that the set $\mathcal{N}(\alpha)$$\setminus$$\alpha$ contains the node $u$.
Then, since we assume that an external input node has exactly one out-neighbor state node,
the cardinality condition of the node $u$ always satisfy $|\mathcal{N}_{u}\cap\alpha|=0$,
Hence, there exists at least one dedicated node in $\mathcal{N}(\alpha)$$\setminus$$\alpha$
if $\alpha\subseteq\mathcal{V}^{S}$ contains the node $i\in\mathcal{N}(\mathcal{V}^I)$.
\end{remark}

It follows from \textit{Remark~\ref{remark_meaning_input}} that 
if state nodes in $\mathcal{N}(\mathcal{V}^{I})$ belongs to $\alpha\subseteq\mathcal{V}^{S}$,
at least one dedicated node in $\mathcal{N}(\alpha)$$\setminus$$\alpha$ is guaranteed.
Hence, in \textit{Theorem~\ref{theorem_Tsatsomeros}}, the existence of dedicated nodes for $\alpha \subseteq \mathcal{N}(\mathcal{V}^{I})$ does not need to be considered.
Consequently, the condition of $\alpha\subseteq\mathcal{V}^{S}$ in \textit{Theorem~\ref{theorem_Tsatsomeros}} 
can be reduced to $\alpha \subseteq \mathcal{V}^{S}$$\setminus$$\mathcal{N}(\mathcal{V}^{I})$.

\section{Strongly Sign Controllable Graphs} \label{sec_SSC}
In the previous section, we introduced the sufficient condition for the sign controllability based on \textit{Theorem~\ref{theorem_Tsatsomeros}}.
In this section, we define a \textit{strong sign controllability}, which is a more stronger concept than the sign controllability.
Then, the necessary and sufficient conditions for the strong sign controllability from basic components, e.g., path, tree, and cycle, to a larger graph are provided.
In particular, there exists many works to find the controllability conditions of path, tree, and cycle graphs 
using the concepts of \textit{zero forcing set} \cite{jia2020unifying,mousavi2019strong} and \textit{structural balance} \cite{she2018controllability}.
In this paper, we interpret these existing results on the controllability conditions of basic components presented in \cite{jia2020unifying,mousavi2019strong,she2018controllability}
from the perspective of the strong sign controllability based on the concepts of \textit{dedicated \& sharing node}.

\begin{definition}(Strong sign  controllability)
A graph $\mathcal{G}(T)$ given by \eqref{eq_dynamics} is said to be strongly sign controllable (SSC) if
$\mathcal{G}(T)$ satisfies the condition of \textit{Theorem~\ref{theorem_Tsatsomeros}}.
\end{definition}
Note that if a graph $\mathcal{G}(T)$ is SSC, obviously, the graph $\mathcal{G}(T)$ is SC, while the converse is not satisfied.
Based on \textit{Definition~\ref{definition_candidate_node}} and \textit{Remark~\ref{remark_meaning_input}},
the condition of \textit{Theorem~\ref{theorem_Tsatsomeros}} for the strong sign controllability can be simplified:
\begin{corollary}\label{corollary_theorem1} 
Under the same assumptions as in \textit{Theorem~\ref{theorem_Tsatsomeros}},
the graph $\mathcal{G}(T)$ is SSC
if and only if there exists at least one dedicated node in $\mathcal{N}(\alpha)$$\setminus$$\alpha$
for all $\alpha \subseteq \mathcal{V}^{S}$$\setminus$$\mathcal{N}(\mathcal{V}^{I})$.
\end{corollary}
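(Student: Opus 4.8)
The plan is to show that \textit{Corollary~\ref{corollary_theorem1}} is essentially a restatement of \textit{Theorem~\ref{theorem_Tsatsomeros}} with the redundant cases of $\alpha$ removed, so the argument splits into two directions plus the reduction of the index set. First I would unpack the definitions: by \textit{Definition~\ref{definition_candidate_node}}, a node $i\in\mathcal{N}(\alpha)\setminus\alpha$ is a dedicated node precisely when $|\mathcal{N}_i\cap\alpha|\le 1$, which for connected graphs means $i$ either has exactly one edge into $\alpha$ (a state node) or zero edges into $\alpha$ (an external input node, by \textit{Remark~\ref{remark_meaning_input}}). The phrase ``there exists exactly one edge $(i,j)\in\mathcal{E}$ with $j\in\alpha$'' in \textit{Theorem~\ref{theorem_Tsatsomeros}} is thus equivalent, for $i\in\mathcal{V}^S$, to $i$ being a dedicated node; and for $i\in\mathcal{V}^I$ the condition $|\mathcal{N}_i\cap\alpha|=0$ together with the single out-neighbor assumption makes $i$ dedicated as well. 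So ``$\mathcal{N}(\alpha)\setminus\alpha$ contains a node with exactly one edge to $\alpha$'' and ``$\mathcal{N}(\alpha)\setminus\alpha$ contains a dedicated node'' name the same property of $\alpha$.

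Next I would handle the reduction of the quantifier from $\alpha\subseteq\mathcal{V}^S$ to $\alpha\subseteq\mathcal{V}^S\setminus\mathcal{N}(\mathcal{V}^I)$. For the nontrivial containment, suppose the dedicated-node property holds for all $\alpha\subseteq\mathcal{V}^S\setminus\mathcal{N}(\mathcal{V}^I)$; I must show it holds for an arbitrary $\alpha\subseteq\mathcal{V}^S$. If $\alpha\cap\mathcal{N}(\mathcal{V}^I)=\emptyset$ there is nothing to do. Otherwise $\alpha$ contains some state node $i\in\mathcal{N}(\mathcal{V}^I)$, i.e. some $i\in\alpha$ has an in-edge from an external input node $u$; then $u\in\mathcal{N}(\alpha)\setminus\alpha$ and, as argued above, $u$ is a dedicated node of $\alpha$. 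This is exactly the content of \textit{Remark~\ref{remark_meaning_input}}, and it shows every $\alpha$ meeting $\mathcal{N}(\mathcal{V}^I)$ automatically has a dedicated node, so the two quantifier ranges give equivalent conditions. The reverse containment (if the property holds for all $\alpha\subseteq\mathcal{V}^S$ then in particular for all $\alpha$ in the smaller family) is immediate.

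Finally I would assemble the equivalence: by \textit{Definition 5} (strong sign controllability), $\mathcal{G}(T)$ is SSC iff it satisfies the condition of \textit{Theorem~\ref{theorem_Tsatsomeros}}; by the definitional equivalence above, that condition says every $\alpha\subseteq\mathcal{V}^S$ has a dedicated node in $\mathcal{N}(\alpha)\setminus\alpha$; and by the reduction step this is equivalent to the same statement restricted to $\alpha\subseteq\mathcal{V}^S\setminus\mathcal{N}(\mathcal{V}^I)$, which is the claim. The only genuine subtlety — the main obstacle — is being careful about the $|\mathcal{N}_i\cap\alpha|=0$ versus $|\mathcal{N}_i\cap\alpha|=1$ distinction for dedicated nodes: the theorem literally asks for \emph{exactly one} edge, while \textit{Definition~\ref{definition_candidate_node}} allows \emph{at most one}, and reconciling these requires invoking connectivity and \textit{Assumption~\ref{assum_accesible}} to argue that a state node in $\mathcal{N}(\alpha)\setminus\alpha$ always has at least one edge to $\alpha$ (so ``$\le 1$'' collapses to ``$=1$'' for state nodes), while the only way to get $0$ is via an external input node, which the reduction has already accounted for. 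Everything else is bookkeeping.
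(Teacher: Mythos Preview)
Your proposal is correct and follows essentially the same route as the paper: the paper does not give a separate proof of Corollary~\ref{corollary_theorem1} but instead derives it from Definition~\ref{definition_candidate_node} and Remark~\ref{remark_meaning_input}, observing just before the corollary that the reduction from $\alpha\subseteq\mathcal{V}^S$ to $\alpha\subseteq\mathcal{V}^S\setminus\mathcal{N}(\mathcal{V}^I)$ is automatic because any $\alpha$ meeting $\mathcal{N}(\mathcal{V}^I)$ already has an external input node as a dedicated node. Your write-up simply makes this reasoning explicit, including the careful reconciliation of the ``exactly one edge'' wording in Theorem~\ref{theorem_Tsatsomeros} with the ``$\le 1$'' in Definition~\ref{definition_candidate_node}, which the paper leaves implicit.
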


The above \textit{Corollary~\ref{corollary_theorem1}} provides the necessary and sufficient condition for strong sign controllability from the perspective of dediacted nodes.
For further analysis, we present several definitions inspired by \cite{menara2018structural}:

\begin{definition}\label{definition_sym_path}
(Sym-path) A sym-path is a connected undirected graph with $n\geq1$ nodes, edge set  $\{ (i,j):|i-j|=1\}$, and symmetric weights $a_{ij}=a_{ji}$. 
The Laplacian matrix $L=[l_{ij}]\in\mathbb{R}^{n \times n}$ of a sym-path is defined as:
\begin{align}
     l_{ij} := \left\{
     \begin{matrix}
       a_{ij}\neq0 & \text{if}\,\, |i-j|=1 \\
	  -\sum_{j\in\mathcal{N}_{i}} a_{ij}& \text{if}\,\, i=j\\ 
      a_{ij}=0 & \text{otherwise} \\
     \end{matrix}
     \right.
\end{align}
\end{definition}

\begin{definition}\label{definition_sym_cycle}
(Sym-cycle) A sym-cycle is a connected undirected graph with $n\geq3$ nodes, edge set  $\{ (i,j):|i-j|=1\}\cup\{(1,n),(n,1)\}$, 
and symmetric weights $a_{ij}=a_{ji}$. The Laplacian matrix $L=[l_{ij}]\in\mathbb{R}^{n \times n}$ of a sym-cycle is defined as: 
\begin{align}
     l_{ij} := \left\{
     \begin{matrix}
       a_{ij}\neq0 & \text{if}\,\, |i-j|=1, \,\, \text{or}\\&(i,j)\in\{(1,n),(n,1)\}\\ 
	  -\sum_{j\in\mathcal{N}_{i}} a_{ij}& \text{if}\,\, i=j\\ 
       a_{ij}=0 & \text{otherwise} \\
     \end{matrix}
     \right.
\end{align}
\end{definition}


\begin{lemma}\label{lemma_path}
Consider a sym-path state graph $\mathcal{G}^{S}=(\mathcal{V}^{S},\mathcal{E}^{S})$
and an interaction graph $\mathcal{G}^{I}=(\mathcal{V}^{I},\mathcal{E}^{I})$ satisfying $|\mathcal{V}^{I}|=1$. 
The graph $\mathcal{G}(T)=\mathcal{G}^{S}\cup\mathcal{G}^{I}$ is SSC
if and only if an external input node is connected to the terminal state node\footnote{In a graph $\mathcal{G}(T)=(\mathcal{V},\mathcal{E})$, we say that a state node in $\mathcal{V}^{S}$ is a terminal state node if its \textit{out-degree} is 1, where the \textit{out-degree} means the number of out-neighbor nodes.}.
\end{lemma}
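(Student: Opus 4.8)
The plan is to prove both directions using the characterization in Corollary~\ref{corollary_theorem1}: $\mathcal{G}(T)$ is SSC if and only if every $\alpha \subseteq \mathcal{V}^S \setminus \mathcal{N}(\mathcal{V}^I)$ has a dedicated node in $\mathcal{N}(\alpha)\setminus\alpha$. Label the path nodes $1,\dots,n$ in path order, so $\mathcal{E}^S=\{(i,i+1):1\le i\le n-1\}$, and let $u$ be the single external input node attached to some state node $k$, so $\mathcal{N}(\mathcal{V}^I)=\{k\}$.

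For the ``if'' direction, assume $k$ is a terminal node; by relabeling we may take $k=1$, so the candidate sets are $\alpha\subseteq\{2,\dots,n\}$. Given such a nonempty $\alpha$, I would let $p=\min\alpha$. Then the node $p-1\notin\alpha$ (since $p-1<p\le$ everything in $\alpha$, and $p-1\ge 1$ so it exists), and $p-1\in\mathcal{N}(\alpha)$ because $(p-1,p)\in\mathcal{E}^S$. Its neighborhood is $\mathcal{N}_{p-1}\subseteq\{p-2,p-1,p\}$ (plus possibly $u$ if $p-1=1$, but $u\notin\alpha$), so $\mathcal{N}_{p-1}\cap\alpha=\{p\}$, giving $|\mathcal{N}_{p-1}\cap\alpha|=1$. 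Hence $p-1$ is a dedicated node of $\alpha$, and by Corollary~\ref{corollary_theorem1} the graph is SSC. (The case $\alpha=\emptyset$ is vacuous since $\mathcal{N}(\emptyset)\setminus\emptyset=\emptyset$, or is excluded by convention.)

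For the ``only if'' direction, I would argue by contraposition: suppose the input node is attached to an interior node $k$ with $2\le k\le n-1$. I exhibit a bad $\alpha$. Take $\alpha=\{k-1,k+1\}$, which is nonempty, contained in $\mathcal{V}^S$, and avoids $k=\mathcal{N}(\mathcal{V}^I)$ (note $k-1\ne k$ and $k+1\ne k$), so it is an eligible test set. Then $\mathcal{N}(\alpha)\setminus\alpha=\{k\}$ together with possibly $k-2$ and $k+2$. Node $k$ has $\mathcal{N}_k\cap\alpha=\{k-1,k+1\}$, so $|\mathcal{N}_k\cap\alpha|=2$ and $k$ is a sharing node. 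Node $k-2$ (if it exists) has $\mathcal{N}_{k-2}\cap\alpha=\{k-1\}$ — wait, that would be dedicated, so this $\alpha$ does not work when $k-2$ or $k+2$ exists. The correct choice is to make $\alpha$ an interval straddling $k$: take $\alpha=\{1,2,\dots,n\}\setminus\{k\}$ — no, that disconnects. The right gadget is $\alpha=\{1,\dots,k-1\}\cup\{k+1,\dots,n\}$, i.e. everything except $k$; then $\mathcal{N}(\alpha)\setminus\alpha=\{k\}$ (since the path's only non-$\alpha$ state node is $k$, and $u\in\mathcal{N}(\mathcal{V}^I)$ means $u$'s only out-neighbor $k$ is not in $\alpha$, so $u\notin\mathcal{N}(\alpha)$), and $|\mathcal{N}_k\cap\alpha|=2$ because $k$ is adjacent to both $k-1$ and $k+1$, both in $\alpha$. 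Thus $\mathcal{N}(\alpha)\setminus\alpha=\{k\}$ contains only the sharing node $k$ and no dedicated node, so by Corollary~\ref{corollary_theorem1} the graph is not SSC. Taking the contrapositive gives the claim.

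The main obstacle is the ``only if'' direction: one must choose the counterexample set $\alpha$ carefully so that its outer boundary in $\mathcal{G}(T)$ is exactly the single interior node $k$ with two edges into $\alpha$, with no stray dedicated neighbors (including the input node $u$, which must be kept off the boundary) sneaking in — as the false start above shows, small local gadgets like $\{k-1,k+1\}$ fail, and the clean choice is $\alpha=\mathcal{V}^S\setminus\{k\}$. Once the right $\alpha$ is identified, the verification is a short neighborhood computation using the sym-path structure and the assumption that $u$ has a unique out-neighbor.
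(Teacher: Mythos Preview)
Your proposal is correct and takes essentially the same approach as the paper: for the ``only if'' direction you (after a false start) land on exactly the paper's counterexample $\alpha=\mathcal{V}^{S}\setminus\{k\}=\mathcal{V}^{S}\setminus\mathcal{N}(\mathcal{V}^{I})$, and for the ``if'' direction your explicit construction of the dedicated node $p-1$ with $p=\min\alpha$ is a clean fleshing-out of what the paper merely asserts without detail.
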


\begin{proof}
Let a state graph $\mathcal{G}^{S}=(\mathcal{V}^{S},\mathcal{E}^{S})$ be a sym-path,
and suppose that there exists a directed edge $(k,l)\in\mathcal{E}$ with $k\in\mathcal{V}^{I}$ and $l\in\mathcal{V}^{S}$. 
For \textit{if} condition,
let us consider that the state node $l\in\mathcal{V}^{S}$ is a terminal state node. 
In this case, there exists at least one dedicated node $i\in\mathcal{N}(\alpha)$$\setminus$$\alpha$ satisfying $|\mathcal{N}_{i}\cap\alpha|=1$
for all $\alpha\subseteq\mathcal{V}^{S}$$\setminus$$\mathcal{N}(\mathcal{V}^{I})$.
Therefore, the graph $\mathcal{G}(T)$ is SSC.

For \textit{only if} condition,
let us consider that the state node $l\in\mathcal{V}^{S}$ is not a terminal state node. 
In this case, when choosing $\alpha=\mathcal{V}^{S}$$\setminus$$\mathcal{N}(\mathcal{V}^{I})$,
we obtain $\mathcal{N}(\alpha)$$\setminus$$\alpha=\{l\}$.
However, since the node $l$ has two edges $(l,i_1),(l,i_2)\in\mathcal{E}^{S}$ connected to $i_1,i_2\in\alpha$, 
the node $l$ is a sharing node satisfying $|\mathcal{N}_{l}\cap\alpha|=2$.
Therefore, the graph $\mathcal{G}(T)$ is not SSC.
\end{proof}

For example, Fig.~\ref{network_ex_path}(a) shows a sym-path state graph $\mathcal{G}^{S}$ 
with an external input node $u_1$ connected to a terminal state node $4\in\mathcal{V}^{S}$.
It follows from \textit{Lemma~\ref{lemma_path}} 
that there exists at least one dedicated node $i\in\mathcal{N}(\alpha)$$\setminus$$\alpha$ satisfying $|\mathcal{N}_{i}\cap\alpha|=1$ 
for all $\alpha\subseteq\mathcal{V}^{S}$$\setminus$$\mathcal{N}(\mathcal{V}^{I})$.
Hence, the graph $\mathcal{G}(T)$ depicted in Fig.~\ref{network_ex_path}(a) is SSC.
However, the graph depicted in Fig.~\ref{network_ex_path}(b) shows a sym-path state graph $\mathcal{G}^{S}$ 
with an external input node $u_1$ connected to the state node $3\in\mathcal{V}^{S}$, which is not a terminal state node. 
In this case, when choosing $\alpha=\mathcal{V}^{S}$$\setminus$$\mathcal{N}(\mathcal{V}^{I})=\{ 1,2,4\}$, 
we obtain $\mathcal{N}(\alpha)$$\setminus$$\alpha=\{3\}$.
However, since the node $3$ has two edges $(3,2)$ and $(3,4)$ with $2,4\in\alpha$,
the node $3$ is a sharing node satisfying $|\mathcal{N}_{3}\cap\alpha|=2$.
Hence, the graph $\mathcal{G}(T)$ depicted in Fig.~\ref{network_ex_path}(b) is not SSC.
Note that if $\mathcal{G}^{S}$ is a sym-path, a properly located external input node is sufficient for the graph $\mathcal{G}(T)$ to be SSC, 
i.e., the minimum number of external input node for the strong sign controllability of $\mathcal{G}(T)$ is 1.

\begin{figure}[]
\centering
\subfigure[]{
\begin{tikzpicture}[scale=0.8]
\node[place, black] (node1) at (-2,0) [label=below:\scriptsize$1$] {};
\node[place, black] (node2) at (-1,1) [label=above:\scriptsize$2$] {};
\node[place, black] (node3) at (0,0.5) [label=below:\scriptsize$3$] {};
\node[place, black] (node4) at (1,0) [label=below:\scriptsize$4$] {};
\node[place, circle] (node5) at (2,1) [label=above:\scriptsize$u_1$] {};

\draw (node1) [line width=0.5pt] -- node [left] {} (node2);
\draw (node2) [line width=0.5pt] -- node [right] {} (node3);
\draw (node3) [line width=0.5pt] -- node [left] {} (node4);
\draw (node5) [-latex, line width=0.5pt] -- node [right] {} (node4);
\end{tikzpicture}
}
\subfigure[]{
\begin{tikzpicture}[scale=0.8]
\node[place, black] (node1) at (-2,0) [label=below:\scriptsize$1$] {};
\node[place, black] (node2) at (-1,1) [label=above:\scriptsize$2$] {};
\node[place, black] (node3) at (0,0.5) [label=below:\scriptsize$3$] {};
\node[place, black] (node4) at (1,0) [label=below:\scriptsize$4$] {};
\node[place, circle] (node5) at (1,1.5) [label=above:\scriptsize$u_1$] {};

\draw (node1) [line width=0.5pt] -- node [left] {} (node2);
\draw (node2) [line width=0.5pt] -- node [right] {} (node3);
\draw (node3) [line width=0.5pt] -- node [left] {} (node4);
\draw (node5) [-latex, line width=0.5pt] -- node [right] {} (node3);
\end{tikzpicture}
}
\caption{Sym-path state graphs with one external input node.}
\label{network_ex_path}
\centering
\subfigure[]{
\begin{tikzpicture}[scale=0.8]
\node[place, black] (node1) at (-2,0) [label=below:\scriptsize$1$] {};
\node[place, black] (node2) at (-1,1) [label=below:\scriptsize$2$] {};
\node[place, black] (node3) at (0,2) [label=below:\scriptsize$3$] {};
\node[place, black] (node4) at (0,0.5) [label=below:\scriptsize$4$] {};
\node[place, black] (node5) at (1,0) [label=below:\scriptsize$5$] {};

\node[place, circle] (node6) at (0.7,2.7) [label=above:\scriptsize$u_1$] {};
\node[place, circle] (node7) at (1.7,0.7) [label=above:\scriptsize$u_2$] {};

\draw (node1) [line width=0.5pt] -- node [left] {} (node2);
\draw (node2) [line width=0.5pt] -- node [right] {} (node3);
\draw (node2) [line width=0.5pt] -- node [left] {} (node5);
\draw (node6) [-latex, line width=0.5pt] -- node [right] {} (node3);
\draw (node7) [-latex, line width=0.5pt] -- node [right] {} (node5);
\end{tikzpicture}
}
\subfigure[]{
\begin{tikzpicture}[scale=0.8]
\node[place, black] (node1) at (-2,0) [label=below:\scriptsize$1$] {};
\node[place, black] (node2) at (-1,1) [label=below:\scriptsize$2$] {};
\node[place, black] (node3) at (0,2) [label=below:\scriptsize$3$] {};
\node[place, black] (node4) at (0,0.5) [label=below:\scriptsize$4$] {};
\node[place, black] (node5) at (1,0) [label=below:\scriptsize$5$] {};

\node[place, circle] (node6) at (0.7,1.2) [label=above:\scriptsize$u_1$] {};
\node[place, circle] (node7) at (1.7,0.7) [label=above:\scriptsize$u_2$] {};

\draw (node1) [line width=0.5pt] -- node [left] {} (node2);
\draw (node2) [line width=0.5pt] -- node [right] {} (node3);
\draw (node2) [line width=0.5pt] -- node [left] {} (node5);
\draw (node6) [-latex, line width=0.5pt] -- node [right] {} (node4);
\draw (node7) [-latex, line width=0.5pt] -- node [right] {} (node5);
\end{tikzpicture}
}

\caption{Tree state graphs with two external input nodes.}
\label{network_ex_tree}
\centering
\subfigure[]{
\begin{tikzpicture}[scale=0.8]
\node[place, black] (node1) at (-2,0) [label=above:\scriptsize$1$] {};
\node[place, black] (node2) at (-1,1) [label=above:\scriptsize$2$] {};
\node[place, black] (node3) at (0,-1) [label=below:\scriptsize$3$] {};
\node[place, black] (node4) at (-1,-1) [label=below:\scriptsize$4$] {};

\node[place, circle] (node5) at (0,1) [label=below:\scriptsize$u_1$] {};
\node[place, circle] (node6) at (1,-1) [label=below:\scriptsize$u_2$] {};

\draw (node1) [line width=0.5pt] -- node [left] {} (node2);
\draw (node2) [line width=0.5pt] -- node [right] {} (node3);
\draw (node3) [line width=0.5pt] -- node [right] {} (node4);
\draw (node1) [line width=0.5pt] -- node [left] {} (node4);
\draw (node5) [-latex, line width=0.5pt] -- node [right] {} (node2);
\draw (node6) [-latex, line width=0.5pt] -- node [right] {} (node3);
\end{tikzpicture}
}
\subfigure[]{
\begin{tikzpicture}[scale=0.8]
\node[place, black] (node1) at (-2,0) [label=below:\scriptsize$1$] {};
\node[place, black] (node2) at (-1,1) [label=above:\scriptsize$2$] {};
\node[place, black] (node3) at (0,-1) [label=below:\scriptsize$3$] {};
\node[place, black] (node4) at (-1,-1) [label=below:\scriptsize$4$] {};

\node[place, circle] (node5) at (-3,0) [label=below:\scriptsize$u_1$] {};
\node[place, circle] (node6) at (1,0) [label=left:\scriptsize$u_2$] {};

\draw (node1) [line width=0.5pt] -- node [left] {} (node2);
\draw (node2) [line width=0.5pt] -- node [right] {} (node3);
\draw (node3) [line width=0.5pt] -- node [right] {} (node4);
\draw (node1) [line width=0.5pt] -- node [left] {} (node4);
\draw (node5) [-latex, line width=0.5pt] -- node [right] {} (node1);
\draw (node6) [-latex, line width=0.5pt] -- node [right] {} (node3);

\end{tikzpicture}
}
\caption{Sym-cycle state graphs with two external input nodes.}
\label{network_ex_cycle}

\end{figure}

For further analysis of a larger graph, 
we define a bridge graph $\mathcal{G}_{ij}^{S}$, 
which connects two disjoint state graphs $\mathcal{G}_i$ and $\mathcal{G}_j$.

\begin{definition}\label{definition_Bridge_graph} 
(Bridge graph) 
A bridge graph is a state graph defined as $\mathcal{G}_{ij}^{S}=(\mathcal{V}_{ij}^{S},\mathcal{E}_{ij}^{S}$),
which connects two disjoint state graphs $\mathcal{G}^{S}_{i}$ and $\mathcal{G}^{S}_{j}$ satisfying $|i-j|=1$.
If nodes $k\in\mathcal{V}^{S}_{i}$ and $l\in\mathcal{V}^{S}_{j}$ are connected by an edge $(k,l)$, then $(k,l)\in\mathcal{E}_{ij}^{S}$ and $k,l\in\mathcal{V}_{ij}^{S}$.
We assume that a state node $k\in\mathcal{V}^{S}_{i}$ is connected to a state node $l\in\mathcal{V}^{S}_{j}$ by one-to-one (injective).
Hence, $|\mathcal{E}_{ij}^{S}|$ satisfies the following boundary condition.
\begin{align}\label{bridge_edge_boundary}
1\le |\mathcal{E}_{ij}^{S}|\le min(|\mathcal{V}_{i}^{S}|,|\mathcal{V}_{j}^{S}|), |i-j|=1
\end{align}
\end{definition} 

We say that the graph $\mathcal{G}(T)=(\mathcal{V},\mathcal{E})$ is induced by 
$m$-disjoint components $\mathcal{G}_{i}=(\mathcal{V}_{i},\mathcal{E}_{i})$ with $\mathcal{E}_{ij}$
if $\mathcal{V}={\bigcup}^{m}_{i=1}\mathcal{V}_{i}$ and $\mathcal{E}={\bigcup}^{m}_{i=1}(\mathcal{E}_{i}\cup\mathcal{E}_{ij})$ 
satisfying $\mathcal{V}_{i}\cap\mathcal{V}_{j}=\emptyset$ and $\mathcal{E}_{i}\cap\mathcal{E}_{ij}=\emptyset$ for $i,j \in \{1,...,m\}$ and $|i-j|=1$.
Then, we call the graphs $\mathcal{G}_{i},i\in\{1,...,m\}$ disjoint components of $\mathcal{G}(T)$.

\begin{lemma}\label{lemma_tree}
Consider a tree state graph $\mathcal{G}^{S}=(\mathcal{V}^{S},\mathcal{E}^{S})$
and an interaction graph $\mathcal{G}^{I}=(\mathcal{V}^{I},\mathcal{E}^{I})$ satisfying $|\mathcal{V}^{I}|=m\ge2$. 
Then, the graph $\mathcal{G}(T)=\mathcal{G}^{S}\cup\mathcal{G}^{I}$ can be induced by $m$-disjoint components $\mathcal{G}_{i}$ with $\mathcal{E}_{ij}^{S}$ 
satisfying $|\mathcal{E}_{ij}^{S}|=1$ and $|\mathcal{V}^{I}_{i}|=1$ for $i,j \in \{1,...,m\}$ and $|i-j|=1$.
The graph $\mathcal{G}(T)=\mathcal{G}^{S}\cup\mathcal{G}^{I}$ is SSC
if and only if each disjoint component $\mathcal{G}_{i}, i\in\{1,...,m\}$ satisfies \textit{Lemma~\ref{lemma_path}}.
\end{lemma}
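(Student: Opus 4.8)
The plan is to verify the combinatorial criterion of \textit{Corollary~\ref{corollary_theorem1}}: the graph $\mathcal{G}(T)$ is SSC if and only if every nonempty $\alpha\subseteq\mathcal{V}^{S}\setminus\mathcal{N}(\mathcal{V}^{I})$ has a dedicated node in $\mathcal{N}(\alpha)\setminus\alpha$. First I would justify the stated decomposition: since $\mathcal{G}^{S}$ is a tree carrying exactly $m$ input-attached state nodes, one can delete $m-1$ edges so that $\mathcal{G}^{S}$ breaks into $m$ subtrees, each containing exactly one node of $\mathcal{N}(\mathcal{V}^{I})$; taking the deleted edges as the bridge sets (so $|\mathcal{E}^{S}_{ij}|=1$) and attaching the respective input gives the components $\mathcal{G}_{i}$ with $|\mathcal{V}^{I}_{i}|=1$, arranged in the chain $\mathcal{G}_{1},\ldots,\mathcal{G}_{m}$. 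I would also record that $\mathcal{N}(\mathcal{V}^{I})\cap\mathcal{V}^{S}_{i}=\mathcal{N}(\mathcal{V}^{I}_{i})$, so that restricting an $\alpha$ with $\alpha\subseteq\mathcal{V}^{S}\setminus\mathcal{N}(\mathcal{V}^{I})$ to any component yields a set admissible for that component, which is what lets me feed \textit{Lemma~\ref{lemma_path}} back in componentwise.

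For the \emph{if} direction, assume every $\mathcal{G}_{i}$ satisfies \textit{Lemma~\ref{lemma_path}} (so each is a sym-path with its input at a terminal node, hence is SSC), and induct on $m$. The base case $m=1$ is \textit{Lemma~\ref{lemma_path}}. For $m\geq2$, let $\mathcal{G}_{m}$ be an end component of the chain, joined to the rest by a single bridge edge $(b,b')$ with $b\in\mathcal{V}^{S}_{m}$, and let $\mathcal{G}'$ be $\mathcal{G}(T)$ with $\mathcal{G}_{m}$ and its input removed; then $\mathcal{G}'$ is SSC by the induction hypothesis. Given a nonempty $\alpha\subseteq\mathcal{V}^{S}\setminus\mathcal{N}(\mathcal{V}^{I})$, put $\alpha_{m}=\alpha\cap\mathcal{V}^{S}_{m}$ and $\tilde\alpha=\alpha\setminus\alpha_{m}$. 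If $b\in\alpha_{m}$, then $\alpha_{m}\neq\emptyset$, and \textit{Lemma~\ref{lemma_path}} applied to $\mathcal{G}_{m}$ yields a dedicated node of $\alpha_{m}$ in $\mathcal{G}_{m}$; it lies outside $\alpha_{m}$, hence is not $b$, hence carries no bridge edge (as $b$ is the unique bridge endpoint of $\mathcal{G}_{m}$), so it is still dedicated for $\alpha$ in $\mathcal{G}(T)$. If $b\notin\alpha_{m}$, then $b\notin\alpha$: when $\tilde\alpha\neq\emptyset$, \textit{Corollary~\ref{corollary_theorem1}} for the SSC graph $\mathcal{G}'$ gives a dedicated node $d'$ of $\tilde\alpha$ in $\mathcal{G}'$, and reattaching $\mathcal{G}_{m}$ gives $d'$ at most the one extra neighbour $b\notin\alpha$, so $d'$ (which also avoids $\alpha$) is still dedicated for $\alpha$; when $\tilde\alpha=\emptyset$, every bridge edge of $\mathcal{G}_{m}$ leads into $\mathcal{G}'$ and hence outside $\alpha$, so the dedicated node of $\alpha_{m}=\alpha$ supplied by \textit{Lemma~\ref{lemma_path}} works in $\mathcal{G}(T)$. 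In every case $\mathcal{N}(\alpha)\setminus\alpha$ has a dedicated node, so $\mathcal{G}(T)$ is SSC.

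For the \emph{only if} direction I would argue by contraposition: suppose some $\mathcal{G}_{i_{0}}$ fails \textit{Lemma~\ref{lemma_path}}, i.e.\ its input is not at a terminal state node, or it is not a sym-path at all. In either case one produces a set $\alpha_{i_{0}}\subseteq\mathcal{V}^{S}_{i_{0}}\setminus\mathcal{N}(\mathcal{V}^{I}_{i_{0}})$ whose boundary inside $\mathcal{G}_{i_{0}}$ consists only of sharing nodes — for a non-terminal input on a sym-path this is the very set constructed in the proof of \textit{Lemma~\ref{lemma_path}}, and for a non-path component it comes from a state node of degree at least three. I would then enlarge it to $\alpha=\alpha_{i_{0}}\cup\bigcup_{j\neq i_{0}}\beta_{j}$, choosing in each other component a set $\beta_{j}\subseteq\mathcal{V}^{S}_{j}\setminus\mathcal{N}(\mathcal{V}^{I}_{j})$ that (i) has no dedicated node inside $\mathcal{G}_{j}$ and (ii) contains the endpoint in $\mathcal{V}^{S}_{j}$ of the bridge edge pointing toward $\mathcal{G}_{i_{0}}$, so that that endpoint gets a second neighbour in $\alpha$ and is not promoted to a dedicated node. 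Then $\mathcal{N}(\alpha)\setminus\alpha$ contains only input nodes, sharing nodes, and bridge endpoints with two or more neighbours in $\alpha$, so by \textit{Corollary~\ref{corollary_theorem1}} the graph $\mathcal{G}(T)$ is not SSC.

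The step I expect to be the main obstacle is exactly this lifting in the \emph{only if} part. A bridge endpoint $k$ that ends up outside $\alpha$ sits on the boundary of $\alpha$ and may have precisely one neighbour in $\alpha$ (one across its bridge, none in its own component, or vice versa), which would make it a dedicated node and destroy the construction; excluding this forces a careful, bridge-by-bridge choice of the sets $\beta_{j}$, and in particular the bridge endpoint that $\mathcal{G}_{i_{0}}$ contributes must be taken interior to $\alpha_{i_{0}}$, not on its boundary. By contrast, the analogous concern in the \emph{if} direction is mild, because peeling an end component of the chain leaves only a single bridge edge to control.
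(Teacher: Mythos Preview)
Your \emph{if} direction is correct and is the paper's argument recast as an induction on $m$: both establish that joining two SSC pieces by a single bridge edge preserves the dedicated-node property, the paper in one pass (only the two bridge endpoints acquire new neighbours, so a dedicated node found inside either component survives), you by peeling off an end component of the chain and controlling the unique bridge endpoint explicitly. The content is the same.

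The \emph{only if} direction has a real gap, and it is more than the bridge-endpoint bookkeeping you anticipate. Your lift demands, in each $\mathcal{G}_{j}$ with $j\neq i_{0}$, a nonempty $\beta_{j}\subseteq\mathcal{V}^{S}_{j}\setminus\mathcal{N}(\mathcal{V}^{I}_{j})$ that (i) admits no dedicated node inside $\mathcal{G}_{j}$ yet (ii) contains the bridge endpoint toward $\mathcal{G}_{i_{0}}$. But nothing prevents $\mathcal{G}_{j}$ from itself satisfying \textit{Lemma~\ref{lemma_path}}; then $\mathcal{G}_{j}$ is SSC and \emph{every} nonempty admissible $\beta_{j}$ has a dedicated node inside $\mathcal{G}_{j}$, so (i) and (ii) are jointly impossible. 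A concrete witness: take the path on state nodes $1,2,3,4,5$ with edges $(i,i{+}1)$ and inputs at $2$ and $5$, decomposed via the bridge $(3,4)$ into $\mathcal{G}_{1}$ on $\{1,2,3\}$ (input at the interior node $2$, so \textit{Lemma~\ref{lemma_path}} fails) and $\mathcal{G}_{2}$ on $\{4,5\}$ (input terminal). Your $\beta_{2}$ would have to be $\{4\}$, but then node $5$ is dedicated for it inside $\mathcal{G}_{2}$; and one checks directly that the full graph is SSC, so for this particular decomposition no bad $\alpha$ exists at all. The paper does not attempt your componentwise lift; it instead passes to a single global set $\alpha=\mathcal{V}^{S}\setminus\mathcal{N}(\mathcal{V}^{I})\setminus\{i\}$ anchored at a state node $i$ of degree three in the tree, a rather different manoeuvre from yours.
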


\begin{proof}
For \textit{if} condition,
let us assume that each disjoint component $\mathcal{G}_{i},i\in\{1,...,m\}$ satisfies \textit{Lemma~\ref{lemma_path}}.
Then, since each $\mathcal{G}_{i},i\in\{1,...,m\}$ is SSC, 
the set $\mathcal{N}(\alpha_{i})$$\setminus$$\alpha_{i}$ contains at least one dedicated node
for all $\alpha_{i}\subseteq\mathcal{V}_{i}^{S}$$\setminus$$\mathcal{N}(\mathcal{V}_{i}^{I})$.
Now, consider the merged graph $\mathcal{G}(T)$ with the bridge edge $(k,l)\in\mathcal{E}_{ij}^{S}$, where $k\in\mathcal{V}_{i}^{S}$ and $l\in\mathcal{V}_{j}^{S}$ for $i,j \in \{1,...,m\}$ satisfying $|i-j|=1$.
For the merged graph $\mathcal{G}(T)$ to be SSC, we only need to consider the existence of dedicated nodes in $\mathcal{N}(\alpha)$$\setminus$$\alpha$
when $\alpha\subseteq\mathcal{V}^{S}$$\setminus$$\mathcal{N}(\mathcal{V}^{I})$ contains at least one bridge node. 
Because after adding the bridge edge, each set of in-neighboring nodes of a node in $\mathcal{V}^{S}$ 
remains unchanged except for $\mathcal{N}_{k}$ and $\mathcal{N}_{l}$ in $\mathcal{G}(T)$.
However, since each disjoint component $\mathcal{G}_{i},i\in\{1,...,m\}$ is SSC, even if the bridge node $k\in\mathcal{V}_{i}^{S}$ or $l\in\mathcal{V}_{j}^{S}$ belongs to $\alpha\subseteq\mathcal{V}^{S}$$\setminus$$\mathcal{N}(\mathcal{V}^{I})$, the set $\mathcal{N}(\alpha)$$\setminus$$\alpha$ still has at least one dedicated node in $\mathcal{V}_{i}$ or in $\mathcal{V}_{j}$.
It follows that the existence of at least one dedicated node in $\mathcal{N}(\alpha)$$\setminus$$\alpha$ is independent of 
the bridge edge $(k,l)\in\mathcal{E}_{ij}^{S}$ satisfying $|\mathcal{E}_{ij}^{S}|=1$ for all $\alpha\subseteq\mathcal{V}^{S}$$\setminus$$\mathcal{N}(\mathcal{V}^{I})$.

For \textit{only if} condition, in the merged graph $\mathcal{G}(T)$,
let us suppose that a disjoint component $\mathcal{G}_{q}=\mathcal{G}_{q}^{S}\cup\mathcal{G}_{q}^{I},q\in\{1,...,m\}$ does not satisfy \textit{Lemma~\ref{lemma_path}}.
Since the graph $\mathcal{G}(T)$ is a tree graph, there always exist a state node $i\in\mathcal{V}^{S}$, which has out-degree 3.
Then, there always exists a case without a dedicated node in $\mathcal{N}(\alpha)$$\setminus$$\alpha$
when $\alpha=\mathcal{V}^{S}$$\setminus$$\mathcal{N}(\mathcal{V}^{I})$$\setminus$$\{i\}$.
\end{proof}

As an example of \textit{Lemma~\ref{lemma_tree}}, 
the graph $\mathcal{G}(T)$ depicted in Fig.~\ref{network_ex_tree}(a) shows a tree state graph $\mathcal{G}^{S}$ with two external input nodes $u_1,u_2\in\mathcal{V}^{I}$. 
In this case, the graph $\mathcal{G}(T)$ can be induced by 
$2$-disjoint path graphs $\mathcal{G}_{1}$ and $\mathcal{G}_{2}$ with a bridge edge $(2,4)\in\mathcal{E}_{12}$,
i.e., $\mathcal{G}_{1}:u_1\rightarrow 3 \leftrightarrow 2 \leftrightarrow 1$ and $\mathcal{G}_{2}:u_2\rightarrow 5 \leftrightarrow 4$, 
where the symbol $\rightarrow$ and $\leftrightarrow$ are used to denote directions of the connection between nodes.
It follows from \textit{Lemma~\ref{lemma_tree}} that the merged graph $\mathcal{G}(T)$ is SSC since each disjoint component $\mathcal{G}_{1}$ and $\mathcal{G}_{2}$ satisfies \textit{Lemma~\ref{lemma_path}}.
However, the graph $\mathcal{G}(T)$ depicted in Fig.~\ref{network_ex_tree}(b) can not be induced by 2-disjoint path graphs. 
In this case, when $\alpha=\{ 1,3\}$, we obtain $\mathcal{N}(\alpha)$$\setminus$$\alpha=\{2\}$.
But the node $2$ is a sharing node satisfying $|\mathcal{N}_{2}\cap\alpha|>1$, which is connected to the nodes $1,3\in\alpha$,
thus, the graph $\mathcal{G}(T)$ in Fig.~\ref{network_ex_tree}(b) is not SSC.

Note that if $\mathcal{G}^{S}$ is a tree graph, which induced by $m$-disjoint components, $m$ properly located external input nodes are sufficient for the graph $\mathcal{G}(T)$ to be SSC, 
i.e., the minimum number of external input nodes for the strong sign controllability of $\mathcal{G}(T)$ is $m$.
With the result of \textit{Lemma~\ref{lemma_tree}}, the following \textit{Corollary~\ref{corollary_1}} can be directly obtained:

\begin{corollary}\label{corollary_1}
Let two components $\mathcal{G}_{i}$ and $\mathcal{G}_{j}$ be SSC, respectively.
If there exists a bridge graph $\mathcal{G}_{ij}^{S}$ satisfying $|\mathcal{E}_{ij}^{S}|=1$,
then the merged graph $\mathcal{G}(T)=\mathcal{G}_{i}\cup\mathcal{G}_{ij}^{S}\cup\mathcal{G}_{j}$ is SSC, regardless of the location of the bridge edge.
\end{corollary}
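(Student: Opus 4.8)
The plan is to reduce \textit{Corollary~\ref{corollary_1}} to the argument already carried out in the proof of \textit{Lemma~\ref{lemma_tree}}, since that proof never actually used the fact that the ambient graph is a tree on the \textit{if} side — it only used that both pieces being glued are SSC and that they are joined by a single bridge edge. So first I would set up notation: write $\mathcal{G}(T)=\mathcal{G}_i\cup\mathcal{G}_{ij}^S\cup\mathcal{G}_j$ where $\mathcal{G}_i,\mathcal{G}_j$ are SSC components and $\mathcal{E}_{ij}^S=\{(k,l)\}$ with $k\in\mathcal{V}_i^S$, $l\in\mathcal{V}_j^S$. By \textit{Corollary~\ref{corollary_theorem1}}, it suffices to show that for every $\alpha\subseteq\mathcal{V}^S\setminus\mathcal{N}(\mathcal{V}^I)$ the set $\mathcal{N}(\alpha)\setminus\alpha$ contains a dedicated node.

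The key observation, exactly as in \textit{Lemma~\ref{lemma_tree}}, is that adding the bridge edge $(k,l)$ changes only the neighbor sets $\mathcal{N}_k$ and $\mathcal{N}_l$; every other node keeps the neighbor set it had in its own component. Next I would split into cases on $\alpha$. Write $\alpha_i=\alpha\cap\mathcal{V}_i^S$ and $\alpha_j=\alpha\cap\mathcal{V}_j^S$. If $\alpha_i=\emptyset$ then $\alpha\subseteq\mathcal{V}_j^S$ and the bridge edge is irrelevant to the induced subgraph on $\mathcal{V}_j$, so SSC-ness of $\mathcal{G}_j$ (via \textit{Corollary~\ref{corollary_theorem1}} applied inside $\mathcal{G}_j$) supplies a dedicated node; symmetrically if $\alpha_j=\emptyset$. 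So assume both $\alpha_i$ and $\alpha_j$ are nonempty. Because $\mathcal{G}_i$ is SSC, $\mathcal{N}(\alpha_i)\setminus\alpha_i$ (computed within $\mathcal{G}_i$) has a dedicated node, say $p$; likewise $\mathcal{G}_j$ gives a dedicated node $q$ for $\alpha_j$. The only way either of these can fail to remain dedicated in the merged graph is if $p=l$ (so that the new edge $(k,l)$ with $k\in\alpha_i\subseteq\alpha$ raises $|\mathcal{N}_p\cap\alpha|$) or $q=k$ (symmetrically). But $p$ and $q$ can be chosen in their respective components independently, and at most one of the two "spoiling" coincidences can matter at a time: if $p=l$, then I instead use $q$ — and $q$ is spoiled only if $q=k$ \emph{and} $l\in\alpha$; but $l\notin\alpha$ can be arranged in the relevant subcase, or more cleanly, I argue that $k$ and $l$ cannot simultaneously be the unique dedicated nodes of $\alpha_i$ and $\alpha_j$ unless one of them is in fact still dedicated. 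The cleanest framing: if the dedicated node $p$ of $\alpha_i$ in $\mathcal{G}_i$ is \emph{not} $l$, it stays dedicated in $\mathcal{G}(T)$ and we are done; if $p=l$, then necessarily $l\notin\alpha$ (dedicated nodes lie outside $\alpha$), and we turn to $\mathcal{G}_j$: its dedicated node $q$ for $\alpha_j$ is spoiled only if $q=k$ and the bridge contributes an extra neighbor in $\alpha$, i.e. only if $l\in\alpha$ — contradiction. Hence in every case $\mathcal{N}(\alpha)\setminus\alpha$ has a dedicated node, so $\mathcal{G}(T)$ is SSC, and since nothing in the argument referred to the position of $(k,l)$, the conclusion holds regardless of the location of the bridge edge.

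The main obstacle is the bookkeeping in the last case: making sure that the single bridge edge cannot simultaneously destroy \emph{all} dedicated-node candidates on both sides. The resolution rests on the asymmetry just noted — a bridge node can only be "upgraded" to a sharing node when the node on the \emph{other} end of the bridge lies in $\alpha$, and a dedicated node by definition lies outside $\alpha$ — so $k$ and $l$ cannot both be forced out of dedicated status at once. I would state this as a short sublemma (or inline claim) and then the corollary follows immediately; essentially it is the same mechanism used in the \textit{if} direction of \textit{Lemma~\ref{lemma_tree}}, now stated for arbitrary SSC components rather than paths, which is why the authors flag it as a direct consequence.
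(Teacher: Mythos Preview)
Your strategy is exactly the paper's: Corollary~\ref{corollary_1} is stated as a direct consequence of the \textit{if}-direction of \textit{Lemma~\ref{lemma_tree}}, and you correctly isolate that the argument there uses only ``both pieces are SSC'' and ``a single bridge edge'', not the tree hypothesis. The paper itself leaves the case analysis implicit; your finer split into $\alpha_i,\alpha_j$ and the ``at most one side can be spoiled'' observation is the right way to make it rigorous.

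There is, however, a systematic label swap in your bookkeeping that you should fix. You defined $p$ as the dedicated node of $\alpha_i$ computed \emph{within} $\mathcal{G}_i$, so $p\in\mathcal{V}_i$; likewise $q\in\mathcal{V}_j$. Since $k\in\mathcal{V}_i^S$ and $l\in\mathcal{V}_j^S$, the equality $p=l$ is impossible, and your stated spoiling condition is vacuous. The correct conditions are: $p$ can be spoiled only if $p=k$ (the unique node in $\mathcal{V}_i$ whose neighborhood changes) \emph{and} $l\in\alpha$; symmetrically, $q$ can be spoiled only if $q=l$ and $k\in\alpha$. With this correction your ``cleanest framing'' reads: if $p\neq k$ we are done; if $p=k$ then $k\notin\alpha$ (dedicated nodes lie outside $\alpha$), so turning to $q$, even if $q=l$ the extra neighbor $k$ contributed by the bridge is not in $\alpha$, and $q$ remains dedicated. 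That is precisely the asymmetry you identified, just with the endpoints matched to the right components. Once the labels are straightened out, the proof is complete and matches the paper's intended mechanism.
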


The above corollary means that the existence and location of a bridge edge connecting two disjoint components,
are independent of the controllability of the merged graph.

\begin{lemma}\label{lemma_cycle}
Consider a sym-cycle state graph $\mathcal{G}^{S}=(\mathcal{V}^{S},\mathcal{E}^{S})$
and an interaction graph $\mathcal{G}^{I}=(\mathcal{V}^{I},\mathcal{E}^{I})$ satisfying $|\mathcal{V}^{I}|=2$. 
The graph $\mathcal{G}(T)=\mathcal{G}^{S}\cup\mathcal{G}^{I}$ is SSC
if and only if there exists an edge $(k,l)\in\mathcal{E}$ with $ k,l \in\mathcal{N}(\mathcal{V}^{I})$.
\end{lemma}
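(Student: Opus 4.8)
The plan is to reduce the whole statement to the characterization in Corollary~\ref{corollary_theorem1}: the graph $\mathcal{G}(T)$ is SSC if and only if every nonempty $\alpha\subseteq\mathcal{V}^S\setminus\mathcal{N}(\mathcal{V}^I)$ has a dedicated node in $\mathcal{N}(\alpha)\setminus\alpha$. Since $|\mathcal{V}^I|=2$ and each external input node has a unique out-neighbor, $\mathcal{N}(\mathcal{V}^I)$ consists of one state node or of a pair $\{k,l\}$, and the hypothesis ``there is an edge $(k,l)\in\mathcal{E}$ with $k,l\in\mathcal{N}(\mathcal{V}^I)$'' is exactly the statement that $\mathcal{N}(\mathcal{V}^I)=\{k,l\}$ with $k$ and $l$ adjacent on the sym-cycle. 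I would prove the two implications against this reformulation.

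For the \emph{only if} direction, I would assume the edge is absent, so $\mathcal{N}(\mathcal{V}^I)$ is either a single node or a pair of nodes at cycle-distance at least $2$, and test the single set $\alpha=\mathcal{V}^S\setminus\mathcal{N}(\mathcal{V}^I)$, which is nonempty because $n\ge 3$. Deleting one vertex, or two non-adjacent vertices, from a sym-cycle leaves each deleted vertex with both of its cycle-neighbors inside $\alpha$; moreover no external input node is an in-neighbor of $\alpha$, since each such node points only at a vertex of $\mathcal{N}(\mathcal{V}^I)$, which has been removed. Hence $\mathcal{N}(\alpha)\setminus\alpha=\mathcal{N}(\mathcal{V}^I)$ and every one of its elements satisfies $|\mathcal{N}_i\cap\alpha|=2$, i.e.\ is a sharing node; by Corollary~\ref{corollary_theorem1} the graph is not SSC.

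For the \emph{if} direction, I would relabel the cycle consecutively as $k,l,p_1,\dots,p_{n-2}$ with the cyclic conventions $p_0:=l$ and $p_{-1}:=p_{n-1}:=k$, so that $\mathcal{V}^S\setminus\mathcal{N}(\mathcal{V}^I)$ is the path $p_1-p_2-\cdots-p_{n-2}$. Given any nonempty $\alpha$ inside this path I would set $a:=\min\{i:p_i\in\alpha\}$ and take the candidate $p_{a-1}$: it lies in $\mathcal{N}(\alpha)\setminus\alpha$ (adjacent to $p_a\in\alpha$, and outside $\alpha$ by minimality of $a$, or because $p_0=l\notin\alpha$ when $a=1$), its only cycle-neighbors are $p_{a-2}$ and $p_a$ of which only $p_a$ belongs to $\alpha$, a self-loop at $p_{a-1}$ is irrelevant since $p_{a-1}\notin\alpha$, and the external inputs reach only $k,l\notin\alpha$; hence $|\mathcal{N}_{p_{a-1}}\cap\alpha|=1$, so $p_{a-1}$ is a dedicated node and Corollary~\ref{corollary_theorem1} yields SSC. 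The one index check to keep clean is $p_{a-2}\notin\alpha$, which holds for $a\ge 2$ by minimality of $a$ and for $a=1$ because $p_{-1}=k\notin\alpha$.

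The step I expect to be the main obstacle is precisely this \emph{if} direction, and more specifically the fact that it should be carried out directly rather than via the earlier merging results: removing the two adjacent input nodes presents the sym-cycle as two sym-paths joined by \emph{two} bridge edges --- the chord $(k,l)$ together with one edge of the leftover arc --- so Corollary~\ref{corollary_1}, which handles only a single bridge edge, does not apply, and the explicit dedicated node $p_{a-1}$ built from the left end of $\alpha$ is what closes the gap. The degenerate situations $n=3$ and ``both inputs feed one state node'' need a word but are already subsumed by the case split on $\mathcal{N}(\mathcal{V}^I)$.
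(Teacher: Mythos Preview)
Your proposal is correct. The \emph{only if} direction coincides with the paper's: both exhibit the single witness $\alpha=\mathcal{V}^S\setminus\mathcal{N}(\mathcal{V}^I)$ and observe that every vertex of $\mathcal{N}(\mathcal{V}^I)$ then has two neighbours in $\alpha$.

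The \emph{if} directions differ. You give a direct, self-contained argument: after deleting $\{k,l\}$ the remaining state nodes form the path $p_1,\dots,p_{n-2}$, and for any $\alpha$ inside it the predecessor $p_{a-1}$ of the leftmost element is a dedicated node. The paper instead goes through its merging machinery: it splits the cycle into two sym-paths, each carrying one of the two input-adjacent nodes $k,l$ at a terminal (so each is SSC by Lemma~\ref{lemma_path}), joins them by the \emph{non}-input bridge edge via Corollary~\ref{corollary_1}, and finally argues that adding the second bridge edge $(k,l)$ cannot destroy SSC because the only neighbourhoods it changes are $\mathcal{N}_k,\mathcal{N}_l$, and the new neighbour each gains lies outside every admissible $\alpha\subseteq\mathcal{V}^S\setminus\mathcal{N}(\mathcal{V}^I)$. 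So your worry that ``Corollary~\ref{corollary_1} handles only a single bridge edge and therefore does not apply'' is misplaced: the paper applies it to exactly one bridge edge and treats the edge $(k,l)$ by a separate, short observation. Your direct argument is cleaner and avoids the bookkeeping of the decomposition; the paper's route has the payoff of immediately generalising to Corollary~\ref{corollary_2}, where the same ``second bridge edge between input-adjacent nodes is harmless'' step is reused.
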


\begin{proof} 
Let us consider that a graph $\mathcal{G}(T)=\mathcal{G}^{S}\cup\mathcal{G}^{I}$ consists of a sym-cycle state graph $\mathcal{G}^{S}=(\mathcal{V}^{S},\mathcal{E}^{S})$ 
and an interaction graph $\mathcal{G}^{I}=(\mathcal{V}^{I},\mathcal{E}^{I})$ satisfying $|\mathcal{V}^{I}|=2$.
Then, the graph $\mathcal{G}(T)$ can be induced by $2$-disjoint components 
$\mathcal{G}_{1}$ and $\mathcal{G}_{2}$ with $\mathcal{E}_{12}^{S}$ satisfying $|\mathcal{E}_{12}^{S}|=2$.
Also, each disjoint component $\mathcal{G}_{1}$ and $\mathcal{G}_{2}$ satisfies \textit{Lemma~\ref{lemma_path}}.
Thus, the sets $\mathcal{N}(\alpha_{1})$$\setminus$$\alpha_{1}$ and $\mathcal{N}(\alpha_{2})$$\setminus$$\alpha_{2}$ have at least one dedicated node
for all $\alpha_{1}\subseteq\mathcal{V}_{1}^{S}$$\setminus$$\mathcal{N}(\mathcal{V}^{I}_{1})$ and 
$\alpha_{2}\subseteq\mathcal{V}_{2}^{S}$$\setminus$$\mathcal{N}(\mathcal{V}^{I}_{2})$, respectively.
For the merged graph $\mathcal{G}(T)$ to be SSC,
we only need to consider the existence of dedicated nodes in $\mathcal{N}(\alpha)$$\setminus$$\alpha$
when $\alpha\subseteq\mathcal{V}^{S}$$\setminus$$\mathcal{N}(\mathcal{V}^{I})$ contains at least one bridge node.
Because each set of in-neighboring nodes of a node in $\mathcal{V}^{S}$ remains unchanged except for the nodes in $\mathcal{V}_{12}^{S}$.
Now, start from $\mathcal{G}_{1}\cup\mathcal{G}_{2}$, we gradually add two bridge edges step-by-step for check the condition of \textit{Corollary~\ref{corollary_theorem1}}.
Let the bridge edges be $\{(k_1,l_1),(k_2,l_2)\}\in\mathcal{E}^{S}_{12}$, 
where $k_1,k_2\in\mathcal{V}_{1}^{S}$ and $l_1,l_2\in\mathcal{V}_{2}^{S}$ satisfying $k_1,l_1\notin\mathcal{N}(\mathcal{V}^{I})$.

For \textit{if} condition,
consider a merged graph $\mathcal{G}_{1}\cup\mathcal{G}_{2}$ with the bridge edge $(k_1,l_1)\in\mathcal{E}^{S}_{12}$.
It follows from \textit{Corollary~\ref{corollary_1}} that if each $\mathcal{G}_{1}$ and $\mathcal{G}_{2}$ is SSC, 
the merged graph $\mathcal{G}_{1}\cup\mathcal{G}_{2}$ with a bridge edge $(k_1,l_1)\in\mathcal{E}_{12}^{S}$ is SSC.
For the other bridge edge $(k_2,l_2)\in\mathcal{E}^{S}_{12}$,
suppose that the bridge nodes $k_2,l_2$ satisfy $k_2,l_2\in\mathcal{N}(\mathcal{V}^{I})$.
Then, each set of in-neighboring nodes of nodes $k_2$ and $l_2$ always includes each other, i.e., $k_2\in\mathcal{N}_{l_2}$ and $l_2\in\mathcal{N}_{k_2}$.
Hence, if $\alpha\subseteq\mathcal{V}^{S}$$\setminus$$\mathcal{N}(\mathcal{V}^{I})$ contains $k_2$ or $l_2$,
the set $\mathcal{N}(\alpha)$$\setminus$$\alpha$ always contains at least one of the nodes $k_2$ and $l_2$ with $k_2,l_2\in\mathcal{N}(\mathcal{V}^{I})$.
It follows from \textit{Remark~\ref{remark_meaning_input}} that if $\alpha$ contains a node in $\mathcal{N}(\mathcal{V}^{I})$, 
there always exists at least one dedicated node in $\mathcal{N}(\alpha)$$\setminus$$\alpha$.
Therefore, the graph $\mathcal{G}(T)$ is SSC.
For \textit{only if} condition,
let us suppose that $(k_2,l_2)\notin\mathcal{E}$ with $k_2,l_2\in\mathcal{N}(\mathcal{V}^{I})$.
In this case, when choosing $\alpha=\mathcal{V}^{S}$$\setminus$$\mathcal{N}(\mathcal{V}^{I})$,
we obtain $\mathcal{N}(\alpha)$$\setminus$$\alpha=\{k_2,l_2\}$.
However, the nodes $k_2$ and $l_2$ are sharing nodes satisfying $|\mathcal{N}_{k_2}\cap\alpha|=|\mathcal{N}_{l_2}\cap\alpha|=2$. 
Hence, according to \textit{Corollary~\ref{corollary_theorem1}}, the graph $\mathcal{G}(T)$ is not SSC.
\end{proof}

For example, Fig.~\ref{network_ex_cycle}(a) shows a sym-cycle state graph $\mathcal{G}^{S}$ with $\mathcal{N}(\mathcal{V}^{I})=\{ 2,3\}$ and there exists an edge $(2,3)\in\mathcal{E}$. According to \textit{Lemma~\ref{lemma_cycle}}, the graph $\mathcal{G}(T)=\mathcal{G}^{S}\cup\mathcal{G}^{I}$ is SSC.
However, the graph $\mathcal{G}(T)$ depicted in Fig.~\ref{network_ex_cycle}(b) shows $\mathcal{N}(\mathcal{V}^{I})=\{ 1,3\}$ 
and there is no edge between the nodes $1,3\in\mathcal{N}(\mathcal{V}^{I})$, i.e., $(1,3)\notin\mathcal{E}$.
In this case, when choosing $\alpha=\mathcal{V}^{S}$$\setminus$$\mathcal{N}(\mathcal{V}^{I})=\{ 2,4\}$, 
we obtain $\mathcal{N}(\alpha)$$\setminus$$\alpha=\{1,3\}$.
it is clear that nodes $1,3\in\mathcal{N}(\alpha)$$\setminus$$\alpha$, are sharing nodes satisfying $|\mathcal{N}_{1}\cap\alpha|=|\mathcal{N}_{3}\cap\alpha|=2$.
Hence, according to \textit{Corollary~\ref{corollary_theorem1}}, the graph $\mathcal{G}(T)$ in Fig.~\ref{network_ex_cycle}(b) is not SSC.
Note that if $\mathcal{G}^{S}$ is a sym-cycle, two properly located external input nodes are sufficient for the graph $\mathcal{G}(T)$ to be SSC, 
i.e., the minimum number of external input nodes for the strong sign controllability of $\mathcal{G}(T)$ is $2$.
The result of \textit{Lemma~\ref{lemma_cycle}} can be generalized as:

\begin{corollary}\label{corollary_2}
Let two disjoint components $\mathcal{G}_{i}$ and $\mathcal{G}_{j}$ be SSC with sym-path state graphs $\mathcal{G}^{S}_{i}$ and $\mathcal{G}^{S}_{j}$, respectively.
If there exists a bridge edge $(k,l)\in\mathcal{E}_{ij}^{S}$ satisfying $k,l\in\mathcal{N}(\mathcal{V}^{I})$,
the merged graph $\mathcal{G}_{i}\cup\mathcal{G}_{ij}^{S}\cup\mathcal{G}_{j}$ is SSC, regardless of the existence of an additional bridge edge in $\mathcal{E}_{ij}^{S}$.
\end{corollary}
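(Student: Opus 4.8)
The plan is to follow the pattern of the proof of \textit{Lemma~\ref{lemma_cycle}}, but now starting from two SSC sym-path components rather than from a split sym-cycle. Since the case $|\mathcal{E}_{ij}^{S}|=1$ is precisely \textit{Corollary~\ref{corollary_1}}, the only case to treat is $\mathcal{E}_{ij}^{S}=\{(k,l),(k',l')\}$ with $k,l\in\mathcal{N}(\mathcal{V}^{I})$, $k'\in\mathcal{V}_{i}^{S}$ and $l'\in\mathcal{V}_{j}^{S}$; by the injectivity required in \textit{Definition~\ref{definition_Bridge_graph}} we have $k'\neq k$ and $l'\neq l$. First I would fix notation from \textit{Lemma~\ref{lemma_path}}: since $\mathcal{G}_{i}$ is an SSC sym-path, its external input is attached to a terminal node of $\mathcal{G}_{i}^{S}$, which is exactly $k$; likewise $l$ is the input-adjacent terminal of $\mathcal{G}_{j}^{S}$. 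In particular $k\notin\alpha$ and $l\notin\alpha$ for every $\alpha\subseteq\mathcal{V}^{S}\setminus\mathcal{N}(\mathcal{V}^{I})$. By \textit{Corollary~\ref{corollary_theorem1}} it then remains to exhibit, for each such $\alpha$ (which we may assume nonempty), a dedicated node inside $\mathcal{N}(\alpha)\setminus\alpha$ of the merged graph.

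Write $\alpha_{i}=\alpha\cap\mathcal{V}_{i}^{S}$ and $\alpha_{j}=\alpha\cap\mathcal{V}_{j}^{S}$, so at least one of them is nonempty. When $\alpha_{i}\neq\emptyset$, let $v$ be the vertex of $\alpha_{i}$ closest to $k$ along $\mathcal{G}_{i}^{S}$ (it exists and satisfies $v\neq k$ because $k\notin\alpha$), and let $b_{i}$ be the vertex immediately preceding $v$ towards $k$. By the choice of $v$ the only neighbour of $b_{i}$ lying in $\alpha_{i}$ inside $\mathcal{G}_{i}$ is $v$, so $b_{i}$ is a dedicated node of $\alpha_{i}$ in $\mathcal{G}_{i}$ and $b_{i}\notin\alpha$; define $b_{j}$ symmetrically when $\alpha_{j}\neq\emptyset$. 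The key observations are: (i) adding a bridge edge gives exactly one new neighbour to each of its two endpoints and nothing else; (ii) the distinguished edge $(k,l)$ is harmless, since the new neighbour it could give to a boundary vertex is $k$ or $l$, both of which lie in $\mathcal{N}(\mathcal{V}^{I})$ and hence outside $\alpha$, so $(k,l)$ never increases $|\mathcal{N}_{\bullet}\cap\alpha|$; (iii) consequently, in the merged graph $b_{i}$ fails to be dedicated only if $b_{i}=k'$ and $l'\in\alpha$, and $b_{j}$ fails only if $b_{j}=l'$ and $k'\in\alpha$.

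The final step is to check that the extra edge $(k',l')$ cannot damage both candidates simultaneously. If $\alpha_{i}=\emptyset$, take $b_{j}$: by (iii) it could be damaged only if $k'\in\alpha$, but $k'\in\mathcal{V}_{i}^{S}$ is disjoint from $\alpha=\alpha_{j}\subseteq\mathcal{V}_{j}^{S}$, so $b_{j}$ remains a dedicated node of $\alpha$; the case $\alpha_{j}=\emptyset$ is symmetric. If both $\alpha_{i},\alpha_{j}\neq\emptyset$ and $b_{i}$ is damaged, then $b_{i}=k'$, and since $b_{i}\notin\alpha$ this forces $k'\notin\alpha$, which by (iii) means $b_{j}$ is not damaged; hence $b_{j}$ is a dedicated node of $\alpha$ in the merged graph. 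In every case $\mathcal{N}(\alpha)\setminus\alpha$ contains a dedicated node, so by \textit{Corollary~\ref{corollary_theorem1}} the merged graph is SSC, and this holds whatever the second bridge edge is. I expect the main obstacle to be the bookkeeping of the last two paragraphs --- correctly singling out the boundary vertices $b_{i},b_{j}$, verifying they are dedicated within their own components, and tracking which of the at most two new incidences can spoil them --- once that is in place, everything else reduces to \textit{Corollary~\ref{corollary_1}} and \textit{Corollary~\ref{corollary_theorem1}}.
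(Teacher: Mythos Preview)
Your proof is correct. It differs from the paper's route in a meaningful way, so let me compare.

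The paper does not give a separate proof of this corollary; it treats it as the evident generalization of the argument for \textit{Lemma~\ref{lemma_cycle}}. That argument is a two--step reduction: first attach the \emph{arbitrary} bridge edge $(k',l')$ and invoke \textit{Corollary~\ref{corollary_1}} as a black box to conclude the resulting graph is SSC; then attach the \emph{special} bridge edge $(k,l)$ with $k,l\in\mathcal{N}(\mathcal{V}^{I})$ and observe that the only neighbourhoods that change are $\mathcal{N}_{k}$ and $\mathcal{N}_{l}$, each gaining a vertex outside $\alpha$, so every dedicated node from the first step survives. You instead work directly in the full merged graph: you isolate the special edge as harmless by the same observation, but for the arbitrary edge you replace the appeal to \textit{Corollary~\ref{corollary_1}} with an explicit construction, picking the predecessor $b_{i}$ (resp.\ $b_{j}$) of the $\alpha$-vertex closest to the input terminal on each path and then arguing that $(k',l')$ can spoil at most one of the two candidates.

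Both arguments are sound. The paper's layering is shorter and more modular, since it reuses \textit{Corollary~\ref{corollary_1}} wholesale and only has to justify that one extra edge between input-adjacent vertices is harmless. Your construction is more explicit --- it actually names a dedicated node for each $\alpha$ --- and the ``at most one of $b_{i},b_{j}$ can be damaged'' step is a nice bookkeeping device that would scale if one wanted to track which vertex serves as the witness. One small remark: your identification of $k$ and $l$ with the input-adjacent terminals tacitly uses that each sym-path component carries exactly one external input (the setting of \textit{Lemma~\ref{lemma_path}}); this is how the paper reads the statement as well, but it is worth saying once.
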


The above \textit{Corollary~\ref{corollary_2}} contains the condition of strong sign controllability 
for a graph $\mathcal{G}(T)=\mathcal{G}^{S}\cup\mathcal{G}^{I}$ when $\mathcal{G}^{S}$ is a sym-cycle.
Thus, \textit{Lemma~\ref{lemma_cycle}} is a special case of \textit{Corollary~\ref{corollary_2}}.
Furthermore, \textit{Corollary~\ref{corollary_2}} can be applied to a union graph of a sym-cycle $\mathcal{G}_{i}^{S}$ and a bridge graph $\mathcal{G}_{ij}^{S}$.
For example, 
consider a SSC graph $\mathcal{G}_{i}=\mathcal{G}_{i}^{S}\cup\mathcal{G}_{i}^{I}$ with a sym-cycle $\mathcal{G}_{i}^{S}$.
Then, if the merged graph $\bar{\mathcal{G}}_{i}=\mathcal{G}_{i}\cup\mathcal{G}_{ij}^{S}$ can be induced by 
$2$-disjoint components satisfying \textit{Corollary~\ref{corollary_2}} with $\mathcal{E}_{ij}^{S}$,
the merged graph $\bar{\mathcal{G}}_{i}$ is SSC.
To expand our theories into a larger graph, we define a sym-pactus, which consists of disjoint components with bridge graphs.

\begin{definition}\label{definition_sym_pactus} 
(Sym-pactus) A sym-pactus is a connected graph 
defined as $\mathcal{G}^{S}={\bigcup}^{m}_{i=1}(\mathcal{G}_{i}^{S}\cup\mathcal{G}_{ij}^{S})$ for $i,j \in \{1,...,m\}$ and $|i-j|=1$.
A sym-pactus satisfies the following properties.
\begin{enumerate} 
\item $\mathcal{G}^{S}$ is induced by $m$-disjoint components $\mathcal{G}_{i}^{S}$ with $\mathcal{E}_{ij}^{S}$
\item each $\mathcal{G}_{i}^{S}, i\in\{1,...,m\}$, is either sym-path or sym-cycle \newline (if $|\mathcal{V}_{i}^{S}|=1$, $\mathcal{G}_{i}^{S}$ contains no edge, that is, $\mathcal{E}_{i}^{S}=\emptyset)$
\item $\mathcal{G}_{i}^{S}$ and $\mathcal{G}_{j}^{S}$ are connected by at least one bridge edge $(k,l)\in\mathcal{E}_{ij}^{S}$,
where $k\in\mathcal{V}^{S}_{i}$, $l\in\mathcal{V}^{S}_{j}$ satisfying $|i-j|=1$
\end{enumerate}
\end{definition}

Note that the sym-pactus is a more generalized concept than the sym-cactus in \cite{menara2018structural}.
It means that the sym-cactus is a special case of the sym-pactus.
For example, the bridge edges between two disjoint conponents in sym-pactus may be several satisfying \eqref{bridge_edge_boundary}, while the sym-cactus has only one.
Based on the aforementioned lemmas, the following theorem can be established.

\begin{figure}[]
\centering
\subfigure[A state graph $\mathcal{G}^{S}$ ]{
\begin{tikzpicture}[scale=0.5]

\node[] at (3.5,4.7) {\scriptsize$\mathcal{G}_{1}^{S}$};
\node[] at (7,6) {\scriptsize$\mathcal{G}_{2}^{S}$};
\node[] at (10.4,3) {\scriptsize$\mathcal{G}_{3}^{S}$};
\node[] at (14,4.5) {\scriptsize$\mathcal{G}_{4}^{S}$};

\node[red] at (6.1,3) {\scriptsize$\mathcal{G}_{12}^{S}$};
\node[red] at (10,5) {\scriptsize$\mathcal{G}_{23}^{S}$};
\node[red] at (11.9,3) {\scriptsize$\mathcal{G}_{34}^{S}$};

\node[place, black] (node1) at (1,5) [label=below:\scriptsize$1$] {};
\node[place, black] (node2) at (3,4) [label=below:\scriptsize$2$] {};
\node[place, red] (node3) at (5,3) [label=below:\scriptsize$3$] {};
\node[place, red] (node4) at (7,4) [label=below:\scriptsize$4$] {};
\node[place, black] (node5) at (5,5) [label=below:\scriptsize$5$] {};
\node[place, black] (node6) at (5,7) [label=above:\scriptsize$6$] {};
\node[place, black] (node7) at (7,8) [label=above:\scriptsize$7$] {};
\node[place, black] (node8) at (9,7) [label=above:\scriptsize$8$] {};
\node[place, red] (node9) at (9,5) [label=below:\scriptsize$9$] {};
\node[place, black] (node10) at (9,3) [label=below:\scriptsize$10$] {};
\node[place, red] (node11) at (11,2) [label=below:\scriptsize$11$] {};
\node[place, red] (node12) at (11,4) [label=above:\scriptsize$12$] {};
\node[place, black] (node13) at (13,5) [label=above:\scriptsize$13$] {};
\node[place, black] (node14) at (15,6) [label=above:\scriptsize$14$] {};
\node[place, black] (node15) at (15,4) [label=below:\scriptsize$15$] {};
\node[place, red] (node16) at (13,3) [label=below:\scriptsize$16$] {};

\draw (node1) [line width=0.5pt] -- node [left] {} (node2);
\draw (node2) [line width=0.5pt] -- node [left] {} (node3);
\draw (node3) [red,dashed,line width=0.5pt] -- node [below] {} (node4);
\draw (node4) [line width=0.5pt] -- node [left] {} (node5);
\draw (node5) [line width=0.5pt] -- node [left] {} (node6);
\draw (node6) [line width=0.5pt] -- node [left] {} (node7);
\draw (node7) [line width=0.5pt] -- node [left] {} (node8);
\draw (node8) [line width=0.5pt] -- node [left] {} (node9);
\draw (node4) [line width=0.5pt] -- node [left] {} (node9);
\draw (node9) [red,dashed,line width=0.5pt] -- node [below] {} (node12);
\draw (node10) [line width=0.5pt] -- node [left] {} (node11);
\draw (node10) [line width=0.5pt] -- node [left] {} (node12);
\draw (node11) [line width=0.5pt] -- node [left] {} (node12);
\draw (node11) [red,dashed,line width=0.5pt] -- node [below] {} (node16);
\draw (node13) [line width=0.5pt] -- node [left] {} (node14);
\draw (node14) [line width=0.5pt] -- node [left] {} (node15);
\draw (node15) [line width=0.5pt] -- node [left] {} (node16);
\draw (node13) [line width=0.5pt] -- node [left] {} (node16);

\end{tikzpicture}
}

\subfigure[(\textit{Theorem~\ref{theorem_sym_pactus_sc_condition}}) A SSC graph $\mathcal{G}(T)=\mathcal{G}^{S}\cup\mathcal{G}^{I}$]{
\begin{tikzpicture}[scale=0.5]

\node[] at (3.5,4.7) {\scriptsize$\mathcal{G}_1$};
\node[] at (7,6) {\scriptsize$\mathcal{G}_2$};
\node[] at (10.4,3) {\scriptsize$\mathcal{G}_3$};
\node[] at (14,4.5) {\scriptsize$\mathcal{G}_4$};

\node[red] at (6.1,3) {\scriptsize$\mathcal{G}_{12}^{S}$};
\node[red] at (10,5) {\scriptsize$\mathcal{G}_{23}^{S}$};
\node[red] at (11.9,3) {\scriptsize$\mathcal{G}_{34}^{S}$};

\node[place, black] (node1) at (1,5) [label=below:\scriptsize$1$] {};
\node[place, black] (node2) at (3,4) [label=below:\scriptsize$2$] {};
\node[place, red] (node3) at (5,3) [label=below:\scriptsize$3$] {};
\node[place, red] (node4) at (7,4) [label=below:\scriptsize$4$] {};
\node[place, black] (node5) at (5,5) [label=below:\scriptsize$5$] {};
\node[place, black] (node6) at (5,7) [label=above:\scriptsize$6$] {};
\node[place, black] (node7) at (7,8) [label=above:\scriptsize$7$] {};
\node[place, black] (node8) at (9,7) [label=above:\scriptsize$8$] {};
\node[place, red] (node9) at (9,5) [label=below:\scriptsize$9$] {};
\node[place, black] (node10) at (9,3) [label=below:\scriptsize$10$] {};
\node[place, red] (node11) at (11,2) [label=below:\scriptsize$11$] {};
\node[place, red] (node12) at (11,4) [label=above:\scriptsize$12$] {};
\node[place, black] (node13) at (13,5) [label=above:\scriptsize$13$] {};
\node[place, black] (node14) at (15,6) [label=above:\scriptsize$14$] {};
\node[place, black] (node15) at (15,4) [label=below:\scriptsize$15$] {};
\node[place, red] (node16) at (13,3) [label=below:\scriptsize$16$] {};

\draw (node1) [line width=0.5pt] -- node [left] {} (node2);
\draw (node2) [line width=0.5pt] -- node [left] {} (node3);
\draw (node3) [red,dashed,line width=0.5pt] -- node [below] {} (node4);
\draw (node4) [line width=0.5pt] -- node [left] {} (node5);
\draw (node5) [line width=0.5pt] -- node [left] {} (node6);
\draw (node6) [line width=0.5pt] -- node [left] {} (node7);
\draw (node7) [line width=0.5pt] -- node [left] {} (node8);
\draw (node8) [line width=0.5pt] -- node [left] {} (node9);
\draw (node4) [line width=0.5pt] -- node [left] {} (node9);
\draw (node9) [red,dashed,line width=0.5pt] -- node [below] {} (node12);
\draw (node10) [line width=0.5pt] -- node [left] {} (node11);
\draw (node10) [line width=0.5pt] -- node [left] {} (node12);
\draw (node11) [line width=0.5pt] -- node [left] {} (node12);
\draw (node11) [red,dashed,line width=0.5pt] -- node [below] {} (node16);
\draw (node13) [line width=0.5pt] -- node [left] {} (node14);
\draw (node14) [line width=0.5pt] -- node [left] {} (node15);
\draw (node15) [line width=0.5pt] -- node [left] {} (node16);
\draw (node13) [line width=0.5pt] -- node [left] {} (node16);

\node[place, circle] (node17) at (-0.5,4) [label=below:\scriptsize$u_{1}$] {}; 
\node[place, circle] (node19) at (3.5,6) [label=above:\scriptsize$u_{2}$] {}; 
\node[place, circle] (node20) at (3.5,8) [label=above:\scriptsize$u_{3}$] {}; 
\node[place, circle] (node21) at (7.5,2) [label=below:\scriptsize$u_{4}$] {}; 
\node[place, circle] (node22) at (9.5,1) [label=below:\scriptsize$u_{5}$] {}; 
\node[place, circle] (node23) at (11.5,6) [label=above:\scriptsize$u_{6}$] {}; 
\node[place, circle] (node24) at (13.5,7) [label=above:\scriptsize$u_{7}$] {}; 
\draw (node17) [-latex, line width=0.5pt] -- node [left] {} (node1);
\draw (node19) [-latex, line width=0.5pt] -- node [right] {} (node5);
\draw (node20) [-latex, line width=0.5pt] -- node [right] {} (node6);
\draw (node21) [-latex, line width=0.5pt] -- node [right] {} (node10);
\draw (node22) [-latex, line width=0.5pt] -- node [right] {} (node11);
\draw (node23) [-latex, line width=0.5pt] -- node [right] {} (node13);
\draw (node24) [-latex, line width=0.5pt] -- node [right] {} (node14);

\end{tikzpicture}
}
\subfigure[(\textit{Theorem~\ref{theorem_sym_pactus_nc_condition}}) A SSC graph $\mathcal{G}(T)=\mathcal{G}^{S}\cup\mathcal{G}^{I}$ with $\min{|\mathcal{V}^{IE}|}=4$]{
\begin{tikzpicture}[scale=0.5]

\node[] at (3.5,4.7) {\scriptsize$\mathcal{G}_1$};
\node[] at (7,6) {\scriptsize$\mathcal{G}_2$};
\node[] at (10.4,3) {\scriptsize$\mathcal{G}_3$};
\node[] at (14,4.5) {\scriptsize$\mathcal{G}_4$};

\node[red] at (6.1,3) {\scriptsize$\mathcal{G}_{12}^{S}$};
\node[red] at (10,5) {\scriptsize$\mathcal{G}_{23}^{S}$};
\node[red] at (11.9,3) {\scriptsize$\mathcal{G}_{34}^{S}$};

\node[place, black] (node1) at (1,5) [label=below:\scriptsize$1$] {};
\node[place, black] (node2) at (3,4) [label=below:\scriptsize$2$] {};
\node[place, red] (node3) at (5,3) [label=below:\scriptsize$3$] {};
\node[place, red] (node4) at (7,4) [label=below:\scriptsize$4$] {};
\node[place, black] (node5) at (5,5) [label=below:\scriptsize$5$] {};
\node[place, black] (node6) at (5,7) [label=above:\scriptsize$6$] {};
\node[place, black] (node7) at (7,8) [label=above:\scriptsize$7$] {};
\node[place, black] (node8) at (9,7) [label=above:\scriptsize$8$] {};
\node[place, red] (node9) at (9,5) [label=below:\scriptsize$9$] {};
\node[place, black] (node10) at (9,3) [label=below:\scriptsize$10$] {};
\node[place, red] (node11) at (11,2) [label=below:\scriptsize$11$] {};
\node[place, red] (node12) at (11,4) [label=above:\scriptsize$12$] {};
\node[place, black] (node13) at (13,5) [label=above:\scriptsize$13$] {};
\node[place, black] (node14) at (15,6) [label=above:\scriptsize$14$] {};
\node[place, black] (node15) at (15,4) [label=below:\scriptsize$15$] {};
\node[place, red] (node16) at (13,3) [label=below:\scriptsize$16$] {};

\draw (node1) [line width=0.5pt] -- node [left] {} (node2);
\draw (node2) [line width=0.5pt] -- node [left] {} (node3);
\draw (node3) [red,dashed,line width=0.5pt] -- node [below] {} (node4);
\draw (node4) [line width=0.5pt] -- node [left] {} (node5);
\draw (node5) [line width=0.5pt] -- node [left] {} (node6);
\draw (node6) [line width=0.5pt] -- node [left] {} (node7);
\draw (node7) [line width=0.5pt] -- node [left] {} (node8);
\draw (node8) [line width=0.5pt] -- node [left] {} (node9);
\draw (node4) [line width=0.5pt] -- node [left] {} (node9);
\draw (node9) [red,dashed,line width=0.5pt] -- node [below] {} (node12);
\draw (node10) [line width=0.5pt] -- node [left] {} (node11);
\draw (node10) [line width=0.5pt] -- node [left] {} (node12);
\draw (node11) [line width=0.5pt] -- node [left] {} (node12);
\draw (node11) [red,dashed,line width=0.5pt] -- node [below] {} (node16);
\draw (node13) [line width=0.5pt] -- node [left] {} (node14);
\draw (node14) [line width=0.5pt] -- node [left] {} (node15);
\draw (node15) [line width=0.5pt] -- node [left] {} (node16);
\draw (node13) [line width=0.5pt] -- node [left] {} (node16);

\node[place, circle] (node17) at (-0.5,4) [label=below:\scriptsize$u_{1}$] {}; 
\node[place, circle] (node19) at (3.5,6) [label=above:\scriptsize$u_{2}$] {}; 
\node[place, circle] (node22) at (7.5,2) [label=below:\scriptsize$u_{3}$] {}; 
\node[place, circle] (node23) at (11.5,6) [label=above:\scriptsize$u_{4}$] {}; 
\draw (node17) [-latex, line width=0.5pt] -- node [left] {} (node1);
\draw (node19) [-latex, line width=0.5pt] -- node [right] {} (node5);
\draw (node22) [-latex, line width=0.5pt] -- node [right] {} (node10);
\draw (node23) [-latex, line width=0.5pt] -- node [right] {} (node13);
\end{tikzpicture}
}
\caption{A sym-pactus $\mathcal{G}^{S}={\bigcup}^{4}_{i=1}(\mathcal{G}_{i}^{S}\cup\mathcal{G}_{ij}^{S})$ satisfying $|\mathcal{E}_{ij}^{S}|=1$ and $|i-j|=1$}
\label{network_ex_th2}
\end{figure}
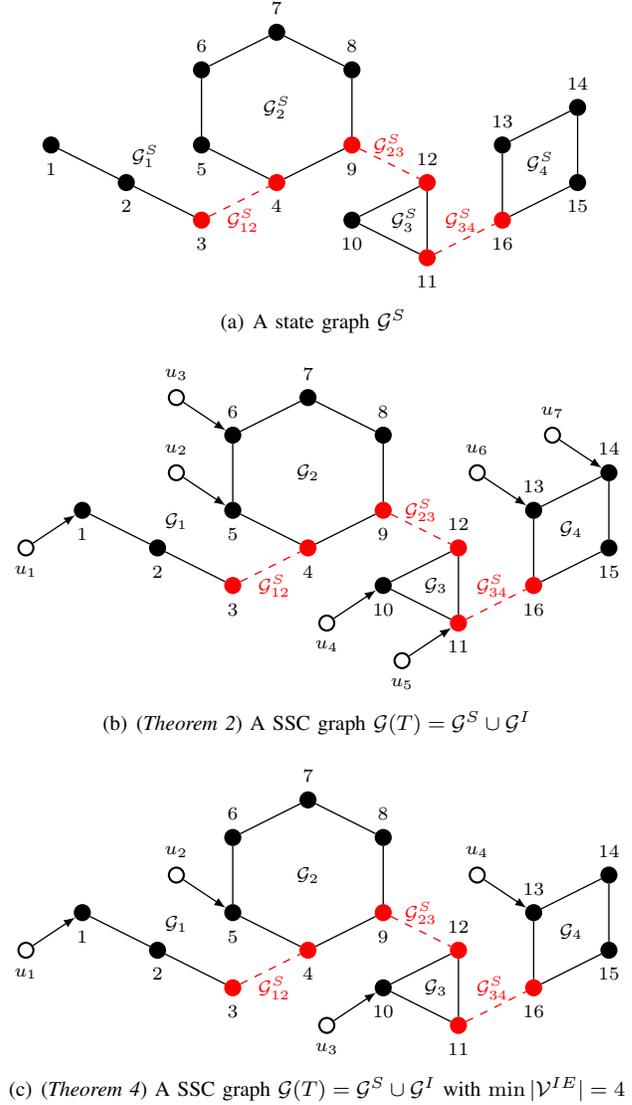

\begin{theorem} \label{theorem_sym_pactus_sc_condition}
Let us suppose that a state graph $\mathcal{G}^{S}$ is a sym-pactus satisfying $|\mathcal{E}_{ij}^{S}|=1$ for $i,j \in \{1,...,m\}$ and $|i-j|=1$.
The graph $\mathcal{G}(T)=\mathcal{G}^{S}\cup\mathcal{G}^{I}$ is SSC if each disjoint component $\mathcal{G}_{i}, i\in \{1,...,m\}$, 
is SSC.
\end{theorem}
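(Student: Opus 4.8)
The plan is to prove the claim by induction on the number $m$ of disjoint components, working throughout with \textit{Corollary~\ref{corollary_theorem1}}: since $\mathcal{G}(T)$ is SSC if and only if $\mathcal{N}(\alpha)\setminus\alpha$ contains a dedicated node for every $\alpha\subseteq\mathcal{V}^{S}\setminus\mathcal{N}(\mathcal{V}^{I})$, it is this dedicated-node condition that I would verify at each step. The base case $m=1$ is immediate, since then $\mathcal{G}(T)=\mathcal{G}_{1}$, which is SSC by hypothesis.

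For the inductive step I would exploit the chain structure of a sym-pactus. Because $\mathcal{G}_{i}^{S}$ and $\mathcal{G}_{j}^{S}$ are joined only when $|i-j|=1$, the component $\mathcal{G}_{m}$ touches the rest of $\mathcal{G}(T)$ solely through the bridge graph $\mathcal{G}_{(m-1)m}^{S}$, which by hypothesis consists of a single bridge edge $(k,l)$ with $k\in\mathcal{V}_{m-1}^{S}$ and $l\in\mathcal{V}_{m}^{S}$. Deleting $\mathcal{G}_{m}$ together with this edge leaves a graph $\mathcal{G}'$ that is again a sym-pactus, now on the $m-1$ components $\mathcal{G}_{1},\dots,\mathcal{G}_{m-1}$, still with single-edge bridge graphs and with each of its components SSC; hence $\mathcal{G}'$ is SSC by the induction hypothesis. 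Then $\mathcal{G}(T)=\mathcal{G}'\cup\mathcal{G}_{(m-1)m}^{S}\cup\mathcal{G}_{m}$ is the merge of the two SSC graphs $\mathcal{G}'$ and $\mathcal{G}_{m}$ along a single bridge edge, and I would finish by invoking \textit{Corollary~\ref{corollary_1}}, whose proof — the bridge-addition argument behind \textit{Lemma~\ref{lemma_tree}} — uses only that the two merged pieces are SSC and that the bridge carries exactly one edge, so it applies with $\mathcal{G}'$ and $\mathcal{G}_{m}$ in place of two basic components.

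The substantive content, and what I expect to be the main obstacle, is the bridge-endpoint bookkeeping inside that merge step. Adding the edge $(k,l)$ alters only the in-neighbor sets $\mathcal{N}_{k}$ and $\mathcal{N}_{l}$, so for a fixed $\alpha\subseteq\mathcal{V}^{S}\setminus\mathcal{N}(\mathcal{V}^{I})$ one would split $\alpha$ into its part $\alpha'$ lying in $\mathcal{G}'$ and its part $\alpha_{m}$ lying in $\mathcal{G}_{m}$ — each of which still avoids the neighbors of its own side's external input nodes — invoke the SSC property of each piece to obtain a dedicated node for $\alpha'$ and one for $\alpha_{m}$, and note that such a dedicated node stays dedicated in $\mathcal{G}(T)$ unless it is precisely the bridge endpoint on its side. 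The delicate case is when a side offers only its bridge endpoint; here one uses that every vertex of $\mathcal{N}(\alpha)\setminus\alpha$ lies outside $\alpha$, so if $k$ is forced then $k\notin\alpha$, whence the new edge does not enlarge $\mathcal{N}_{l}\cap\alpha$ and the dedicated node coming from the $\mathcal{G}_{m}$-side (which is automatically different from $l$ whenever $l\in\alpha$) still serves, and symmetrically. The hypothesis $|\mathcal{E}_{ij}^{S}|=1$ is exactly what keeps this perturbation small enough to close the argument: with two or more bridge edges the added edges can make the relevant endpoints sharing nodes simultaneously with nothing left to fall back on, as the sym-cycle case of \textit{Lemma~\ref{lemma_cycle}} shows.
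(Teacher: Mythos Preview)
Your proposal is correct and follows essentially the same route as the paper: both argue by induction on the number of components, repeatedly invoking \textit{Corollary~\ref{corollary_1}} to attach one more SSC component across a single bridge edge. Your third paragraph spells out the bridge-endpoint bookkeeping that the paper leaves implicit in its appeal to \textit{Corollary~\ref{corollary_1}}, but the underlying argument is the same.
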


\begin{proof}
The \textit{if} condition can be proved by an induction of \textit{Corollary~\ref{corollary_1}}.
Let a state graph $\mathcal{G}^{S}$ be a sym-pactus. 
Then, the sym-pactus $\mathcal{G}^{S}$ can be induced by $m$-disjoint components $\mathcal{G}_{i}^{S}$ with $\mathcal{E}^{S}_{ij}$ for $i,j\in \{1,...,m\}$ and $|i-j|=1$.
Also, each $\mathcal{G}_{i}^{S}$ is either a sym-path or a sym-cycle.
Suppose that each disjoint component $\mathcal{G}_{i}=\mathcal{G}_{i}^{S}\cup\mathcal{G}_{i}^{I}, i\in \{1,...,m\}$ satisfies
\textit{Lemma~\ref{lemma_path}} (sym-path) or \textit{Lemma~\ref{lemma_cycle}} (sym-cycle).
From each disjoint component point of view, it is clear that the set $\mathcal{N}(\alpha_{i})$$\setminus$$\alpha_{i}$ has at least one dedicated node 
for all $\alpha_{i}\subseteq\mathcal{V}_{i}^{S}$$\setminus$$\mathcal{N}(\mathcal{V}_{i}^{I}), i\in \{1,...,m\}$.
Thus, since each SSC component $\mathcal{G}_{i}^{S}$ and $\mathcal{G}_{j}^{S}$ is connected by exactly one bridge edge $(k,l)\in\mathcal{E}_{ij}^{S}$
with $k\in\mathcal{V}^{S}_{i}$ and $l\in\mathcal{V}^{S}_{j}$ for $i,j \in \{1,...,m\}$ and $|i-j|=1$,
by an induction of \textit{Corollary~\ref{corollary_1}}, 
the merged graph $\mathcal{G}(T)$ is SSC, 
i.e., the set $\mathcal{N}(\alpha)$$\setminus$$\alpha$ has at least one dedicated node
for all $\alpha\subseteq {\bigcup}^{m}_{i=1}(\mathcal{V}_{i}^{S}$$\setminus$$\mathcal{N}(\mathcal{V}^{I}_{i}))$, 
which is equivalent to $\alpha\subseteq\mathcal{V}^{S}$$\setminus$$\mathcal{N}(\mathcal{V}^{I})$,
\end{proof}

The above \textit{Theorem~\ref{theorem_sym_pactus_sc_condition}} shows a sufficient condition for the strong sign controllability for sym-pactus, 
which is interpreted from the perspective of each component.
As an example of \textit{Theorem~\ref{theorem_sym_pactus_sc_condition}}, 
the state graph $\mathcal{G}^{S}$ depicted in Fig.~\ref{network_ex_th2}(a) is a sym-pactus 
$\mathcal{G}^{S}={\bigcup}^{4}_{i=1}(\mathcal{G}_{i}^{S}\cup\mathcal{G}_{ij}^{S})$ for $i,j\in\{ 1,2,3,4\}$ and $|i-j|=1$.
According to \textit{Lemma~\ref{lemma_path}}, $\mathcal{G}_{1}$ needs an external input node connected to node $1$ to be SSC, 
i.e., $\mathcal{V}_{1}^{I}=\{ u_{1}\}$. 
Since $\mathcal{G}_{2},\mathcal{G}_{3}$, and $\mathcal{G}_{4}$ are sym-cycles, 
each component requires at least two properly located external input nodes to satisfy \textit{Lemma~\ref{lemma_cycle}},
i.e., $\mathcal{V}^{I}_{2}=\{ u_2,u_3\}$, $\mathcal{V}^{I}_{3}=\{ u_4,u_5\}$, $\mathcal{V}^{I}_{4}=\{ u_6,u_7\}$.
These results are shown in Fig.~\ref{network_ex_th2}(b). Hence, the graph $\mathcal{G}(T)=\mathcal{G}^{S}\cup\mathcal{G}^{I}$ requires seven external input nodes 
to satisfy \textit{Theorem~\ref{theorem_sym_pactus_sc_condition}},
i.e., $\mathcal{V}^{I}={\bigcup}^{4}_{i=1}\mathcal{V}_{i}^{I}=\{ u_{1},u_{2},u_{3},u_{4},u_{5},u_{6},u_{7}\}$. 
Note that the locations of the external input nodes are not unique.

\section{Strongly Sign Controllable Graphs with Minimum External Input Nodes} \label{sec_topo_minimum}
In the previous section, we explored the necessary and sufficient conditions for the strong sign controllability of the basic components, 
and these results are extended to a larger graph. 
In this section, we present a condition for the strong sign controllability of a graph from a node point of view.
Then, we present a merging process of polynomial computational complexity to find the minimum number of external input nodes while maintaining the strong sign controllability.
For further analysis of the strong sign controllability from a node point of view,
it is necessary to examine whether a state node is guaranteed at least one dedicated node.
Therefore, we define a SSC state node, which is guaranteed at least one dedicated node in $\mathcal{G}(T)$.

\begin{definition}\label{TC_state_node} 
(SSC state node) 
A set of SSC state nodes in $\mathcal{G}^{S}_{i}$ is symbolically written as $\mathcal{V}_{i}^{SC}$.
A state node $k\in\mathcal{V}_{i}^{S}$ is called a SSC state node
if the set $\mathcal{N}(\alpha)$$\setminus$$\alpha$ has at least one dedicated node 
for all $\alpha\subseteq\mathcal{V}_{i}^{S}$ satisfying $k\in\alpha$.
\end{definition}
Note that it follows from \textit{Remark~\ref{remark_meaning_input}} that the state nodes in $\mathcal{N}(\mathcal{V}^{I})$
are always SSC state nodes.
For convenience, we say that a state node $k\in\mathcal{V}^{S}$ has a dedicated node 
if the set $\mathcal{N}(\alpha)$$\setminus$$\alpha$ has at least one dedicated node 
for all $\alpha\subseteq\mathcal{V}^{S}$ satisfying $k\in\alpha$.
For example, consider the graph depicted in Fig.~\ref{network_ex_cycle}(b).
In this case, the nodes $1,3\in\mathcal{V}^{S}$ are SSC state nodes, 
which are guaranteed a dedicated node from the external input nodes $u_1$ and $u_2$, i.e., $\mathcal{V}^{SC}=\{1,3\}$.
In other word, if $\alpha$ contains at least one SSC state node, the set $\mathcal{N}(\alpha)$$\setminus$$\alpha$ always has at least one dedicated node.
With the above concept of the SSC state node, the following theorem can be established.

\begin{theorem}\label{theorem_TCC}
Consider a sym-pactus $\mathcal{G}^{S}={\bigcup}^{m}_{i=1}\bar{\mathcal{G}}_{i}^{S}$, 
where $\bar{\mathcal{G}}_{i}^{S}=(\bar{\mathcal{V}}_{i}^{S},\bar{\mathcal{E}}_{i}^{S})=\mathcal{G}_{i}^{S}\cup\mathcal{G}_{i(i+1)}^{S}$ for $i\in\{1,...,m\}$.
The graph $\mathcal{G}(T)=\mathcal{G}^{S}\cup\mathcal{G}^{I}$ is SSC 
if and only if the union of SSC state nodes of each component satisfies ${\bigcup}^{m}_{i=1}\bar{\mathcal{V}}_{i}^{SC}=\mathcal{V}^{S}$.
\end{theorem}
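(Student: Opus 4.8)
The plan is to recast strong sign controllability of $\mathcal{G}(T)$ as a statement about SSC state nodes, and then to show that the SSC state nodes of the whole graph are exactly the union of the SSC state node sets of the extended components $\bar{\mathcal{G}}_{i}$. First I would record the elementary equivalence coming from \textit{Corollary~\ref{corollary_theorem1}}, \textit{Remark~\ref{remark_meaning_input}} and \textit{Definition~\ref{TC_state_node}}: \emph{$\mathcal{G}(T)$ is SSC if and only if every $k\in\mathcal{V}^{S}$ is an SSC state node of $\mathcal{G}(T)$}. Indeed, if each state node is an SSC state node, then any nonempty $\alpha\subseteq\mathcal{V}^{S}$ contains some $k$, and the SSC-state-node property of $k$ forces a dedicated node in $\mathcal{N}(\alpha)\setminus\alpha$; the converse is immediate from \textit{Definition~\ref{TC_state_node}}. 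Writing $\mathcal{V}^{SC}$ for the set of SSC state nodes of $\mathcal{G}(T)$, the theorem becomes $\mathcal{V}^{SC}=\mathcal{V}^{S}\iff\bigcup_{i=1}^{m}\bar{\mathcal{V}}_{i}^{SC}=\mathcal{V}^{S}$.

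The \emph{only if} direction I would settle by a monotonicity observation: if $k\in\mathcal{V}^{SC}$ and $k\in\bar{\mathcal{V}}_{i}^{S}$, then the dedicated-node requirement, which $k$ meets for \emph{all} $\alpha\subseteq\mathcal{V}^{S}$ with $k\in\alpha$, is met in particular for all $\alpha\subseteq\bar{\mathcal{V}}_{i}^{S}$ with $k\in\alpha$, hence $k\in\bar{\mathcal{V}}_{i}^{SC}$. Since the extended components cover the state graph, $\bigcup_{i}\bar{\mathcal{V}}_{i}^{S}=\mathcal{V}^{S}$, this gives $\mathcal{V}^{SC}\subseteq\bigcup_{i}\bar{\mathcal{V}}_{i}^{SC}\subseteq\mathcal{V}^{S}$ unconditionally, so $\mathcal{V}^{SC}=\mathcal{V}^{S}$ forces $\bigcup_{i}\bar{\mathcal{V}}_{i}^{SC}=\mathcal{V}^{S}$.

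For the \emph{if} direction, assume $\bigcup_{i}\bar{\mathcal{V}}_{i}^{SC}=\mathcal{V}^{S}$ and, using \textit{Corollary~\ref{corollary_theorem1}} and \textit{Remark~\ref{remark_meaning_input}}, fix a nonempty $\alpha\subseteq\mathcal{V}^{S}\setminus\mathcal{N}(\mathcal{V}^{I})$; the task is to produce a dedicated node in $\mathcal{N}(\alpha)\setminus\alpha$. The idea is to \emph{localize}. Set $i^{\ast}=\max\{\,i:\mathcal{V}_{i}^{S}\cap\alpha\neq\emptyset\,\}$; because a sym-pactus is a chain of components whose bridge graphs $\mathcal{G}_{i(i+1)}^{S}$ join only consecutive indices (\textit{Definition~\ref{definition_sym_pactus}}), the vertices of $\bar{\mathcal{V}}_{i^{\ast}}^{S}$ outside $\mathcal{V}_{i^{\ast}}^{S}$ lie in $\mathcal{V}_{i^{\ast}+1}^{S}$, which $\alpha$ avoids. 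Take $k\in\alpha\cap\mathcal{V}_{i^{\ast}}^{S}$ (choosing one with $k\in\bar{\mathcal{V}}_{i^{\ast}}^{SC}$ if possible, and otherwise one with $k\in\bar{\mathcal{V}}_{i^{\ast}-1}^{SC}$, which then is forced to be a bridge endpoint), let $j\in\{i^{\ast}-1,i^{\ast}\}$ be the corresponding index and $\alpha'=\alpha\cap\bar{\mathcal{V}}_{j}^{S}$; the maximality of $i^{\ast}$ then guarantees $k\in\alpha'$ and that $\alpha\setminus\alpha'$ is confined to components of index strictly below $j$. \textit{Definition~\ref{TC_state_node}} applied to $k$ in $\bar{\mathcal{G}}_{j}$ yields a dedicated node $d$ of $\alpha'$, i.e.\ $d\notin\alpha'$ and $|\mathcal{N}_{d}\cap\alpha'|\le1$; since $\alpha'\subseteq\alpha$ we already have $d\in\mathcal{N}(\alpha)$, so what remains is to upgrade ``$d$ dedicated for $\alpha'$'' to ``$d$ dedicated for $\alpha$'', namely $d\notin\alpha$ and $|\mathcal{N}_{d}\cap\alpha|\le1$. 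I would close this with a short case check on where $d$ sits — inside $\bar{\mathcal{V}}_{j}^{S}$, an external input node, a forward node in $\mathcal{V}_{j+1}^{S}$, or the single vertex reached across the backward bridge $\mathcal{G}_{(j-1)j}^{S}$ — handling the forward positions by maximality of $i^{\ast}$ and using the injective (one-to-one) nature of bridge edges from \textit{Definition~\ref{definition_sym_pactus}} to bound $d$'s neighbors outside $\bar{\mathcal{V}}_{j}^{S}$ by its lone bridge partner. In the special case where every component is SSC and all bridges are single-edged, this reduces to an induction on \textit{Corollary~\ref{corollary_1}} and recovers \textit{Theorem~\ref{theorem_sym_pactus_sc_condition}}, while in general \textit{Lemma~\ref{lemma_path}} and \textit{Lemma~\ref{lemma_cycle}} identify $\bar{\mathcal{V}}_{i}^{SC}$ for each sym-path/sym-cycle component explicitly.

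The hard part is exactly this last upgrade: a dedicated node of $\alpha'$ need not be dedicated for the larger $\alpha$, because $d$ may pick up one more neighbor in $\alpha$ across the backward bridge, or $d$ itself may lie in $\alpha$ on the backward side of its component. Clearing this obstruction seems to require the extremal choice $i^{\ast}=\max$, the deliberately \emph{asymmetric} assignment of the forward bridge $\mathcal{G}_{i(i+1)}^{S}$ to $\bar{\mathcal{G}}_{i}$ (so that $\bar{\mathcal{V}}_{i}^{S}$ never reaches into $\mathcal{V}_{i-1}^{S}$) and bridge-edge injectivity to be used simultaneously; when a single choice of $k$ does not dispose of the backward obstruction, the natural route is to iterate the argument one component down the chain along the offending bridge endpoint, a descent that terminates since $\mathcal{G}_{1}^{S}$ has no backward bridge.
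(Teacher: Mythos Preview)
Your argument is substantially more detailed than the paper's. The paper's proof is essentially a two-line appeal to \textit{Definition~\ref{TC_state_node}} in each direction: for the \emph{if} part it simply asserts that once every state node lies in some $\bar{\mathcal{V}}_i^{SC}$ it ``has at least one dedicated node in $\mathcal{G}(T)$'' and hence $\mathcal{G}(T)$ is SSC; for the \emph{only if} part it picks $l\notin\bigcup_i\bar{\mathcal{V}}_i^{SC}$ and asserts the existence of a bad $\alpha\subseteq\mathcal{V}^S$ with $l\in\alpha$. The paper does not explicitly bridge the gap between the component-level quantifier ($\alpha\subseteq\bar{\mathcal{V}}_i^{S}$) in the definition of $\bar{\mathcal{V}}_i^{SC}$ and the global one ($\alpha\subseteq\mathcal{V}^S$) needed for SSC---precisely the ``upgrade'' you isolate.

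Your \emph{only if} direction via monotonicity is the paper's argument made explicit. For the \emph{if} direction, your localization strategy (choose $i^\ast$ maximal, restrict to $\alpha'=\alpha\cap\bar{\mathcal{V}}_j^S$, obtain a local dedicated node $d$, then upgrade to $\alpha$) is a genuinely different and more careful route than the paper's one-line appeal. However, your resolution of the hard step is only a sketch: you announce a case check on the position of $d$ and a possible descent along the chain, but do not carry either out. The obstruction you flag---$d$ lying in $\alpha\setminus\alpha'$ on the backward side, or $d$ acquiring a second $\alpha$-neighbor through the backward bridge $\mathcal{G}_{(j-1)j}^S$---is real, and your proposed descent needs a stated invariant (for instance, that after moving one component down the restricted $\alpha$ still contains an SSC state node of the new extended component) together with a verification that the new local dedicated node cannot run into the same obstruction indefinitely. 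The paper simply takes this upgrade for granted; your instinct to justify it is correct and your outline is plausible, but as written it is a plan rather than a completed proof.
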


\begin{proof}
For \textit{if} condition, suppose that the union of SSC state nodes of $\bar{\mathcal{G}}_{i},i\in\{1,...,m\}$ satisfies 
${\bigcup}^{m}_{i=1}\bar{\mathcal{V}}_{i}^{SC}=\mathcal{V}^{S}$.
According to \textit{Definition~\ref{TC_state_node}}, since all state nodes in $\mathcal{V}^{S}$ have at least one dedicated node in $\mathcal{G}(T)$, 
the graph $\mathcal{G}(T)$ is SSC.
For \textit{only if} condition, let us assume that ${\bigcup}^{m}_{i=1}\bar{\mathcal{V}}_{i}^{SC}\subsetneq\mathcal{V}^{S}$.
Then, there exists at least one node $l\in\mathcal{V}^{S}$, which is not a SSC state node, i.e., $l\notin{\bigcup}^{m}_{i=1}\bar{\mathcal{V}}_{i}^{SC}$.
Hence, there exists a case without a dedicated node in $\mathcal{N}(\alpha)$$\setminus$$\alpha$ when $\alpha$ is chosen as $l\in\alpha\subseteq\mathcal{V}^{S}$.
\end{proof}
The above \textit{Theorem~\ref{theorem_TCC}} is an interpretation of the strong sign controllability 
from each component point of view.
With the above observation, the following corollary is directly obtained:

\begin{corollary}\label{corollary_3}
The graph $\mathcal{G}(T)=\mathcal{G}^{S}\cup\mathcal{G}^{I}$ is SSC
if and only if the set of state nodes $\mathcal{V}^{S}$ satisfies $\mathcal{V}^{S}$$=$$\mathcal{V}^{SC}$.
\end{corollary}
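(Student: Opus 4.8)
The plan is to obtain \textit{Corollary~\ref{corollary_3}} directly from \textit{Corollary~\ref{corollary_theorem1}}, the definition of an SSC state node, and \textit{Remark~\ref{remark_meaning_input}}, reading $\mathcal{V}^{SC}$ as the set of SSC state nodes of the whole graph $\mathcal{G}(T)$. The conceptual point to reconcile is that ``$\mathcal{G}(T)$ is SSC'' is a statement quantified over all $\alpha\subseteq\mathcal{V}^{S}\setminus\mathcal{N}(\mathcal{V}^{I})$, while ``$\mathcal{V}^{S}=\mathcal{V}^{SC}$'' is a statement quantified over every state node $k$ and every $\alpha\ni k$; these two quantifications match once one observes that every nonempty $\alpha$ contains some node and that nodes of $\mathcal{N}(\mathcal{V}^{I})$ never obstruct the dedicated-node condition.

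For the \emph{if} direction, I would assume $\mathcal{V}^{S}=\mathcal{V}^{SC}$ and take an arbitrary nonempty $\alpha\subseteq\mathcal{V}^{S}\setminus\mathcal{N}(\mathcal{V}^{I})$. Choosing any $k\in\alpha$, the fact that $k$ is an SSC state node gives, by \textit{Definition~\ref{TC_state_node}}, a dedicated node in $\mathcal{N}(\alpha)\setminus\alpha$; the empty $\alpha$ is vacuous. Hence the hypothesis of \textit{Corollary~\ref{corollary_theorem1}} holds and $\mathcal{G}(T)$ is SSC.

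For the \emph{only if} direction I would argue by contraposition. If $\mathcal{V}^{SC}\subsetneq\mathcal{V}^{S}$, pick $l\in\mathcal{V}^{S}\setminus\mathcal{V}^{SC}$; by the remark following \textit{Definition~\ref{TC_state_node}} (state nodes in $\mathcal{N}(\mathcal{V}^{I})$ are always SSC state nodes), we have $l\notin\mathcal{N}(\mathcal{V}^{I})$. Since $l$ is not an SSC state node, there is some $\alpha\subseteq\mathcal{V}^{S}$ with $l\in\alpha$ for which $\mathcal{N}(\alpha)\setminus\alpha$ has no dedicated node; by \textit{Remark~\ref{remark_meaning_input}}, $\alpha$ cannot meet $\mathcal{N}(\mathcal{V}^{I})$, so actually $\alpha\subseteq\mathcal{V}^{S}\setminus\mathcal{N}(\mathcal{V}^{I})$. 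This $\alpha$ violates \textit{Corollary~\ref{corollary_theorem1}}, so $\mathcal{G}(T)$ is not SSC.

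The step needing the most care is the \emph{only if} direction: one must ensure the witnessing set $\alpha$ that exposes $l$ as non-SSC can be taken disjoint from $\mathcal{N}(\mathcal{V}^{I})$, so that it genuinely contradicts the reduced condition in \textit{Corollary~\ref{corollary_theorem1}}; this is precisely what \textit{Remark~\ref{remark_meaning_input}} supplies. An equally short alternative is to note that, for the sym-pactus decomposition, $\mathcal{V}^{SC}=\bigcup_{i=1}^{m}\bar{\mathcal{V}}_{i}^{SC}$ — the merging arguments in the proofs of \textit{Lemma~\ref{lemma_tree}} and \textit{Theorem~\ref{theorem_sym_pactus_sc_condition}} show bridge edges do not destroy dedicated nodes — and then invoke \textit{Theorem~\ref{theorem_TCC}} verbatim.
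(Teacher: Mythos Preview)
Your argument is correct and is essentially the explicit version of what the paper leaves implicit: the paper states that \textit{Corollary~\ref{corollary_3}} is ``directly obtained'' from the observation preceding it (i.e., from \textit{Definition~\ref{TC_state_node}} and the reasoning in the proof of \textit{Theorem~\ref{theorem_TCC}}), without writing out the details. Your use of \textit{Remark~\ref{remark_meaning_input}} to force the witnessing $\alpha$ into $\mathcal{V}^{S}\setminus\mathcal{N}(\mathcal{V}^{I})$ in the \emph{only if} direction is exactly the care the paper's one-line justification glosses over, and your alternative via \textit{Theorem~\ref{theorem_TCC}} is precisely the route the paper intends.
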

For the problem of finding the minimum number of external input nodes, we have to consider the meaning of input nodes carefully.
As shown in \textit{Remark~\ref{remark_meaning_input}}, the property of an external input node is guaranteeing a dedicated node to a state node connected with it.
From the perspective of the dedicated node, similar to the property of external input nodes, if certain structural condition is satisfied in a sym-pactus, 
there exists a case that a state node $k\in\mathcal{V}_{i}^{S}$ in $\mathcal{G}_{i}$ guarantees 
the existence of a dedicated node of a state node $l\in\mathcal{V}_{j}^{S}$
in another component $\mathcal{G}_{j}$, which is adjacent to the node of $k$, i.e., $l\in\mathcal{N}_{k}$,
we call such state nodes \textit{component input nodes}.
Thus, from a component point of veiw, the input nodes can be classified as the external input nodes and the component input nodes.

\begin{definition}\label{external_input_node} 
(External input node) A set of external input nodes in $\mathcal{G}_{i}$ is symbolically written as $\mathcal{V}_{i}^{IE}$. 
If a node $k\in\mathcal{V}_{i}^{I}$ guarantees a dedicated node of $l\in\mathcal{V}^{S}_{i}$ with a directed edge $(k,l)\in\mathcal{E}$,
the node $k$ is called an external input node of $\mathcal{G}_i$, i.e., $k\in\mathcal{V}^{IE}_{i}$.
\end{definition}


\begin{definition}\label{component_input_node} 
(Component input node) A set of component input nodes in $\mathcal{G}_{i}$ is symbolically written as $\mathcal{V}_{i}^{IC}$. 
Consider a graph $\mathcal{G}(T)=\mathcal{G}_{i}\cup\mathcal{G}_{ij}^{S}\cup\mathcal{G}_{j}$.
If a node $k\in\mathcal{V}_{i}^{S}$ guarantees a dedicated node of $l\in\mathcal{V}^{S}_{j}$ with an undirected edge $(k,l)\in\mathcal{E}_{ij}^{S}$,
the node $k$ is called a component input node of $\mathcal{G}_j$, i.e., $k\in\mathcal{V}^{IC}_{j}$.
\end{definition}

\begin{figure}[]
\centering
\subfigure[$\mathcal{G}(T)=\mathcal{G}_{1}\cup\mathcal{G}_{12}\cup\mathcal{G}_{2}$ with $|\mathcal{V}^{IE}|=2$]{
\begin{tikzpicture}[scale=0.5]
\node[] at (2.9,4.7) {\scriptsize$\mathcal{G}_1$};
\node[] at (7,6) {\scriptsize$\mathcal{G}_2$};

\node[red] at (4.7,4.2) {\scriptsize$\mathcal{G}_{12}^{S}$};

\node[place, black] (node1) at (1,5) [label=below:\scriptsize$1$] {};
\node[place, red] (node2) at (3,4) [label=below:\scriptsize$2$] {};
\node[place, red] (node3) at (5,3) [label=below:\scriptsize$3$] {};
\node[place, red] (node4) at (7,4) [label=below:\scriptsize$4$] {};
\node[place, black] (node5) at (5,5) [label=right:\scriptsize$5$] {};
\node[place, red] (node6) at (5,7) [label=above:\scriptsize$6$] {};

\node[place, black] (node7) at (7,8) [label=above:\scriptsize$7$] {};
\node[place, black] (node8) at (9,7) [label=above:\scriptsize$8$] {};
\node[place, black] (node9) at (9,5) [label=below:\scriptsize$9$] {};

\draw (node1) [line width=0.5pt] -- node [left] {} (node2);
\draw (node2) [line width=0.5pt] -- node [left] {} (node3);
\draw (node2) [red, dashed, line width=0.5pt] -- node [below] {} (node6);
\draw (node3) [red, dashed, line width=0.5pt] -- node [below] {} (node4);
\draw (node4) [line width=0.5pt] -- node [left] {} (node5);
\draw (node5) [line width=0.5pt] -- node [left] {} (node6);
\draw (node6) [line width=0.5pt] -- node [left] {} (node7);
\draw (node7) [line width=0.5pt] -- node [left] {} (node8);
\draw (node8) [line width=0.5pt] -- node [left] {} (node9);
\draw (node4) [line width=0.5pt] -- node [left] {} (node9);

\node[place, circle] (node17) at (-0.5,4) [label=below:\scriptsize$u_{1}$] {}; 
\node[place, circle] (node18) at (3.5,2) [label=below:\scriptsize$u_{2}$] {}; 
\draw (node17) [-latex, line width=0.5pt] -- node [right] {} (node1);
\draw (node18) [-latex, line width=0.5pt] -- node [right] {} (node3);
\end{tikzpicture}
}
\subfigure[$\mathcal{G}_{2}$ with $|\mathcal{V}^{IC}_{2}|=2$]{
\begin{tikzpicture}[scale=0.43]
\node[] at (7,6) {\scriptsize$\mathcal{G}_2$};

\node[place, black] (node4) at (7,4) [label=below:\scriptsize$4$] {};
\node[place, black] (node5) at (5,5) [label=below:\scriptsize$5$] {};
\node[place, black] (node6) at (5,7) [label=above:\scriptsize$6$] {};

\node[place, black] (node7) at (7,8) [label=above:\scriptsize$7$] {};
\node[place, black] (node8) at (9,7) [label=above:\scriptsize$8$] {};
\node[place, black] (node9) at (9,5) [label=below:\scriptsize$9$] {};

\draw (node4) [line width=0.5pt] -- node [left] {} (node5);
\draw (node5) [line width=0.5pt] -- node [left] {} (node6);
\draw (node6) [line width=0.5pt] -- node [left] {} (node7);
\draw (node7) [line width=0.5pt] -- node [left] {} (node8);
\draw (node8) [line width=0.5pt] -- node [left] {} (node9);
\draw (node4) [line width=0.5pt] -- node [left] {} (node9);

\node[place, circle] (node17) at (5,3) [label=below:\scriptsize$3$] {}; 
\node[place, circle] (node18) at (3,4) [label=below:\scriptsize$2$] {}; 
\draw (node4) [line width=0.5pt] -- node [left] {} (node17);
\draw (node6) [line width=0.5pt] -- node [left] {} (node18);
\end{tikzpicture}
}
\caption{Example of the SSC state nodes and the component input nodes, i.e., $\mathcal{V}^{IC}_{2}=\{ 2,3 \}$ and $\mathcal{V}^{SC}=\{ 1,2,3,4,6 \}$.}
\label{network_ex_TCinput}
\end{figure}
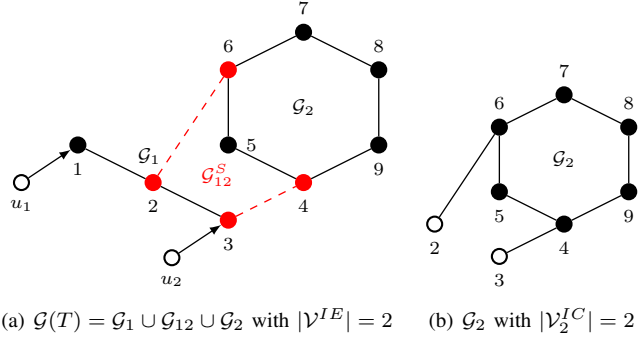

In this section, the set of input nodes $\mathcal{V}^{I}$ is re-defined as 
a union of the set of component input nodes and the set of external input nodes, 
i.e., $\mathcal{V}^{I}=\mathcal{V}^{IE}\cup\mathcal{V}^{IC}$ satisfying $\mathcal{V}^{IE}\cap\mathcal{V}^{IC}=\emptyset$.
For a sym-pactus, the following lemma provides the condition for having component input nodes.

\begin{lemma}\label{lemma_TCC}
Consider a graph $\mathcal{G}(T)=\mathcal{G}^S\cup\mathcal{G}^I$ 
with a sym-pactus $\mathcal{G}^{S}={\bigcup}^{m}_{i=1}(\mathcal{G}_{i}^{S}\cup\mathcal{G}_{ij}^{S})$ for $i,j \in \{1,...,m\}$ and $|i-j|=1$.
The state node in $\mathcal{V}_{i}^{S}\cap\mathcal{V}_{ij}^{S}$, are the component input nodes of $\mathcal{G}_{j}$, 
i.e., $\mathcal{V}_{j}^{IC}=\mathcal{V}_{i}^{S}\cap\mathcal{V}_{ij}^{S}$,
if and only if $\mathcal{G}_{i}\cup\mathcal{G}_{ij}^{S}$ is SSC.
\end{lemma}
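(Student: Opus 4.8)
The plan is to prove both directions of the biconditional by relating the "component input node" property of $\mathcal{G}_j$ directly to the SSC condition of the auxiliary graph $\bar{\mathcal{G}}_i := \mathcal{G}_i\cup\mathcal{G}_{ij}^S$, using \textit{Corollary~\ref{corollary_theorem1}} (the dedicated-node characterization of SSC) as the main tool. Throughout, write $k$ for a generic node in $\mathcal{V}_i^S\cap\mathcal{V}_{ij}^S$, i.e. a node of $\mathcal{G}_i$ that carries a bridge edge $(k,l)\in\mathcal{E}_{ij}^S$ into $\mathcal{G}_j$, and observe that adding the bridge edges to $\mathcal{G}_i$ changes only the in-neighbor sets $\mathcal{N}_k$ of these endpoints (each gains the corresponding $l\in\mathcal{V}_j^S$), exactly the bookkeeping already used in the proofs of \textit{Lemma~\ref{lemma_tree}} and \textit{Lemma~\ref{lemma_cycle}}.

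For the \emph{if} direction, assume $\mathcal{G}_i\cup\mathcal{G}_{ij}^S$ is SSC. By \textit{Corollary~\ref{corollary_theorem1}}, for every $\alpha\subseteq \bar{\mathcal{V}}_i^S\setminus\mathcal{N}(\mathcal{V}_i^I)$ the set $\mathcal{N}(\alpha)\setminus\alpha$ contains a dedicated node. I want to show each bridge endpoint $k$ is a component input node of $\mathcal{G}_j$, i.e. that it guarantees a dedicated node for the state node $l\in\mathcal{V}_j^S$ with $(k,l)\in\mathcal{E}_{ij}^S$. So take any $\alpha\subseteq\mathcal{V}^S$ with $l\in\alpha$ and consider the merged graph $\mathcal{G}(T)=\mathcal{G}_i\cup\mathcal{G}_{ij}^S\cup\mathcal{G}_j$; the relevant case (the only one not already handled by the SSC-ness of $\mathcal{G}_j$ itself via \textit{Remark~\ref{remark_meaning_input}} and the bridge-insensitivity of \textit{Corollary~\ref{corollary_1}}) is when $\alpha$ restricted to $\mathcal{V}_j^S$ has no dedicated node internal to $\mathcal{G}_j$. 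Here one argues that because $l\in\alpha$, the endpoint $k\in\mathcal{V}_i^S$ lies in $\mathcal{N}(\alpha)\setminus\alpha$ whenever $k\notin\alpha$; and if $k\in\alpha$ one instead invokes the SSC-ness of $\bar{\mathcal{G}}_i$ on the set $\alpha\cap\bar{\mathcal{V}}_i^S$ to pull a dedicated node out of $\mathcal{G}_i$ — crucially using that in $\bar{\mathcal{G}}_i$ the node $k$ already "sees" a node $l$ outside $\bar{\mathcal{V}}_i^S$, so it is treated just as the terminal-type endpoint that makes the dedicated-node count work. This shows $l$ always has a dedicated node, hence $k\in\mathcal{V}_j^{IC}$; ranging over all bridge edges gives $\mathcal{V}_j^{IC}\supseteq\mathcal{V}_i^S\cap\mathcal{V}_{ij}^S$, and the reverse containment is immediate from \textit{Definition~\ref{component_input_node}} since only nodes carrying a bridge edge into $\mathcal{G}_j$ can qualify.

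For the \emph{only if} direction, argue the contrapositive: suppose $\mathcal{G}_i\cup\mathcal{G}_{ij}^S$ is not SSC. Then by \textit{Corollary~\ref{corollary_theorem1}} there is a "bad" set $\beta\subseteq\bar{\mathcal{V}}_i^S\setminus\mathcal{N}(\mathcal{V}_i^I)$ with no dedicated node in $\mathcal{N}(\beta)\setminus\beta$ inside $\bar{\mathcal{G}}_i$. Since a bridge endpoint $k$ in $\bar{\mathcal{G}}_i$ already has the extra out-edge to $\mathcal{V}_j^S$, failure of the dedicated-node condition forces the obstruction to involve a non-bridge part of $\mathcal{G}_i$ (a sharing node of $\beta$ that is a genuine interior vertex). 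Lifting $\beta$ into the merged graph by adjoining the endpoints $l\in\mathcal{V}_j^S$ of the bridge edges (i.e. putting those $l$ into $\alpha$) produces an $\alpha\subseteq\mathcal{V}^S$ containing some $l$ for which $\mathcal{N}(\alpha)\setminus\alpha$ still has no dedicated node: the bridge endpoints $k$ fall into $\alpha$ or become sharing nodes, and no new dedicated node is created on the $\mathcal{G}_j$ side because the only thing the bridge does is move the $k$'s. Hence $k$ fails to guarantee a dedicated node for $l$, so $k\notin\mathcal{V}_j^{IC}$, i.e. $\mathcal{V}_j^{IC}\neq\mathcal{V}_i^S\cap\mathcal{V}_{ij}^S$.

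The main obstacle I expect is the bookkeeping in the \emph{if} direction when a bridge endpoint $k$ itself lies in the test set $\alpha$: one has to be careful that the dedicated node supplied by the SSC-ness of $\bar{\mathcal{G}}_i$ for the set $\alpha\cap\bar{\mathcal{V}}_i^S$ is still a dedicated node in the full merged graph $\mathcal{G}(T)$ — that its in-neighbor intersection with $\alpha$ is not inflated by nodes living in $\mathcal{G}_j$. This is where the structural hypotheses of a sym-pactus (each $\mathcal{G}_i^S$ a sym-path or sym-cycle, and the injective/one-to-one bridge structure of \textit{Definition~\ref{definition_Bridge_graph}}) do the real work: they guarantee the only cross-component adjacencies are precisely the bridge edges, so a vertex of $\mathcal{G}_i$ other than a bridge endpoint has the same neighborhood in $\mathcal{G}(T)$ as in $\bar{\mathcal{G}}_i$, and a bridge endpoint's extra neighbor is a single, controlled vertex $l\in\mathcal{V}_j^S$. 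Making that separation precise — essentially, that "dedicated in $\bar{\mathcal{G}}_i$" implies "dedicated in $\mathcal{G}(T)$" for the relevant witnesses — is the crux, and once it is established both directions follow by the same dedicated-node counting already used for \textit{Lemma~\ref{lemma_path}}, \textit{Lemma~\ref{lemma_cycle}}, and \textit{Corollary~\ref{corollary_1}}.
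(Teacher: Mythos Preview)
Your outline is in the right direction but works considerably harder than the paper does, and in one place the bookkeeping goes slightly astray.

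The paper's proof does not unwind \textit{Corollary~\ref{corollary_theorem1}} directly. Instead it routes everything through the notion of \emph{SSC state node} (\textit{Definition~\ref{TC_state_node}}) and \textit{Corollary~\ref{corollary_3}}: a graph is SSC iff every state node is an SSC state node. Applied to $\bar{\mathcal G}_i=\mathcal G_i\cup\mathcal G_{ij}^S$, this says SSC of $\bar{\mathcal G}_i$ is equivalent to every node of $\mathcal V_i^S\cup\mathcal V_{ij}^S$ --- in particular every bridge endpoint $l\in\mathcal V_j^S\cap\mathcal V_{ij}^S$ --- being an SSC state node. But ``$l$ is an SSC state node'' is exactly the statement that its neighbor $k\in\mathcal V_i^S\cap\mathcal V_{ij}^S$ guarantees a dedicated node for $l$, i.e.\ $k\in\mathcal V_j^{IC}$ by \textit{Definition~\ref{component_input_node}}. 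Both directions then fall out in a line or two; the \emph{only if} direction additionally observes that if $\mathcal G_i$ is already SSC, any $\alpha$ witnessing the failure of SSC for $\bar{\mathcal G}_i$ must contain some $l\in\mathcal V_j^S\cap\mathcal V_{ij}^S$ (since subsets of $\mathcal V_i^S$ alone still have dedicated nodes). Your case analysis on whether $k\in\alpha$ and your ``lifting $\beta$ to the merged graph'' step are re-deriving this equivalence from scratch rather than invoking the packaged concept.

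One concrete slip in your \emph{only if} sketch: you write that failure of the dedicated-node condition ``forces the obstruction to involve a non-bridge part of $\mathcal G_i$.'' It is the opposite: assuming $\mathcal G_i$ itself is SSC, the bad set $\beta$ must contain a bridge endpoint on the $\mathcal G_j$ side, i.e.\ some $l\in\mathcal V_j^S\cap\mathcal V_{ij}^S$, because any $\beta\subseteq\mathcal V_i^S$ inherits a dedicated node from the SSC-ness of $\mathcal G_i$. That is precisely what lets you conclude the corresponding $k$ fails to be a component input node. Your careful attention to whether a dedicated witness in $\bar{\mathcal G}_i$ remains dedicated in the full $\mathcal G(T)$ is more scrupulous than the paper (which treats the component-input-node property relative to $\bar{\mathcal G}_i$ without explicitly lifting), but once you phrase things via SSC state nodes that concern largely dissolves.
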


\begin{proof}
Note that if $\mathcal{G}_{i}\cup\mathcal{G}_{ij}^{S}$ is SSC, then  $\mathcal{G}_{i}$ is also SSC while the converse is not satisfied.
For \textit{if} condition,
let us suppose that $\mathcal{G}_{i}\cup\mathcal{G}_{ij}^{S}$ is SSC.
It follows from \textit{Corollary~\ref{corollary_3}} that all nodes in $\mathcal{V}_{i}^{S}\cup\mathcal{V}_{ij}^{S}$ are SSC state nodes.
Obviously, all the pairs of bridge nodes $k,l\in\mathcal{V}_{ij}^{S}$ satisfying $k\in\mathcal{N}_{l}$, are also SSC state nodes.
It means that the node $k\in\mathcal{V}^{S}_{i}$ guarantees the existence of a dedicated node of the node $l\in\mathcal{V}^{S}_{j}$.
Therefore, according to \textit{Definition~\ref{component_input_node}}, the bridge nodes in $\mathcal{V}_{i}^{S}\cap\mathcal{V}_{ij}^{S}$ are the component input nodes of $\mathcal{G}_{j}$,
i.e., $\mathcal{V}_{j}^{IC}=\mathcal{V}_{i}^{S}\cap\mathcal{V}_{ij}^{S}$.

For \textit{only if} condition, 
suppose that $\mathcal{G}_{i}\cup\mathcal{G}_{ij}^{S}$ is not SSC.
Then, since it is trivial when $\mathcal{G}_{i}$ is not SSC, consider the graph $\mathcal{G}_{i}$ is SSC.
Then, the bridge nodes in $\mathcal{V}_{i}^{S}\cap\mathcal{V}_{ij}^{S}$ are SSC state node.
However, the bridge nodes in $\mathcal{V}_{j}^{S}\cap\mathcal{V}_{ij}^{S}$ are not SSC state node.
Therefore, there exists a case without a dedicated node when $\alpha\subseteq\mathcal{V}^{S}$ includes a node in $\mathcal{V}_{j}^{S}\cap\mathcal{V}_{ij}^{S}$.
\end{proof}

Note that if \textit{Lemma~\ref{lemma_TCC}} is satisfied, the set of component input nodes is defined as $\mathcal{V}^{IC}_{j}=\mathcal{V}^{S}_{i}\cap\mathcal{V}^{S}_{ij}$.
For example, in the graph $\mathcal{G}(T)$ depicted in Fig.~\ref{network_ex_TCinput}(a), 
consider the SSC subgraph $\mathcal{G}_{1}\cup\mathcal{G}_{12}^{S}$ satisfying \textit{Lemma~\ref{lemma_tree}}.
Then, according to \textit{Lemma~\ref{lemma_TCC}}, the nodes $2,3\in\mathcal{V}_{1}^{S}\cap\mathcal{V}_{12}^{S}$ are 
the component input nodes of $\mathcal{G}_{2}$, i.e., $\mathcal{V}_{2}^{IC}=\{2,3\}$.
From the viewpoint of $\mathcal{G}_{2}$, the graph depicted in Fig.~\ref{network_ex_TCinput}(a) can be expressed as shown in Fig.~\ref{network_ex_TCinput}(b).
Note that the role of external input nodes and component input nodes is equivalent from a perspective of guaranteeing a dedicated node of a state node.
Now, let us suppose that the graph $\mathcal{G}_{2}$ in Fig.~\ref{network_ex_TCinput}(b) satisfies \textit{Lemma~\ref{lemma_cycle}}
by an additional properly located external input node connected to one of the nodes $5,7$, and $9$.
Then, all state nodes in $\mathcal{G}(T)$ become SSC state nodes, it follows from \textit{Corollary~\ref{corollary_3}} that $\mathcal{G}(T)$ is SSC.
In this manner, for a general type of sym-pactus state graph, based on the aforementioned component input nodes,
a SSC graph with the minimum number of external input nodes can be designed by adding additional properly located external input nodes.
In this paper, for a graph $\mathcal{G}(T)=\mathcal{G}^{S}\cup\mathcal{G}^{I}$, the minimum number of external input nodes in $\mathcal{V}^{I}$ 
for the strong sign controllability is symbolically written as $\min{|\mathcal{V}^{IE}|}$.
For a certain type of sym-pactus,
the following theorem provides a merging algorithm of polynomial computational complexity
that can uniquely determine the minimum number of external input nodes while maintaining the strong sign controllability.

\begin{theorem} \label{theorem_sym_pactus_nc_condition}
Let a state graph $\mathcal{G}^{S}$ be a sym-pactus with $\mathcal{E}_{ij}^{S}$ satisfying $|\mathcal{E}_{ij}^{S}|=1$ 
and assume that $\mathcal{G}_{i}^{S}\cup\mathcal{G}_{ij}^{S}$ does not contain a tree for $i,j\in\{ 1,...,m\}$ and $|i-j|=1$. 
Then, the output graph $\mathcal{G}(T)^{out}$ in \textit{Algorithm~\ref{algorithm_graph_merging_min}} is SSC satisfying $\min{|\mathcal{V}^{IE}|=c+1}$, 
where $c$ is the number of sym-cycle components.
\end{theorem}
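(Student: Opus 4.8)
The plan is to read \textit{Algorithm~\ref{algorithm_graph_merging_min}} as a single left-to-right sweep over the base components in the meta-path order $\mathcal{G}_1,\mathcal{G}_2,\dots,\mathcal{G}_m$, and to split the claim into three parts: (i) the output graph $\mathcal{G}(T)^{out}$ is SSC, (ii) it introduces exactly $c+1$ external input nodes, and (iii) $c+1$ is minimal. The engine of the sweep is \textit{Lemma~\ref{lemma_TCC}}: once $\mathcal{G}_{i-1}\cup\mathcal{G}_{(i-1)i}^{S}$ has been made SSC, the $(i-1)$-side bridge node becomes a component input node of $\mathcal{G}_i$, so from the viewpoint of $\mathcal{G}_i$ its incoming bridge node already ``has a dedicated node'' and behaves exactly like a state node attached to an external input; hence at step $i$ the algorithm only has to add enough external inputs to render $\bar{\mathcal{G}}_i^{S}=\mathcal{G}_i^{S}\cup\mathcal{G}_{i(i+1)}^{S}$ SSC.

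For (i) I would argue by induction on $i$ that $\bar{\mathcal{G}}_i$, with the inputs placed by the algorithm on or adjacent to $\mathcal{G}_i$, is SSC. Base case $i=1$: $\mathcal{G}_1$ receives no incoming component input node, so the algorithm puts one external input at a terminal state node of $\mathcal{G}_1^{S}$ if it is a sym-path (\textit{Lemma~\ref{lemma_path}}) and two external inputs with adjacent out-neighbours on the cycle if it is a sym-cycle (\textit{Lemma~\ref{lemma_cycle}}); since $\mathcal{G}_{12}^{S}$ is a single edge, the no-tree hypothesis forces it to attach at a terminal of the path (or anywhere on the cycle), and then adjoining the single pendant bridge node keeps $\bar{\mathcal{G}}_1$ SSC, the pendant being trivially an SSC state node because its unique neighbour in $\mathcal{G}_1^{S}$ is a dedicated node (as in the proof of \textit{Corollary~\ref{corollary_1}}). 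Inductive step: $\mathcal{G}_i$ arrives with a component input node at its incoming bridge node $l$; if $\mathcal{G}_i^{S}$ is a sym-path the no-tree hypothesis makes $l$ a terminal, so $\mathcal{G}_i$ needs no further input (\textit{Lemma~\ref{lemma_path}}, using that component and external input nodes play the same role); if $\mathcal{G}_i^{S}$ is a sym-cycle the algorithm adds one external input whose out-neighbour is a cycle node adjacent to $l$, making the two input-equivalent out-neighbours adjacent on the cycle, so \textit{Lemma~\ref{lemma_cycle}} (equivalently \textit{Corollary~\ref{corollary_2}}) applies; adjoining the single outgoing bridge edge preserves SSC as before. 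Thus every $\bar{\mathcal{G}}_i$ is SSC, so $\bigcup_{i=1}^{m}\bar{\mathcal{V}}_i^{SC}=\mathcal{V}^{S}$, and \textit{Theorem~\ref{theorem_TCC}} yields that $\mathcal{G}(T)^{out}$ is SSC.

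For (ii), counting is immediate from the inductive description: $\mathcal{G}_1$ contributes $1$ or $2$ according as it is a sym-path or a sym-cycle, while for $i\ge2$ component $\mathcal{G}_i$ contributes $1$ if a sym-cycle and $0$ if a sym-path; the total telescopes to $c+1$ in both cases (if $\mathcal{G}_1$ is a path all $c$ cycles lie among $\mathcal{G}_2,\dots,\mathcal{G}_m$; if $\mathcal{G}_1$ is a cycle then $2+(c-1)=c+1$). The remaining part (iii), optimality $\min|\mathcal{V}^{IE}|\ge c+1$, is the step I expect to be the real obstacle. By \textit{Corollary~\ref{corollary_3}} an SSC graph forces every state node to be an SSC state node, hence every base component to have its input demand met, where the admissible ``input-equivalents'' of a component are the out-neighbours of external inputs attached to it together with the component input nodes it receives through its (at most two) incident bridges; by \textit{Lemma~\ref{lemma_TCC}} a bridge can hand a component input node to a component only if the sub-pactus on the other side of that bridge is itself SSC, so component inputs are ``manufactured upstream''. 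I would turn this into an induction on $m$ with the strengthened statement: an SSC chain $\mathcal{G}_1-\cdots-\mathcal{G}_m$ needs $\ge c+1$ external inputs, and the same chain, when it is allowed to borrow one component input node at an end, needs $\ge c$. Splitting on whether the last bridge $\mathcal{G}_{(m-1)m}^{S}$ carries a component input node into $\mathcal{G}_m$ or into $\mathcal{G}_{m-1}$ (the remaining case is readily seen to be non-optimal), and using \textit{Lemma~\ref{lemma_path}} / \textit{Lemma~\ref{lemma_cycle}} for the leaf component $\mathcal{G}_m$ (a sym-path fed by one input at a terminal is SSC, a sym-cycle needs two inputs at adjacent nodes), each case reduces to the inductive hypothesis on $\mathcal{G}_1-\cdots-\mathcal{G}_{m-1}$ and gives $\ge c+1$, with the extra unit over the per-cycle count always coming from the ``root'' sub-chain, which is fed by no bridge.

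It is exactly in the optimality argument that the hypotheses $|\mathcal{E}_{ij}^{S}|=1$ and ``no tree'' are needed: a second bridge edge between consecutive components (as occurs inside a sym-cycle, cf.\ \textit{Corollary~\ref{corollary_2}}) or a bridge edge attached at an interior node of a sym-path would change that component's demand and invalidate the recursion. Finally, polynomial complexity of \textit{Algorithm~\ref{algorithm_graph_merging_min}} is clear from the sweep description: it is one pass over the $m$ components, each step only checking a sym-path/sym-cycle condition and placing $O(1)$ input nodes, for $O(|\mathcal{V}|)$ total work.
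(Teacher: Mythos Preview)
Your proposal is correct and follows essentially the same line as the paper: a left-to-right sweep over the components, using \textit{Lemma~\ref{lemma_TCC}} to convert the $(i-1)$-side bridge node into a component input for $\mathcal{G}_i$, then invoking \textit{Lemma~\ref{lemma_path}} or \textit{Lemma~\ref{lemma_cycle}} per component and concluding SSC via \textit{Theorem~\ref{theorem_TCC}}; the count $c+1$ is obtained exactly as you do it.

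The one place you diverge is part~(iii), minimality. The paper does not run your strengthened induction on $m$; it simply observes that in each step of \textit{Algorithm~\ref{algorithm_graph_merging_min}} the component receives the maximum possible number of component input nodes (namely one, from the unique incoming bridge), and then is topped up with the minimum number of external inputs needed to satisfy \textit{Lemma~\ref{lemma_path}} or \textit{Lemma~\ref{lemma_cycle}}. From this the paper concludes directly that the total $c+1$ is minimal. Your inductive lower-bound argument, with the auxiliary ``chain fed by one component input at an end needs $\ge c$'' hypothesis and the case split on which side of the last bridge supplies the component input, is a genuinely more careful justification of the same fact; it makes explicit the global lower bound that the paper's per-step greedy reasoning leaves implicit. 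Either way the hypotheses $|\mathcal{E}_{ij}^{S}|=1$ and ``no tree'' enter exactly where you say they do.
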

\begin{proof}
For \textit{if} condition, let us classify the components $\mathcal{G}_{i}^{S},i\in\{1,...,m\}$ in sym-pactus $\mathcal{G}^{S}$  as $p$ sym-path and $c$ sym-cycle components satisfying $m=p+c$. 
For $\mathcal{G}_{1}$, since the component input nodes can not exist in $\mathcal{G}_{1}$, i.e., $\mathcal{V}^{IC}_{1}=\emptyset$, it needs to be SSC with only external input nodes.
Then, let $\mathcal{G}_{1}$ be a sym-path,
according to \textit{Lemma~\ref{lemma_path}}, a properly located external input node is needed to be SSC, i.e., $|\mathcal{V}_{1}^{IE}|=1$. 
Conversely, if $\mathcal{G}_{1}$ is a sym-cycle, two properly located external input nodes are required to satisfy \textit{Lemma~\ref{lemma_cycle}}, i.e., $|\mathcal{V}_{1}^{IE}|=2$. 
Then, all state nodes in $\mathcal{G}_{1}$ are the SSC state nodes, i.e., $\mathcal{V}_{1}^{S}=\mathcal{V}_{1}^{SC}$.
Since the assumption that $\mathcal{G}^{S}_{j-1}\cup\mathcal{G}_{(j-1)j}^{S},j\in\{2,...,m\}$, does not contain a tree and the condition of $|\mathcal{E}_{ij}^{S}|=1$,
each component $\mathcal{G}_{j},j\in\{2,...,m\}$ has a component input node $\mathcal{V}_{j}^{IC}=\mathcal{V}_{j-1}^{S}\cap\mathcal{V}_{(j-1)j}^{S}$.
Thus, if $\mathcal{G}_{j}$ is a sym-path, it satisfies \textit{Lemma~\ref{lemma_path}} without additional external input nodes
because $\mathcal{G}_{j}\cup\mathcal{G}_{(j-1)j}$ has a properly located component input node.
Therefore, $\mathcal{G}_{j}$ needs an additional external input node 
only when $\mathcal{G}_{j}$ is a sym-cycle, i.e., $|\mathcal{V}_{j}^{IE}|=|\mathcal{V}_{j}^{IC}|=1$.
After adding an external input node, all state nodes in $\mathcal{G}_{j}$ are the SSC state nodes, i.e., $\mathcal{V}_{j}^{S}=\mathcal{V}_{j}^{SC}$ for $j\in\{2,...,m\}$.
According to \textit{Theorem~\ref{theorem_TCC}}, the graph $\mathcal{G}$ is SSC 
since the union set of the SSC state nodes of each component $\mathcal{G}_{i},i\in\{1,...,m\}$, satisfies ${\bigcup}^{m}_{i=1}\mathcal{V}_{i}^{SC}=\mathcal{V}^{S}$.
Furthermore, the SSC graph $\mathcal{G}(T)$ has the minimum number of external input nodes 
since each component (sym-path or sym-cycle) has the minimum number of input nodes containing all possible component input nodes
$|\mathcal{V}^{I}|=|\mathcal{V}^{IE}|+|\mathcal{V}^{IC}|$ to satisfy \textit{Lemma~\ref{lemma_path}} and \textit{Lemma~\ref{lemma_cycle}}.
Therefore, the minimum number of external input nodes for strong sign controllability of the graph $\mathcal{G}(T)$ is $c+1$.

For \textit{only if} condition, 
in \textit{Algorithm~\ref{algorithm_graph_merging_min}}, 
each component $\mathcal{G}_{i}, i\in\{1,...,m\}$ has the minimum number of external input nodes with all possible component input nodes
based on \textit{Lemma~\ref{lemma_path}}, \textit{Lemma~\ref{lemma_cycle}}, and \textit{Lemma~\ref{lemma_TCC}}.
Therefore, if more than necessary external input nodes are added in one of each step, 
the total number of external input nodes satisfies ${\bigcup}^{m}_{i=1}\min{|\mathcal{V}^{IE}_{i}|}>c+1$.
Note that the locations of external input nodes are not unique.
\end{proof}
\begin{algorithm}
\caption{}\label{algorithm_graph_merging_min}
\begin{algorithmic}[1]
\Procedure{}{}
\State $i=0, c=0, p=0$
 \For{$i=i+1$} 
  \Switch{graph type of $\mathcal{G}_{i}^{S}$}
    \Case{\textit{sym-path}}
\State $p:=p+1$
\If { $i=1$} 
\State add $\mathcal{V}^{IE}_{i},|\mathcal{V}^{IE}_{i}|=1$ to satisfy \textit{Lemma~\ref{lemma_path}}
\Else ~break
\EndIf
\State ~$\mathcal{G}_{i}:=\mathcal{G}_{i}^{S}\cup\mathcal{G}_{i}^{I}$
    \EndCase
    \Case{\textit{sym-cycle}}
\State $c:=c+1$
\If { $i=1$} 
\State add $\mathcal{V}^{IE}_{i},|\mathcal{V}^{IE}_{i}|=2$ to satisfy \textit{Lemma~\ref{lemma_cycle}}
\Else  
\State add $\mathcal{V}^{IE}_{i},|\mathcal{V}^{IE}_{i}|=1$ to satisfy \textit{Lemma~\ref{lemma_cycle}}
\EndIf
\State ~$\mathcal{G}_{i}:=\mathcal{G}_{i}^{S}\cup\mathcal{G}_{i}^{I}$
    \EndCase
  \EndSwitch
\If { $i < m$}
    \State determine $\mathcal{V}_{i+1}^{IC} :=\mathcal{V}_{i}^{S}\cap\mathcal{V}_{i(i+1)}^{S}$
    \Else 
    \State check $m=p+c$
     \textit{goto} \textit{end}
    \EndIf
  \EndFor   \Comment{End for $i$}
\BState \emph{end}:
\State \textit{output} 
$\mathcal{G}(T)^{out}=\bigcup^{m}_{i=1}\mathcal{G}_{i}$ with $min|\mathcal{V}^{IE}|=c+1$
\EndProcedure
\end{algorithmic}
\end{algorithm}

\begin{corollary} \label{corollary_sym_pactus_nc_condition}
Let a state graph $\mathcal{G}^{S}$ be a sym-pactus with $\mathcal{E}_{ij}^{S}$ satisfying $|\mathcal{E}_{ij}^{S}|=1$ 
and assume that each component $\mathcal{G}_{i}^{S}$ is a sym-cycle. 
Then, the output graph $\mathcal{G}(T)^{out}$ in \textit{Algorithm~\ref{algorithm_graph_merging_min}} is SSC satisfying $\min{|\mathcal{V}^{IE}|=c+1}$.
\end{corollary}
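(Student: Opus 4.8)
The plan is to derive Corollary~\ref{corollary_sym_pactus_nc_condition} as a direct specialization of \textit{Theorem~\ref{theorem_sym_pactus_nc_condition}}. The only thing that needs checking is that the running hypothesis of that theorem --- namely that $\mathcal{G}_{i}^{S}\cup\mathcal{G}_{ij}^{S}$ does not contain a tree (i.e., is not itself a tree) for every $i,j$ with $|i-j|=1$ --- is automatically satisfied once every component $\mathcal{G}_{i}^{S}$ is assumed to be a sym-cycle. Indeed, by \textit{Definition~\ref{definition_sym_cycle}} a sym-cycle has $n\ge3$ nodes and contains the cycle $\{(i,i+1):|i-j|=1\}\cup\{(1,n),(n,1)\}$, so $\mathcal{G}_{i}^{S}$ already contains a cycle; adjoining the single bridge edge of $\mathcal{G}_{ij}^{S}$ only enlarges this subgraph, hence $\mathcal{G}_{i}^{S}\cup\mathcal{G}_{ij}^{S}$ still contains a cycle and is therefore not a tree. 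This verifies the assumption of \textit{Theorem~\ref{theorem_sym_pactus_nc_condition}}.

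With the hypothesis in place, I would simply invoke \textit{Theorem~\ref{theorem_sym_pactus_nc_condition}}: the output graph $\mathcal{G}(T)^{out}$ produced by \textit{Algorithm~\ref{algorithm_graph_merging_min}} is SSC and satisfies $\min|\mathcal{V}^{IE}|=c+1$, where $c$ is the number of sym-cycle components. Since the present corollary assumes that \emph{every} component is a sym-cycle, one has $c=m$ and hence $\min|\mathcal{V}^{IE}|=m+1$; I would state this explicitly for clarity.

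For completeness I would spell out how the count $c+1$ arises when tracing \textit{Algorithm~\ref{algorithm_graph_merging_min}} on this all-sym-cycle input: the first component $\mathcal{G}_{1}$, having no component input node ($\mathcal{V}_{1}^{IC}=\emptyset$), needs two properly located external input nodes to satisfy \textit{Lemma~\ref{lemma_cycle}}; by \textit{Lemma~\ref{lemma_TCC}}, once $\mathcal{G}_{j-1}\cup\mathcal{G}_{(j-1)j}^{S}$ is SSC the bridge node in $\mathcal{V}_{j-1}^{S}\cap\mathcal{V}_{(j-1)j}^{S}$ becomes a component input node of $\mathcal{G}_{j}$, so each subsequent sym-cycle $\mathcal{G}_{j}$ ($2\le j\le m$) requires only one further external input node to meet \textit{Lemma~\ref{lemma_cycle}}. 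Summing gives $2+(m-1)=m+1=c+1$, and \textit{Theorem~\ref{theorem_TCC}} (equivalently \textit{Corollary~\ref{corollary_3}}) then certifies that the union of SSC state nodes exhausts $\mathcal{V}^{S}$, so $\mathcal{G}(T)^{out}$ is SSC. The main --- and essentially only --- obstacle is the bookkeeping that guarantees \textit{Lemma~\ref{lemma_TCC}} is applicable at every merging step, i.e., that after the algorithm equips $\mathcal{G}_{j-1}$ with its external inputs the graph $\mathcal{G}_{j-1}\cup\mathcal{G}_{(j-1)j}^{S}$ is genuinely SSC; but this is precisely what the inductive argument in \textit{Theorem~\ref{theorem_sym_pactus_nc_condition}} already establishes, so no new reasoning is needed beyond checking the tree-free hypothesis above.
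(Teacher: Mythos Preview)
Your proposal is correct and follows exactly the paper's intent: the paper states this corollary immediately after \textit{Theorem~\ref{theorem_sym_pactus_nc_condition}} without a separate proof, so it is meant as a direct specialization of that theorem, and your verification that the ``no tree'' hypothesis is automatic when every $\mathcal{G}_{i}^{S}$ is a sym-cycle is precisely the missing check. The additional walkthrough of the $2+(m-1)=c+1$ count is faithful to the proof of \textit{Theorem~\ref{theorem_sym_pactus_nc_condition}} and adds nothing new beyond what that proof already establishes.
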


As an example of \textit{Algorithm~\ref{algorithm_graph_merging_min}}, 
the state graph $\mathcal{G}^{S}$ depicted in Fig.~\ref{network_ex_th2}(a) is a sym-pactus with $\mathcal{E}_{ij}^{S}$ satisfying $|\mathcal{E}_{ij}^{S}|=1$ for $i,j\in\{1,2,3,4\}$ and $|i-j|=1$. 
Thus, $\mathcal{G}$ can be induced by $4$-disjoint components $\mathcal{G}_{i}, i\in\{1,2,3,4\}$ with a bridge edge. 
For $\mathcal{G}_{1}$, since $\mathcal{G}_{1}^{S}$ is a sym-path, 
it needs an additional external input node $u_{1}\in\mathcal{V}^{IE}_{1}$ connected to the terminal state node $1$ to satisfy \textit{Lemma~\ref{lemma_path}}. 
Next, according to \textit{Lemma~\ref{lemma_TCC}}, $\mathcal{G}_{2}$ has a component input node $3\in\mathcal{V}^{IC}_{2}$ from $\mathcal{G}_{1}$.
Thus, the sym-cycle $\mathcal{G}_{2}^{S}$ requires an additional external input node $u_{2}\in\mathcal{V}^{IE}_{2}$ connected to the node $5$ 
to satisfy \textit{Lemma~\ref{lemma_cycle}} as shown in Fig.~\ref{network_ex_th2}(c).
In the same way, $\mathcal{G}_{3}$  has a component input node $9\in\mathcal{V}^{IC}_{3}$ from $\mathcal{G}_{2}$, 
so an additional external input node $u_{3}\in\mathcal{V}^{IE}_{3}$ connected to the node 10 is necessary to make $\mathcal{G}_{3}$ SSC as shown in Fig.~\ref{network_ex_th2}(c).
Also, $\mathcal{G}_{4}$ has a component input node $11\in\mathcal{V}^{IC}_{4}$ from $\mathcal{G}_{3}$,
thus, $\mathcal{G}_{4}$ requires an additional external input node $u_{4}\in\mathcal{V}^{IE}_{4}$ at one of the nodes 13 and 15 (we choose node 13) as shown in Fig.~\ref{network_ex_th2}(c).
Therefore, the minimum number of external input nodes is $4$ to satisfy \textit{Theorem~\ref{theorem_TCC}} as shown in Fig.~\ref{network_ex_th2}(c),
i.e., $\mathcal{V}^{IE}={\bigcup}^{4}_{i=1}\mathcal{V}_{i}^{IE}=\{ u_{1},u_{2},u_{3},u_{4}\}$.

As an extension of \textit{Algorithm~\ref{algorithm_graph_merging_min}},
we propose an algorithm, 
which is applicable to a general type of a sym-pactus.
The algorithm starts by a \textit{decomposition process}, which is a decomposition of a sym-pactus into basic components.
In \textit{decomposition process}, the graph $\bar{\mathcal{G}}_{i}^{S}$ denotes a union state graphs of $i$-th component and its bridge graph, 
i.e., $\bar{\mathcal{G}}_{i}^{S} := \mathcal{G}_{i}^{S}\cup\mathcal{G}_{i(i+1)}^{S}$.
Also, we classify graph types of  $\bar{\mathcal{G}}_{i}^{S}$ as \textit{path-type}, \textit{tree-type}, and \textit{cycle-type}.
The graph $\bar{\mathcal{G}}_{i}^{S}$ is \textit{path-type} and \textit{tree-type} if $\bar{\mathcal{G}}_{i}^{S}$ is a path and a tree graph, respectively.
Otherwise, if $\bar{\mathcal{G}}_{i}^{S}$ contains a cycle graph, then $\bar{\mathcal{G}}_{i}^{S}$ is \textit{cycle-type}.
In particular, let us consider a graph $\mathcal{G}_{i}\cup\mathcal{G}_{ij}^{S}$ with $\mathcal{V}^{I}$ satisfying $|\mathcal{V}^{I}|=m\geq 2$,
where $\mathcal{G}_{i}^{S}$ is a sym-cycle.
Then, it follows from \textit{Corollary~\ref{corollary_2}} that if $\mathcal{G}_{i}\cup\mathcal{G}_{ij}^{S}$ can be induced by $m$-disjoint sym-paths satisfying \textit{Lemma~\ref{lemma_path}}, 
the graph $\mathcal{G}_{i}\cup\mathcal{G}_{ij}^{S}$ is SSC.
In \textit{merging process}, the minimum number of external input nodes of each component is sequentially added at the proper locations step-by-step 
based on \textit{Lemma~\ref{lemma_path}}, \textit{Lemma~\ref{lemma_tree}}, and \textit{Lemma~\ref{lemma_cycle}}.
Then, according to \textit{Lemma~\ref{lemma_TCC}}, the component input nodes for each step are determined as $\mathcal{V}_{i+1}^{IC} :=\mathcal{V}_{i}^{S}\cap\mathcal{V}^{S}_{i(i+1)}$.
Based on the aforementioned statements, the graph merging algorithm for a general type of a sym-pactus can be produced as \textit{Algorithm~\ref{graph_merging}}.

\begin{remark}\label{remark_complexity}
The \textit{Algorithm~\ref{algorithm_graph_merging_min}} and \textit{Algorithm~\ref{graph_merging}} provide 
a graph theoretic method to ensure the strong sign controllability of $\mathcal{G}(T)=(\mathcal{V},\mathcal{E})$ with the minimum number of external input nodes.
Similarly, for structural controllability, a method of finding the minimum number of leaders (inputs) has been developed in \cite{guan2021structural}, and
the complexity of the Theorem 2 in \cite{guan2021structural} is $\mathcal{O}(|\mathcal{V}|+|\mathcal{E}|)$.
However, the complexity of the \textit{Algorithm~\ref{algorithm_graph_merging_min}} and \textit{Algorithm~\ref{graph_merging}} is $\mathcal{O}(m)$,
where $m$ is the number of disjoint components constituting a sym-pactus. 
Note that $m$ is smaller than $|\mathcal{V}|$ because the simplest structure of a sym-pactus must have a sym-cycle type of disjoint component containing at least three state nodes (see \textit{Definition~\ref{definition_sym_cycle}}).
\end{remark}

the exact number of calculations

\begin{algorithm}
\caption{}\label{graph_merging}
\begin{algorithmic}[1]
\Procedure{}{}
\BState \emph{decomposition}:
\State a sym-pactus $\mathcal{G}^{S}$ := ${\bigcup}^{m}_{i=1}(\mathcal{G}^{S}_{i}\cup\mathcal{G}_{i(i+1)}^{S})$,  $\mathcal{V}^{IE}=\emptyset$ 
\State $i=0$
 \For{$i=i+1$} 
     \If { $i < m$}
    \State $\bar{\mathcal{G}}_{i}^{S} := \mathcal{G}_{i}^{S}\cup\mathcal{G}_{i(i+1)}^{S}$
    \Else ~$\bar{\mathcal{G}}_{i}^{S} := \mathcal{G}_{i}^{S}$
    \State 
     \textit{goto} \textit{merging}
    \EndIf
  \EndFor   \Comment{End for $i$}
\BState \emph{merging}:
\State $i=0$
 \For{$i=i+1$} 
\If {$\bar{\mathcal{G}}_{i}^{S}$ is \textit{path-type}} k=1
\ElsIf {$\bar{\mathcal{G}}_{i}^{S}$ is \textit{tree-type}} k=2
\Else ~k=3
\EndIf
\If { $\bar{\mathcal{G}}_{i}^{S}$ is SSC} break;
\Else   ~add $\mathcal{V}^{IE}_{i}$ to satisfy \textit{Lemma k}
\State ~update $\bar{\mathcal{G}}_{i}^{I}:=\mathcal{G}_{i}^{I}$
\EndIf
\State ~$\bar{\mathcal{G}}_{i}:=\bar{\mathcal{G}}_{i}^{S}\cup\bar{\mathcal{G}}_{i}^{I}$
\If { $i < m$}
    \State determine $\mathcal{V}_{i+1}^{IC} :=\mathcal{V}_{i}^{S}\cap\mathcal{V}^{S}_{i(i+1)}$
    \Else 
    \State 
     \textit{goto} \textit{end}
    \EndIf
  \EndFor   \Comment{End for $i$}
\BState \emph{end}:
\State \textit{output} $\mathcal{G}^{out}=\bigcup^{m}_{i=1}\bar{\mathcal{G}}_{i}$  with $min|\mathcal{V}^{IE}|$
\EndProcedure
\end{algorithmic}
\end{algorithm}

\section{Examples} \label{sec_topo_ex}
This section first introduces a topological example of the graph merging algorithm based on \textit{Algorithm~\ref{graph_merging}}.
Then, the strong sign controllability of the output graph of \textit{Algorithm~\ref{graph_merging}} will be verified from a numerical approach.

\subsection{Graph merging algorithm for a sym-pactus}
Consider the sym-pactus $\mathcal{G}^{S}=(\mathcal{V}^{S},\mathcal{E}^{S})$ depicted in Fig.~\ref{network_ex_final} consisting of four disjoint components and three bridge graphs.
In \textit{decomposition process}, the sym-pactus $\mathcal{G}^{S}$ is decomposed into disjoint components and bridge graphs.
For simplicity, each union graph of the disjoint component and the bridge graph is defined as $\bar{\mathcal{G}}_{i}^{S}, i\in\{1,2,3,4\}$.
The set of state nodes for each $\bar{\mathcal{G}}_{i}^{S}$ and corresponding graph types are given by:
\newline \newline
$\mathcal{V}^{S}_{1}=\{ 1,2,3 \}$\,\,\,\,\,\,\,\,\,\,\,\,\,\,\,\,\,\,\,\,\,\,$\mathcal{V}^{S}_{12}=\{ 2,3,4,6 \}$,
\newline
$\mathcal{V}^{S}_{2}=\{ 4,5,6,7,8,9 \}$\,\,\, $\mathcal{V}^{S}_{23}=\{ 4,9,10,12 \}$,
\newline
$\mathcal{V}^{S}_{3}=\{ 10,11,12 \}$\,\,\,\,\,\,\,\,\,\,\, $\mathcal{V}^{S}_{34}=\{ 12,13 \}$,
\newline
$\mathcal{V}^{S}_{4}=\{ 13,14,15,16 \}$\,\,\,\,$\mathcal{V}^{S}=\mathcal{V}^{S}_{1}\cup\mathcal{V}^{S}_{2}\cup\mathcal{V}^{S}_{3}\cup\mathcal{V}^{S}_{4}$
\newline
\newline
$\bar{\mathcal{G}}_{1}^{S}=\mathcal{G}_{1}^{S}\cup\mathcal{G}_{12}^{S}$ : \textit{tree-type}
\newline
$\bar{\mathcal{G}}_{2}^{S}=\mathcal{G}_{2}^{S}\cup\mathcal{G}_{23}^{S}$ : \textit{cycle-type}
\newline
$\bar{\mathcal{G}}_{3}^{S}=\mathcal{G}_{3}^{S}\cup\mathcal{G}_{34}^{S}$ : \textit{cycle-type}
\newline
$\bar{\mathcal{G}}_{4}^{S}=\mathcal{G}_{4}^{S}$ : \textit{cycle-type}
\newline

In \textit{merging process}, for each $\bar{\mathcal{G}}_{i}=\bar{\mathcal{G}}_{i}^{S}\cup\bar{\mathcal{G}}_{i}^{I}, i\in\{1,2,3,4\}$, 
additional external input nodes are added at proper locations to guarantee the existence of dedicated nodes for all $\alpha\subseteq\bar{\mathcal{V}}_{i}^{S}$,
i.e, the SSC state nodes in $\mathcal{G}$ will expand sequentially. 
As a result, the output graph $\mathcal{G}^{out}$ in \textit{Algorithm~\ref{graph_merging}} satisfies \textit{Corollary~\ref{corollary_3}}, 
which is equivalent to satisfying \textit{Theorem~\ref{theorem_Tsatsomeros}}.
For intuition, the blue marked state nodes in Fig.~\ref{network_ex_step} are the SSC state nodes.

\begin{itemize}
\item Step 1.
\end{itemize}
Since $\bar{\mathcal{G}}_{1}^{S}=\mathcal{G}_{1}^{S}\cup\mathcal{G}_{12}^{S}$ is \textit{a \textit{tree-type}},
two additional external input nodes $u_1,u_2\in\mathcal{V}_{1}^{IE}$ need to be connected at node $1$ and $3$ to satisfy \textit{Lemma~\ref{lemma_tree}}.
After that, all state nodes of $\bar{\mathcal{G}}_{1}$ are SSC state nodes as shown in Fig.~\ref{network_ex_step}(a), 
i.e., $\bar{\mathcal{V}}_{1}^{SC}=\{1,2,3,4,6\}$.

\begin{itemize}
\item Step 2.
\end{itemize}
$\bar{\mathcal{G}}_{2}^{S}=\mathcal{G}_{2}^{S}\cup\mathcal{G}_{23}^{S}$ is \textit{a \textit{cycle-type}}.
It follows from \textit{Lemma~\ref{lemma_TCC}} that the component input nodes of $\bar{\mathcal{G}}_{2}$ are $2,3 \in \bar{\mathcal{V}}_{2}^{IC}$.
Hence, one external input node $u_3\in\mathcal{V}_{2}^{IE}$ needs to be connected to node $5$, $7$ or $9$ to satisfy \textit{Lemma~\ref{lemma_cycle}} (we choose node $7$).
After that, all state nodes of $\bar{\mathcal{G}}_{2}$ are SSC state nodes as shown in Fig.~\ref{network_ex_step}(b), 
i.e., $\bar{\mathcal{V}}_{2}^{SC}=\{4,5,6,7,8,9,10,12\}$.

\begin{itemize}
\item Step 3.
\end{itemize}
Since $\bar{\mathcal{G}}_{3}^{S}=\mathcal{G}_{3}^{S}\cup\mathcal{G}_{34}^{S}$ is \textit{a \textit{cycle-type}},
two input nodes are needed to satisfy \textit{Lemma~\ref{lemma_cycle}}.
According to \textit{Lemma~\ref{lemma_TCC}}, the component input nodes of $\bar{\mathcal{G}}_{3}$ are $4,9\in\bar{\mathcal{V}}_{3}^{IC}$.
Hence, $\bar{\mathcal{G}}_{3}$ already satisfies \textit{Lemma~\ref{lemma_cycle}} without additional external input nodes
since there exists an edge $(10,12)\in\bar{\mathcal{E}}_{3}$ between the nodes $10,12\in\mathcal{N}(\bar{\mathcal{V}}_{3}^{IC})$.
Therefore, all state nodes of $\bar{\mathcal{G}}_{3}$ are SSC state nodes as shown in Fig.~\ref{network_ex_step}(c), 
i.e., $\bar{\mathcal{V}}_{3}^{SC}=\{10,11,12,13\}$.

\begin{itemize}
\item Step 4.
\end{itemize}
$\bar{\mathcal{G}}_{4}^{S}=\mathcal{G}_{4}^{S}$ is \textit{a \textit{cycle-type}},
two input nodes are needed to satisfy \textit{Lemma~\ref{lemma_cycle}}.
According to \textit{Lemma~\ref{lemma_TCC}}, the component input node of $\bar{\mathcal{G}}_{4}$ is $12\in\bar{\mathcal{V}}_{4}^{IC}$.
Thus, one external input node $u_4\in\mathcal{V}_{4}^{IE}$ needs to be added at node $14$ or $16$ to satisfy \textit{Lemma~\ref{lemma_cycle}} (we choose node $14$).
After that, all state nodes of $\bar{\mathcal{G}}_{4}$ are SSC state nodes as shown in Fig.~\ref{network_ex_step}(d),
i.e., $\bar{\mathcal{V}}_{4}^{SC}=\{13,14,15,16\}$.

Finally, all the state nodes in $\mathcal{V}^{S}$ are the SSC state nodes, i.e., $\bar{\mathcal{V}}^{SC}=\mathcal{V}^{S}$, 
where $\bar{\mathcal{V}}^{SC}={\bigcup}^{4}_{i=1}\bar{\mathcal{V}}_{i}^{SC}$,
it follows from \textit{Corollary~\ref{corollary_3}} that the output graph $\mathcal{G}(T)^{out}$ shown in Fig.~\ref{network_ex_step}(e) is SSC.
Hence, the original graph (shown in Fig.~\ref{network_ex_final}) needs at least four external input nodes $u_1,u_2,u_3,u_4\in\mathcal{V}^{IE}$ 
to satisfy the condition of strong sign controllability.
The strong sign controllability of the output graph $\mathcal{G}(T)^{out}$ is confirmed from the following numerical verification subsection.

\begin{figure}[]
\centering
\begin{tikzpicture}[scale=0.5]

\node[] at (2.9,4.7) {\scriptsize$\mathcal{G}_1$};
\node[] at (7,6) {\scriptsize$\mathcal{G}_2$};
\node[] at (10.4,3) {\scriptsize$\mathcal{G}_3$};
\node[] at (14,4.5) {\scriptsize$\mathcal{G}_4$};

\node[] at (4.7,4.2) {\scriptsize$\mathcal{G}_{12}$};
\node[] at (9,4) {\scriptsize$\mathcal{G}_{23}$};
\node[] at (12.2,4) {\scriptsize$\mathcal{G}_{34}$};

\node[place, black] (node1) at (1,5) [label=below:\scriptsize$1$] {};
\node[place, black] (node2) at (3,4) [label=below:\scriptsize$2$] {};
\node[place, black] (node3) at (5,3) [label=below:\scriptsize$3$] {};
\node[place, black] (node4) at (7,4) [label=below:\scriptsize$4$] {};
\node[place, black] (node5) at (5,5) [label=right:\scriptsize$5$] {};
\node[place, black] (node6) at (5,7) [label=above:\scriptsize$6$] {};
\node[place, black] (node7) at (7,8) [label=above:\scriptsize$7$] {};
\node[place, black] (node8) at (9,7) [label=above:\scriptsize$8$] {};
\node[place, black] (node9) at (9,5) [label=left:\scriptsize$9$] {};
\node[place, black] (node10) at (9,3) [label=below:\scriptsize$10$] {};
\node[place, black] (node11) at (11,2) [label=below:\scriptsize$11$] {};
\node[place, black] (node12) at (11,4) [label=above:\scriptsize$12$] {};
\node[place, black] (node13) at (13,5) [label=above:\scriptsize$13$] {};
\node[place, black] (node14) at (15,6) [label=above:\scriptsize$14$] {};
\node[place, black] (node15) at (15,4) [label=below:\scriptsize$15$] {};
\node[place, black] (node16) at (13,3) [label=below:\scriptsize$16$] {};

\draw (node1) [line width=0.5pt] -- node [left] {} (node2);
\draw (node2) [line width=0.5pt] -- node [left] {} (node3);
\draw (node2) [line width=0.5pt] -- node [below] {} (node6);
\draw (node3) [,line width=0.5pt] -- node [below] {} (node4);
\draw (node4) [line width=0.5pt] -- node [left] {} (node5);
\draw (node5) [line width=0.5pt] -- node [left] {} (node6);
\draw (node6) [line width=0.5pt] -- node [left] {} (node7);
\draw (node7) [line width=0.5pt] -- node [left] {} (node8);
\draw (node8) [line width=0.5pt] -- node [left] {} (node9);
\draw (node4) [line width=0.5pt] -- node [left] {} (node9);
\draw (node4) [line width=0.5pt] -- node [below] {} (node10);
\draw (node10) [line width=0.5pt] -- node [left] {} (node11);
\draw (node10) [line width=0.5pt] -- node [left] {} (node12);
\draw (node11) [line width=0.5pt] -- node [left] {} (node12);
\draw (node9) [line width=0.5pt] -- node [below] {} (node12);
\draw (node12) [line width=0.5pt] -- node [below] {} (node13);
\draw (node13) [line width=0.5pt] -- node [left] {} (node14);
\draw (node14) [line width=0.5pt] -- node [left] {} (node15);
\draw (node15) [line width=0.5pt] -- node [left] {} (node16);
\draw (node13) [line width=0.5pt] -- node [left] {} (node16);
 

\end{tikzpicture}
\caption{A sym-pactus $\mathcal{G}^{S}={\bigcup}^{4}_{i=1}(\mathcal{G}_{i}^{S}\cup\mathcal{G}_{ij}^{S})$ for $|i-j|=1$.}
\label{network_ex_final}

\subfigure[Step 1 : $\bar{\mathcal{G}}_{1}=\mathcal{G}_{1}\cup\mathcal{G}_{12}$]{
\begin{tikzpicture}[scale=0.5]

\node[] at (2.9,4.7) {\scriptsize$\mathcal{G}_1$};
\node[] at (7,6) {\scriptsize$\mathcal{G}_2$};
\node[] at (10.4,3) {\scriptsize$\mathcal{G}_3$};
\node[] at (14,4.5) {\scriptsize$\mathcal{G}_4$};

\node[] at (4.7,4.2) {\scriptsize$\mathcal{G}_{12}$};
\node[] at (9,4) {\scriptsize$\mathcal{G}_{23}$};
\node[] at (12.2,4) {\scriptsize$\mathcal{G}_{34}$};

\node[place, blue] (node1) at (1,5) [label=below:\scriptsize$1$] {};
\node[place, blue] (node2) at (3,4) [label=below:\scriptsize$2$] {};
\node[place, blue] (node3) at (5,3) [label=below:\scriptsize$3$] {};
\node[place, blue] (node4) at (7,4) [label=below:\scriptsize$4$] {};
\node[place, black] (node5) at (5,5) [label=right:\scriptsize$5$] {};
\node[place, blue] (node6) at (5,7) [label=above:\scriptsize$6$] {};
\node[place, black] (node7) at (7,8) [label=above:\scriptsize$7$] {};
\node[place, black] (node8) at (9,7) [label=above:\scriptsize$8$] {};
\node[place, black] (node9) at (9,5) [label=left:\scriptsize$9$] {};
\node[place, black] (node10) at (9,3) [label=below:\scriptsize$10$] {};
\node[place, black] (node11) at (11,2) [label=below:\scriptsize$11$] {};
\node[place, black] (node12) at (11,4) [label=above:\scriptsize$12$] {};
\node[place, black] (node13) at (13,5) [label=above:\scriptsize$13$] {};
\node[place, black] (node14) at (15,6) [label=above:\scriptsize$14$] {};
\node[place, black] (node15) at (15,4) [label=below:\scriptsize$15$] {};
\node[place, black] (node16) at (13,3) [label=below:\scriptsize$16$] {};

\draw (node1) [line width=0.5pt] -- node [left] {} (node2);
\draw (node2) [line width=0.5pt] -- node [left] {} (node3);
\draw (node2) [line width=0.5pt] -- node [below] {} (node6);
\draw (node3) [line width=0.5pt] -- node [below] {} (node4);
\draw (node4) [line width=0.5pt] -- node [left] {} (node5);
\draw (node5) [line width=0.5pt] -- node [left] {} (node6);
\draw (node6) [line width=0.5pt] -- node [left] {} (node7);
\draw (node7) [line width=0.5pt] -- node [left] {} (node8);
\draw (node8) [line width=0.5pt] -- node [left] {} (node9);
\draw (node4) [line width=0.5pt] -- node [left] {} (node9);
\draw (node4) [line width=0.5pt] -- node [below] {} (node10);
\draw (node10) [line width=0.5pt] -- node [left] {} (node11);
\draw (node10) [line width=0.5pt] -- node [left] {} (node12);
\draw (node11) [line width=0.5pt] -- node [left] {} (node12);
\draw (node9) [line width=0.5pt] -- node [below] {} (node12);
\draw (node12) [line width=0.5pt] -- node [below] {} (node13);
\draw (node13) [line width=0.5pt] -- node [left] {} (node14);
\draw (node14) [line width=0.5pt] -- node [left] {} (node15);
\draw (node15) [line width=0.5pt] -- node [left] {} (node16);
\draw (node13) [line width=0.5pt] -- node [left] {} (node16);

\node[place, circle] (node17) at (-0.5,4) [label=below:\scriptsize$u_{1}$] {}; 
\node[place, circle] (node18) at (3.5,2) [label=below:\scriptsize$u_{2}$] {}; 
\draw (node17) [-latex, line width=0.5pt] -- node [right] {} (node1);
\draw (node18) [-latex, line width=0.5pt] -- node [right] {} (node3);
\end{tikzpicture}
}
\subfigure[Step 2 : $\bar{\mathcal{G}}_{2}=\mathcal{G}_{2}\cup\mathcal{G}_{23}$]{
\begin{tikzpicture}[scale=0.5]

\node[] at (7,6) {\scriptsize$\mathcal{G}_2$};
\node[] at (10.4,3) {\scriptsize$\mathcal{G}_3$};
\node[] at (14,4.5) {\scriptsize$\mathcal{G}_4$};

\node[] at (4.7,4.2) {\scriptsize$\mathcal{G}_{12}$};
\node[] at (9,4) {\scriptsize$\mathcal{G}_{23}$};
\node[] at (12.2,4) {\scriptsize$\mathcal{G}_{34}$};

\node[place, circle] (node2) at (3,4) [label=below:\scriptsize$2$] {};
\node[place, circle] (node3) at (5,3) [label=below:\scriptsize$3$] {};
\node[place, blue] (node4) at (7,4) [label=below:\scriptsize$4$] {};
\node[place, blue] (node5) at (5,5) [label=right:\scriptsize$5$] {};
\node[place, blue] (node6) at (5,7) [label=above:\scriptsize$6$] {};
\node[place, blue] (node7) at (7,8) [label=above:\scriptsize$7$] {};
\node[place, blue] (node8) at (9,7) [label=above:\scriptsize$8$] {};
\node[place, blue] (node9) at (9,5) [label=left:\scriptsize$9$] {};
\node[place, blue] (node10) at (9,3) [label=below:\scriptsize$10$] {};
\node[place, black] (node11) at (11,2) [label=below:\scriptsize$11$] {};
\node[place, blue] (node12) at (11,4) [label=above:\scriptsize$12$] {};
\node[place, black] (node13) at (13,5) [label=above:\scriptsize$13$] {};
\node[place, black] (node14) at (15,6) [label=above:\scriptsize$14$] {};
\node[place, black] (node15) at (15,4) [label=below:\scriptsize$15$] {};
\node[place, black] (node16) at (13,3) [label=below:\scriptsize$16$] {};

\draw (node2) [line width=0.5pt] -- node [below] {} (node6);
\draw (node3) [line width=0.5pt] -- node [below] {} (node4);
\draw (node4) [line width=0.5pt] -- node [left] {} (node5);
\draw (node5) [line width=0.5pt] -- node [left] {} (node6);
\draw (node6) [line width=0.5pt] -- node [left] {} (node7);
\draw (node7) [line width=0.5pt] -- node [left] {} (node8);
\draw (node8) [line width=0.5pt] -- node [left] {} (node9);
\draw (node4) [line width=0.5pt] -- node [left] {} (node9);
\draw (node4) [line width=0.5pt] -- node [below] {} (node10);
\draw (node10) [line width=0.5pt] -- node [left] {} (node11);
\draw (node10) [line width=0.5pt] -- node [left] {} (node12);
\draw (node11) [line width=0.5pt] -- node [left] {} (node12);
\draw (node9) [line width=0.5pt] -- node [below] {} (node12);
\draw (node12) [line width=0.5pt] -- node [below] {} (node13);
\draw (node13) [line width=0.5pt] -- node [left] {} (node14);
\draw (node14) [line width=0.5pt] -- node [left] {} (node15);
\draw (node15) [line width=0.5pt] -- node [left] {} (node16);
\draw (node13) [line width=0.5pt] -- node [left] {} (node16);

\node[place, circle] (node20) at (5.5,9) [label=below:\scriptsize$u_{3}$] {}; 
\draw (node20) [-latex, line width=0.5pt] -- node [right] {} (node7);
\end{tikzpicture}
}

\subfigure[Step 3 : $\bar{\mathcal{G}}_{3}=\mathcal{G}_{3}\cup\mathcal{G}_{34}$]{
\begin{tikzpicture}[scale=0.5]

\node[] at (10.4,3) {\scriptsize$\mathcal{G}_3$};
\node[] at (14,4.5) {\scriptsize$\mathcal{G}_4$};

\node[] at (9,4) {\scriptsize$\mathcal{G}_{23}$};
\node[] at (12.2,4) {\scriptsize$\mathcal{G}_{34}$};

\node[place, circle] (node4) at (7,4) [label=above:\scriptsize$4$] {};
\node[place, circle] (node9) at (9,5) [label=above:\scriptsize$9$] {};
\node[place, blue] (node10) at (9,3) [label=below:\scriptsize$10$] {};
\node[place, blue] (node11) at (11,2) [label=below:\scriptsize$11$] {};
\node[place, blue] (node12) at (11,4) [label=above:\scriptsize$12$] {};
\node[place, blue] (node13) at (13,5) [label=above:\scriptsize$13$] {};
\node[place, black] (node14) at (15,6) [label=above:\scriptsize$14$] {};
\node[place, black] (node15) at (15,4) [label=below:\scriptsize$15$] {};
\node[place, black] (node16) at (13,3) [label=below:\scriptsize$16$] {};

\draw (node4) [line width=0.5pt] -- node [below] {} (node10);
\draw (node10) [line width=0.5pt] -- node [left] {} (node11);
\draw (node10) [line width=0.5pt] -- node [left] {} (node12);
\draw (node11) [line width=0.5pt] -- node [left] {} (node12);
\draw (node9) [line width=0.5pt] -- node [below] {} (node12);
\draw (node12) [line width=0.5pt] -- node [below] {} (node13);
\draw (node13) [line width=0.5pt] -- node [left] {} (node14);
\draw (node14) [line width=0.5pt] -- node [left] {} (node15);
\draw (node15) [line width=0.5pt] -- node [left] {} (node16);
\draw (node13) [line width=0.5pt] -- node [left] {} (node16);

\end{tikzpicture}

}
\subfigure[Step 4 : $\bar{\mathcal{G}}_{4}=\mathcal{G}_{4}$]{
\begin{tikzpicture}[scale=0.5]

\node[] at (14,4.5) {\scriptsize$\mathcal{G}_4$};

\node[] at (12.2,4) {\scriptsize$\mathcal{G}_{34}$};

\node[place, circle] (node12) at (11,4) [label=above:\scriptsize$12$] {};
\node[place, blue] (node13) at (13,5) [label=above:\scriptsize$13$] {};
\node[place, blue] (node14) at (15,6) [label=above:\scriptsize$14$] {};
\node[place, blue] (node15) at (15,4) [label=below:\scriptsize$15$] {};
\node[place, blue] (node16) at (13,3) [label=below:\scriptsize$16$] {};

\draw (node12) [line width=0.5pt] -- node [below] {} (node13);
\draw (node13) [line width=0.5pt] -- node [left] {} (node14);
\draw (node14) [line width=0.5pt] -- node [left] {} (node15);
\draw (node15) [line width=0.5pt] -- node [left] {} (node16);
\draw (node13) [line width=0.5pt] -- node [left] {} (node16);

\node[place, circle] (node19) at (13.5,7) [label=below:\scriptsize$u_{4}$] {}; 
\draw (node19) [-latex, line width=0.5pt] -- node [right] {} (node14);
\end{tikzpicture}
}

\subfigure[Final : $\mathcal{G}^{out}=\bar{\mathcal{G}}_{1}\cup\bar{\mathcal{G}}_{2}\cup\bar{\mathcal{G}}_{3}\cup\bar{\mathcal{G}}_{4}$]{
\begin{tikzpicture}[scale=0.5]

\node[] at (2.9,4.7) {\scriptsize$\mathcal{G}_1$};
\node[] at (7,6) {\scriptsize$\mathcal{G}_2$};
\node[] at (10.4,3) {\scriptsize$\mathcal{G}_3$};
\node[] at (14,4.5) {\scriptsize$\mathcal{G}_4$};

\node[] at (4.7,4.2) {\scriptsize$\mathcal{G}_{12}$};
\node[] at (9,4) {\scriptsize$\mathcal{G}_{23}$};
\node[] at (12.2,4) {\scriptsize$\mathcal{G}_{34}$};

\node[place, blue] (node1) at (1,5) [label=below:\scriptsize$1$] {};
\node[place, blue] (node2) at (3,4) [label=below:\scriptsize$2$] {};
\node[place, blue] (node3) at (5,3) [label=below:\scriptsize$3$] {};
\node[place, blue] (node4) at (7,4) [label=below:\scriptsize$4$] {};
\node[place, blue] (node5) at (5,5) [label=right:\scriptsize$5$] {};
\node[place, blue] (node6) at (5,7) [label=above:\scriptsize$6$] {};
\node[place, blue] (node7) at (7,8) [label=above:\scriptsize$7$] {};
\node[place, blue] (node8) at (9,7) [label=above:\scriptsize$8$] {};
\node[place, blue] (node9) at (9,5) [label=left:\scriptsize$9$] {};
\node[place, blue] (node10) at (9,3) [label=below:\scriptsize$10$] {};
\node[place, blue] (node11) at (11,2) [label=below:\scriptsize$11$] {};
\node[place, blue] (node12) at (11,4) [label=above:\scriptsize$12$] {};
\node[place, blue] (node13) at (13,5) [label=above:\scriptsize$13$] {};
\node[place, blue] (node14) at (15,6) [label=above:\scriptsize$14$] {};
\node[place, blue] (node15) at (15,4) [label=below:\scriptsize$15$] {};
\node[place, blue] (node16) at (13,3) [label=below:\scriptsize$16$] {};

\draw (node1) [line width=0.5pt] -- node [left] {} (node2);
\draw (node2) [line width=0.5pt] -- node [left] {} (node3);
\draw (node2) [line width=0.5pt] -- node [below] {} (node6);
\draw (node3) [,line width=0.5pt] -- node [below] {} (node4);
\draw (node4) [line width=0.5pt] -- node [left] {} (node5);
\draw (node5) [line width=0.5pt] -- node [left] {} (node6);
\draw (node6) [line width=0.5pt] -- node [left] {} (node7);
\draw (node7) [line width=0.5pt] -- node [left] {} (node8);
\draw (node8) [line width=0.5pt] -- node [left] {} (node9);
\draw (node4) [line width=0.5pt] -- node [left] {} (node9);
\draw (node4) [line width=0.5pt] -- node [below] {} (node10);
\draw (node10) [line width=0.5pt] -- node [left] {} (node11);
\draw (node10) [line width=0.5pt] -- node [left] {} (node12);
\draw (node11) [line width=0.5pt] -- node [left] {} (node12);
\draw (node9) [line width=0.5pt] -- node [below] {} (node12);
\draw (node12) [line width=0.5pt] -- node [below] {} (node13);
\draw (node13) [line width=0.5pt] -- node [left] {} (node14);
\draw (node14) [line width=0.5pt] -- node [left] {} (node15);
\draw (node15) [line width=0.5pt] -- node [left] {} (node16);
\draw (node13) [line width=0.5pt] -- node [left] {} (node16);
 
\node[place, circle] (node17) at (-0.5,4) [label=below:\scriptsize$u_{1}$] {}; 
\node[place, circle] (node18) at (3.5,2) [label=below:\scriptsize$u_{2}$] {}; 
\node[place, circle] (node20) at (5.5,9) [label=below:\scriptsize$u_{3}$] {}; 
\node[place, circle] (node19) at (13.5,7) [label=below:\scriptsize$u_{4}$] {}; 
\draw (node17) [-latex, line width=0.5pt] -- node [right] {} (node1); 
\draw (node18) [-latex, line width=0.5pt] -- node [right] {} (node3);
\draw (node20) [-latex, line width=0.5pt] -- node [right] {} (node7);
\draw (node19) [-latex, line width=0.5pt] -- node [right] {} (node14);
\end{tikzpicture}
}
\caption{Topological example of \textit{Algorithm~\ref{graph_merging}} for the sym-pactus $\mathcal{G}^{S}$ shown in Fig.~\ref{network_ex_final}. The blue marked nodes are the SSC state nodes. }
\label{network_ex_step}
\end{figure}

\subsection{Numerical verification} 
For a numerical verification of the strong sign controllability of a network, 
it is necessary to verify that the rank of controllability Gramian matrix of a given network has full rank for all choices of weights under the sign-fixed condition.
In this subsection, we verify that the output graph shown in Fig.~\ref{network_ex_step}.(e) has full rank for all choices of weights.
The controllability Gramian matrix corresponding to $L$ and $B$ is: 
\begin{align}
\mathcal{C}_{L} = [B, LB, L^2B, ... ,L^{14}B, L^{15}B]
\end{align}
where $L\in\Bbb{R}^{16\times 16}$ is the Laplacian matrix and $B\in\Bbb{R}^{16\times 4}$ is the external input matrix of the given network shown in Fig.~\ref{network_ex_step}.(e).
To check the rank of $\mathcal{C}_{L}$, it needs to be processed using the well-known Gauss elimination method with one additional condition during row operation.
There are three types of elementary row operations under the \textit{pivot}\footnote{If a matrix is row-echelon form, then the first nonzero entry of each row is called a pivot.} 
condition:
\begin{condition} \label{condition_pivot}
If \textit{pivot} element consists of two-or-more-terms, set that \textit{pivot} to zero.
\begin{itemize}
\item Swapping two rows
\item Multiplying a row by a non-zero \textit{pivot} element
\item Adding a multiple of one row to another row
\end{itemize}
\end{condition}

In \textit{Condition~\ref{condition_pivot}}, we assume that all equations of elements of $\mathcal{C}_{L}$ are simplified.
Let us suppose that a pivot element consists of only one-term, e.g., $a_{12}$ or $a_{12}a_{23}$, obviously, it cannot be zero. 
However, if a pivot element consists of two-or-more-terms, it has a possibility to be zero by applying a transposition. 
Therefore, the above \textit{Condition~\ref{condition_pivot}} means that if a pivot has a possibility to be zero, we assume the value of that pivot is zero.
This condition prevents \textit{generic property} that a network being uncontrollable for specific choices of weights.
As a result, the row rank of $\mathcal{C}_{L}$ for the given network shown in Fig.~\ref{network_ex_step}(e) was 16. 
As a simple example of \textit{Condition~\ref{condition_pivot}}, let us consider a sym-path state graph with one external input node, 
then $L$, $B$ and the corresponding $\mathcal{C}_{L}$ are following.
\begin{align} \label{ex_Laplacian}
L = \left[\begin{array}{ccccc}
               -\Sigma_1  & a_{12} & 0    \\
              a_{12} &  -\Sigma_2  & a_{23}  \\
               0 & a_{23} &  -\Sigma_3     \end{array} \right],
B = \left[\begin{array}{ccccc}
               0  \\
              1  \\
              0 \end{array} \right]
\end{align}
\begin{align} \label{ex_gramian}
\mathcal{C}_{L} = \left[\begin{array}{ccccc}
              0 & a_{12} &-a_{12}(\Sigma_1  + \Sigma_2)       \\
              1 & -\Sigma_2 & a_{12}^2 + (\Sigma_2)^2 + a_{23}^2\\
              0 & a_{23} &-a_{23}(\Sigma_2 + \Sigma_3)    \end{array} \right]
\end{align}
where $\Sigma_1=a_{12}$, $\Sigma_2=a_{12}+a_{23}$, and $\Sigma_3=a_{23}$. 
After the Gauss elimination process with \textit{Condition~\ref{condition_pivot}}, the reduced row echelon form (RREF) of (\ref{ex_gramian}) is given as $\bar{\mathcal{C}}_{L}$:
\begin{align} \label{ex_gramian_RREF}
\bar{\mathcal{C}}_{L} = \left[\begin{array}{ccccc}
               1 & \Sigma_2 & a_{12}^2 + (\Sigma_2)^2 + a_{23}^2\\
              0 & 1 & -(\Sigma_1 + \Sigma_2)   \\
              0 & 0 & \boldsymbol{\Sigma_1 - \Sigma_3}   \end{array} \right]
\end{align}
In matrix \eqref{ex_gramian_RREF}, the pivots of the first row $\bar{\mathcal{C}}_{L}(1,1)$ and the second row $\bar{\mathcal{C}}_{L}(2,2)$ are non-zero because those are one-term.
However, the pivot of the third row $\bar{\mathcal{C}}_{L}(3,3)$ has two-or-more-terms as $\Sigma_1-\Sigma_3=a_{12}-a_{23}$. 
According to \textit{Condition~\ref{condition_pivot}}, the pivot $\bar{\mathcal{C}}_{L}(3,3)$ is set to zero
since it can be zero when $a_{12}=a_{23}$.
It means that a specific choice of weight makes the given network uncontrollable.
Then, the rank of the output matrix of the Gauss elimination process is 2, which is not full rank. 
Therefore, we can conclude that the given network $T=[L,B]$ is structurally controllable, but not strongly sign controllable (or, not strongly structurally controllable).

\section{Conclusion} \label{sec_conc}
This paper presented the conditions that determine the strong sign controllability for undirected signed networks of diffusively-coupled agents.
First, we established the concepts of dedicated and sharing nodes for the strong sign controllability, and the condition of strong sign controllability of a state node unit is presented.
Then, we interpret the existing results on the controllability condition of basic components (path, tree, cycle) from the strong sign controllability point of view based on the dedicated and sharing nodes.
To extend the results into a larger graph, e.g., sym-pactus, we propose an algorithm with a sufficient number of external input nodes.
Furthermore, we interpreted the property of external input nodes from the dedicated node perspective, 
and applied the notion of component input node to the problem of reducing the number of external input nodes.
In particular, we established an algorithm with a polynomial complexity for the strong sign controllability of a sym-pactus to find the minimum number of external input nodes.
As a dual problem, our analysis of the strong sign controllability can be extended to strong sign observability. 
For example, the strong sign observability can be applied to the problems of evaluating the minimal number of measurements to estimate states in a diffusive network.
Moreover, from the perspective of a state unit, e.g., the SSC state nodes presented in this paper, 
the concepts of strong sign controllability and observability can be extended to a Kalman decomposition \cite{hovelaque1997kalman} for diffusively-coupled signed networks.
These extensions will be studied in our future work.



\bibliographystyle{IEEEtran}
\bibliography{references}
\end{document}